\newtheorem{theorem}{Theorem}
\newtheorem{lemma}{Lemma}
\newtheorem{proposition}{Proposition}
\newtheorem{corollary}{Corollary}
\newtheorem{definition}{Definition}
\newtheorem{remark}{Remark}
\theoremstyle{proof}
\newtheorem{exmp}{Example}
\begin{document}
	\title{Optimal Downlink-Uplink Scheduling of Wireless Networked Control for Industrial IoT}
\author{Kang Huang, Wanchun Liu$^\dagger$, Yonghui Li, Branka Vucetic, and Andrey Savkin\\
}	
	
\maketitle

\begin{abstract}
\let\thefootnote\relax\footnote{
K. Huang, W. Liu, Y. Li and B. Vucetic are with School of Electrical and Information Engineering, The University of 
Sydney, Australia. 
Emails:	\{kang.huang,\ wanchun.liu,\ yonghui.li,\ branka.vucetic\}@sydney.edu.au. 
A. Savkin is with School of Electrical Engineering and Telecommunications,  University of New South Wales, Australia.
Email: a.savkin@unsw.edu.au. (\emph{W. Liu is the corresponding author.})\\	
		Part of the paper has been accepted by Proc. IEEE Globecom 2019~\cite{GC}.
}
This paper considers a wireless networked control system (WNCS) consisting of a dynamic system to be controlled (i.e., a plant), a sensor, an actuator and a remote controller for mission-critical Industrial Internet of Things (IIoT) applications.
A WNCS has two types of wireless transmissions, i.e., the sensor's measurement transmission to the controller and the controller's command transmission to the actuator.
In the literature of WNCSs, the controllers are commonly assumed to work in a full-duplex mode by default, i.e., being able to simultaneously receive the sensor's information and transmit its own command to the actuator.
In this work, we consider a practical half-duplex controller, which introduces a novel \emph{transmission-scheduling problem} for WNCSs. 
A frequent scheduling of sensor's transmission results in a better estimation of plant states at the controller and thus a higher quality of control command, but it leads to a less frequent/timely control of the plant.
Therefore, considering the overall control performance of the plant in terms of its average cost function, there exists a fundamental tradeoff between the sensor's and the controller's transmissions.
We formulate a
new problem to optimize the transmission-scheduling policy 
for minimizing the long-term average cost function. 
We derive the necessary and sufficient condition of the existence
of a stationary and deterministic optimal policy that results in a bounded average cost in terms of the transmission reliabilities of the sensor-to-controller and controller-to-actuator channels. Also, we derive an easy-to-compute suboptimal policy, which notably reduces the average cost of the plant compared to a naive alternative-scheduling policy.
\end{abstract}
\begin{IEEEkeywords}
Wireless communication, wireless control, transmission scheduling, performance analysis, IIoT.
\end{IEEEkeywords}

\section{Introduction}
Driven by recent development of mission-critical Industrial Internet of Things (IIoT) applications~\cite{ParkSurvey,perera2015emerging,wollschlaeger2017future} and significant advances in wireless communications, networking, computing, sensing and control~\cite{Schulz,bello2014intelligent,zhang2017analysis,shi2016edge}, wireless networked control systems (WNCSs) have recently emerged as a promising technology to enable reliable and remote control of industrial control systems. 
They have a wide range of applications in factory automation, process automation, smart grid, tactile Internet and  intelligent transportation systems~\cite{lyu2018control,gil2013dealing,wang2014oldstyle,smartgrid,transportation}. 
Essentially, a WNCS is a spatially
distributed control system consisting of a plant with dynamic states, a set of sensors, a remote controller, and a set of actuators.

A WNCS has two types of wireless transmissions, i.e., the sensor's measurement transmission to the controller and the controller's command transmission to the actuator. 
The packets carrying plant-state information and control commands can be lost, delayed or corrupted during their transmissions.
Most of existing research in WNCS adopted a \emph{separate design approach}, i.e., either focusing on remote plant-state estimation or remote plant-state control through wireless channels.
In \cite{schenato2008optimal} and \cite{yang2015deterministic}, the optimal policies of remote plant-state estimation with a single and multiple  sensors' measurements were proposed, respectively.
Some advanced remote plant-state control methods were investigated to overcome the effects of transmission delay \cite{liu2007networked} and detection errors \cite{demirel2017trade,mishra2018stabilizing}.

The fundamental \emph{co-design} problem of a WNCS in terms of the optimal remote estimation and control were tackled in \cite{schenato2007foundations}. 
Specifically, the controller was ideally assumed to work in a \emph{full-duplex} (FD) mode that can simultaneously receive the sensor's packet and transmit its control packet by default. The scheduling of sensor's and controller's transmissions has rarely been considered in the area of WNCSs, while transmission scheduling is actually an important issue for practical wireless communication systems~\cite{mag1,mag2,Yang}.
Moreover, although an FD system can improve the spectrum efficiency, it faces challenges of balancing between the performance of self-interference cancellation, device cost and power consumption, and may not be feasible in practical systems~\cite{inband}.


In this paper, we focus on the design of a WNCS using a practical \emph{half-duplex} (HD) controller,
which naturally introduces a fundamental transmission-scheduling problem, i.e., to schedule the sensor's measurement transmission to the controller or the controller's command transmission to the actuator.
A frequent schedule of the sensor's transmission results in a better estimation of plant states and thus a higher quality of the control command. On the other side, a frequent schedule of controller's transmission leads to a more timely plant control.
Thus, considering the overall control performance of plant's states, e.g., the average cost function of the plant, there exists a fundamental tradeoff between the sensor's and the controller's transmission.
We propose a tractable framework to model this problem and enable the optimal design of the WNCS.
The main contributions of the paper are summarized as follows:
\begin{itemize}
\item We propose a WNCS with an HD controller, where the controller schedules the sensor's measurement transmission and its own control-command transmission depending on both the estimation
quality of current plant states and the current cost function of the plant.
\item We formulate a problem to optimally design the transmission-scheduling policy to optimize the long-term control performance of the WNCS in terms of the average cost function for both the one-step and $v$-step controllable plants. 
As the long-term average cost of the plant may not be bounded with high transmission-error probabilities leading to an unstable situation, in the static channel scenario, we derive a necessary and sufficient condition in terms of 
the transmission reliabilities of the sensor-controller and controller-actuator channels and the plant
parameters to ensure the existence of an optimal policy that stabilizes the plant.
In the fading channel scenario, we derive a necessary condition and a sufficient condition in terms of the uplink and downlink channel qualities, under which the optimal transmission scheduling policy exits.
\item We also derive a suboptimal policy with a low computation complexity. The numerical results
show that the suboptimal policy provides an average cost close to the optimal policy, and significantly outperforms the benchmark policy, i.e., scheduling the sensor's and the controller's transmissions alternatively.
\end{itemize}

{The remainder of the paper is organized as follows: In Section~\ref{sec:sys}, we introduce a WNCS with an HD controller. In Section~\ref{sec:ana}, we analyze the estimation-error covariance and the plant-state covariance of the WNCS and formulate an uplink-downlink transmission-scheduling problem.
In Sections~\ref{sec:one} and~\ref{sec:v}, we analyze and solve the transmission-scheduling problem for one-step and multi-step controllable WNCSs, respectively, in the static channel scenario. 
In Section~\ref{sec:fading}, we extend the design to the fading channel scenario.
Section~\ref{sec:num} numerically evaluates the performance
of WNCSs with different transmission-scheduling policies. Finally, Section~\ref{sec:con}
concludes the paper.}

Notations: $\mathbbm{1}(\cdot)$ is the indicator function. $\rho(\mathbf{A})$ denotes the spectral radius of the square matrix~$\mathbf{A}$. $(\cdot)^\top$ is the matrix-transpose operator. $\mathbb{N}$ is the set of positive integers.

\section{System Model}\label{sec:sys}
We consider a discrete-time
WNCS consisting of a dynamic plant with multiple states,
a wireless sensor, an actuator, a remote controller, as illustrated in Fig.~\ref{system_figure}.
In general, the sensor measures the states of the plant and sends the measurements to the remote controller through a wireless uplink (i.e., \emph{sensor-controller}) channel.
The controller generates control commands based on the sensor's feedback and sends the commands to the actuator through a wireless downlink (i.e., \emph{controller-actuator}) channel.
The actuator controls the plant using the received control commands.
\begin{figure}
\centering
\includegraphics[scale=0.8]{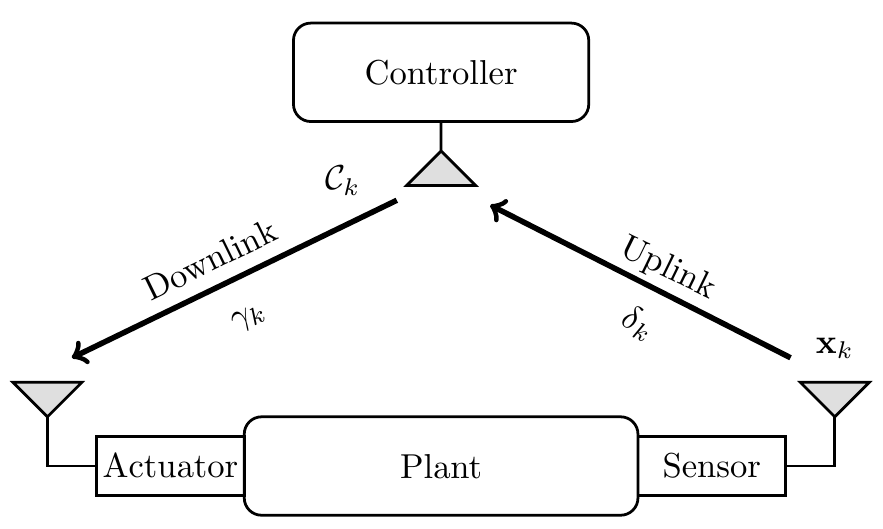}
\vspace{-0.3cm}
\caption{The system architecture.}
\label{system_figure}
\vspace{-0.5cm}
\end{figure}

\subsection{Dynamic Plant}\label{sec:plant}
The plant is a linear time invariant (LTI) discrete-time system modeled as~\cite{demirel2017trade,mishra2018stabilizing,Gatsis}
\begin{equation} \label{system_model}
\mathbf{x}_{k+1} = \mathbf{A}\mathbf{x}_{k} + \mathbf{B}\mathbf{u}_{k} + \mathbf{w}_{k}, \forall k
\end{equation}
where $\mathbf{x}_{k} \in \mathbb{R}^n$ is the plant-state vector at time $k$, $\mathbf{u}_k \in \mathbb{R}^m$ is the control input applied by the actuator and $\mathbf{w}_{k} \in \mathbb{R}^n$ is the plant disturbance independent of $\mathbf{x}_k$ and is a discrete-time zero-mean Gaussian white noise process with the covariance matrix $\mathbf{R} \in \mathbb{R}^{n \times n}$.
$\mathbf{A} \in \mathbb{R}^{n \times n}$ and $\mathbf{B} \in \mathbb{R}^{n \times m}$ are the system-transition matrix and the control-input matrix, respectively, which are constant.
The discrete time step of the system~\eqref{system_model} is $T_0$, i.e., the plant states keep constant during a time slot of $T_0$ and changes slot-by-slot.

We assume that the plant is an unstable system~\cite{schenato2008optimal,demirel2017trade}, i.e., the \emph{spectral radius} of $\mathbf{A}$, $\rho(\mathbf{A})$, is larger than one. In other words, the plant-state vector $\mathbf{x}_{k}$ grows unbounded without the control input, i.e., $\mathbf{u}_k=\mathbf{ 0},\forall k$.

We consider the long-term \emph{average (quadratic) cost} of the dynamic plant defined as (see e.g. \cite{demirel2017trade,schenato2007foundations})
\begin{equation}  \label{metric}
J = \lim_{K \to \infty}\frac{1}{K}\sum_{k=0}^{K-1} \mathbb{E}\left[\mathbf{x}_k^\top\mathbf{Q}\mathbf{x}_k \right]= \lim_{K \to \infty}\frac{1}{K}\sum_{k=0}^{K-1}\text{Tr}\left(\mathbf{Q}\mathbf{P}_k\right),
\end{equation}
where $\mathbf{Q}$ is a symmetric positive semidefinite matrix, and $\mathbf{P}_k$ is the plant-state covariance defined~as
\begin{equation}
\mathbf{P}_k \triangleq \mathbb{E}\left[ \mathbf{x}_k \mathbf{x}^\top_k \right].
\end{equation}

\begin{definition}[Closed-loop Stability~\cite{demirel2017trade,schenato2007foundations}] \label{def}
	\normalfont
The plant \eqref{system_model} is stabilized by the sequence $\{\mathbf{u}_k\}$, if the average cost function~\eqref{metric} is bounded.
\end{definition}

\subsection{HD Operation of the Controller}
We assume that the controller is an HD device, and thus it can either receive the sensor's measurement or transmit its control command to the actuator at a time.
Let $a_k \in \{1,2\}$ be the controller's transmission-scheduling variable in time slot $k$. The sensor's or the controller's transmission is scheduled in time slot~$k$ if $a_k=1$ or $2$, respectively.

The sensor measures the plant states at the beginning of each time slot. The measurement is assumed to be perfect~\cite{demirel2017trade,mishra2018stabilizing,Gatsis}. We use $\delta_k$ to indicate the successfulness of the sensor's transmission in time slot $k$. Thus, $\delta_k=1$ if the sensor is scheduled to send a packet carrying its measurement to the controller in time slot $k$ (i.e., $a_k=1$) and the transmission is successful, and $\delta_k=0$ otherwise.

The controller generates a control-command-carrying packet at the beginning of each time slot. Similarly, we use $\gamma_k$ to indicate the successfulness of the controller's transmission in time slot $k$. Thus, $\gamma_k=1$ if the controller is scheduled to send the control packet to the actuator in time slot $k$ (i.e., $a_k=2$) and the transmission is successful, and $\gamma_k=0$ otherwise.
We also assume that the controller has a perfect feedback from the actuator indicating the successfulness of the packet detection~\cite{schenato2007foundations}. \emph{Thus, the controller knows whether its control command will be applied or not.}
We assume that the packets in both the sensor-to-controller and controller-to-actuator channels have the same packet length and is less than $T_0$~\cite{schenato2008optimal,schenato2007foundations}.

\subsection{Wireless Channel}
{We consider both the static channel and the fading channel scenarios of the WNCS. The static channel scenario is for the IIoT applications with low mobilities, e.g., process control of chemical and oil-refinery plant, while the fading channel scenario is for high mobility applications, e.g., motion control of automated guided vehicles in warehouses.

For the static channel scenario, we assume that the \emph{packet-error probabilities} of the uplink (sensor-controller) and downlink (controller-actuator) channels are $p_s$ and $p_c$, respectively, which do not change with time, where $p_s,p_c \in \left(0,1\right)$.

For the fading channel scenario, we adopt a practical finite-state Markov channel model, which captures the 
inherent property of practical fading channels for which the channel states change with memories~\cite{Parastoo}. It is assumed that the uplink channel and the downlink channel have $B_s$ and $B_c$ states, respectively, and the packet loss probability of the $i$th channel state of the uplink channel and the $j$th channel state of the downlink channel are $\omega_i$ and $\xi_j$, respectively. The matrices of the channel state transition probabilities of the uplink and downlink channels are given as
\begin{equation} \label{channel_transition1}
\mathbf{{D}}_s \triangleq \begin{bmatrix}
{d}^s_{1,1}  & \cdots &	{d}^s_{B_s,1}\\
\vdots  & \ddots &	\vdots\\
{d}^s_{1,B_s}  & \cdots &	{d}^s_{B_s,B_s}
\end{bmatrix},
\end{equation}
and
\begin{equation} \label{channel_transition2}
\mathbf{{D}}_c \triangleq \begin{bmatrix}
{d}^c_{1,1}  & \cdots &	{d}^c_{B_c,1}\\
\vdots  & \ddots &	\vdots\\
{d}^c_{1,B_c}  & \cdots &	{d}^c_{B_c,B_c}
\end{bmatrix},
\end{equation}
respectively. 
The packet-error probabilities of the uplink and downlink channels at time $k$ are $p_{s,k}$ and $p_{c,k}$, respectively, where $p_{s,k}\in\{\omega_1,\cdots,\omega_{B_s}\}$ and $p_{c,k}\in\{\xi_1,\cdots,\xi_{B_c}\}$.}

\subsection{Optimal Plant-State Estimation}
At the beginning of time slot $(k+1)$,
before generating a proper control command, the controller needs to estimate the current states of the plant, $\mathbf{x}_{k+1}$, using the previously received sensor's measurement and also the implemented control input based on the  dynamic plant model~\eqref{system_model}. The optimal plant-state estimator is given as~\cite{schenato2008optimal}
\begin{equation} \label{estimation}
\hat{\mathbf{x}}_{k+1} = \begin{cases}
\mathbf{A}\mathbf{x}_{k} + \mathbf{B}\mathbf{u}_{k}, &a_k =1,\delta_{k}=1,\\
\mathbf{A}\hat{\mathbf{x}}_{k} + \mathbf{B}\mathbf{u}_{k}, &\text{otherwise}.
\end{cases}
\end{equation}

%

\subsection{$v$-Step Predictive Plant-State Control}


%

As the transmission between the controller and the actuator is unreliable, the actuator may not successfully receive the controller's packet containing the current control command. To provide robustness against packet failures, we consider a predictive control approach~\cite{gupta2006receding}. In general, the controller sends both the current command and the predicted future commands to the actuator at each time. 
If the current command-carrying packet is lost, the actuator will apply the previously received command that was predicted for the current time slot. The details of the predictive control method is given below.

The controller adopts a conventional linear predictive control law~\cite{demirel2017trade}, which generates a sequence of $v$ control commands including one current command and $(v-1)$ predicted future commands in each time slot $k$ as
\begin{equation} \label{C}
\mathcal{C}_k = \big[\mathbf{K}\hat{\mathbf{x}}_k,\underbrace{\mathbf{K}(\mathbf{A}+\mathbf{B}\mathbf{K})\hat{\mathbf{x}}_k,\cdots,\mathbf{K}(\mathbf{A}+\mathbf{B}\mathbf{K})^{v-1}\hat{\mathbf{x}}_k}_{(v-1) \text{ predicted control commands}}\big],
\end{equation}
where the constant $v$ is the length of predictive control, and the constant $\mathbf{K}\in \mathbb{R}^{m\times n}$ is the controller gain, which satisfies the condition that\footnote{If \eqref{rho_ABK} is not satisfied, the plant \eqref{system_model} can never be stabilized even if the uplink and downlink transmissions are always perfect see e.g., \cite{schenato2007foundations,GatsisOppor}.} 
\begin{equation}\label{rho_ABK}
\rho(\mathbf{ A+BK})<1.
\end{equation}
If time slot $k$ is scheduled for the controller's transmission, the controller sends a packet containing $v$ control commands $\mathcal{C}_k$ to the actuator.
Note that in most communication protocols, the minimum packet length is longer than the time duration required for transmitting a single control command~\cite{gupta2006receding}, and thus it is wise to send multiple commands in the one packet without increasing the packet length.

The actuator maintains a command buffer of length $v$, $\mathcal{U}_k\triangleq \left[\mathbf{u}^0_k,\mathbf{u}^1_k,\cdots,\mathbf{u}^{v-1}_k\right]$.
If the current controller's packet is successfully received, the actuator resets the buffer with the received command sequence, otherwise, the buffer shifts one step forward, i.e.,
\begin{equation} \label{U}
\mathcal{U}_k
=
\begin{cases}
\mathcal{C}_k, & a_k=2,\gamma_k=1,\\
\left[\mathbf{u}^1_{k-1},\mathbf{u}^2_{k-1},\cdots,\mathbf{u}^{v-1}_{k-1},\mathbf{0}\right], & \text{otherwise.}
\end{cases}
\end{equation}
The actuator always applies the first command in the buffer to the plant. Thus, the actuator's control input in time slot $k$~is 
\begin{equation} \label{u}
\mathbf{u}_k \triangleq \mathbf{u}^0_k.
\end{equation}

To indicate the number of passed time slots from the last successfully received control packet,
we define the \emph{control-quality indicator} of the plant in time slot $k$ as
\begin{equation}\label{eta}
\eta_{k} = \begin{cases}
1, &a_k=2, \gamma_{k} = 1,\\
\eta_{k-1} +1, & \text{otherwise}.
\end{cases}
\end{equation}
Specifically, $\eta_{k-1}$ is the number of the time slots passed from the most recent controller's successful transmission to the current time slot $k$.

From \eqref{C}, \eqref{U}, \eqref{u} and \eqref{eta}, the control input can be rewritten~as 
\begin{equation} \label{control_command}
\mathbf{u}_{k} = \begin{cases}
\mathbf{K}(\mathbf{A}+\mathbf{B}\mathbf{K})^{\eta_k-1}\hat{\mathbf{x}}_{k+1-\eta_k}, &\mbox{ if } \eta_k \leq v,\\
\mathbf{0}, &\mbox{ if } \eta_k > v.
\end{cases}
\end{equation}

To better explain the intuition behind the predictive control method \eqref{C}, \eqref{U} and \eqref{u}, we give an example below.

\begin{exmp}	
Assume that a sequence of the controller's commands is successfully received in time slot $k$ and the actuator will not receive any further commands in the following $v-1$ time slots. Consider an ideal case that the estimation is accurate in time slot $k$, i.e., $\hat{\mathbf{x}}_k={\mathbf{x}}_k$, and the plant disturbance, $\mathbf{w}_k=\mathbf{0},\forall k$.
Taking \eqref{control_command} into \eqref{system_model}, the plant-state vector at $(k+j), \forall j\leq v$ can be derived as
\begin{equation}\label{example}
\mathbf{x}_{k+j}=(\mathbf{ A+BK})^j \mathbf{x}_{k}.
\end{equation}
Therefore, if the controller gain $\mathbf{K}$ is chosen properly and makes the spectral radius of $(\mathbf{A+BK})$ less than one, each state in $\mathbf{x}_{k}$ can approach zero gradually within the $v$ steps even without receiving any new control packets.
\end{exmp}

In this work, we mainly focus on two types of plants applying the predictive control method as follows.

\emph{\underline{Case 1}:}
The controller gain $\mathbf{K}$ satisfies the condition that
\begin{equation}\label{a+bk}
\mathbf{A}+\mathbf{B}\mathbf{K}=\mathbf{0}.
\end{equation}
This case is named as the \emph{one-step controllable} case~\cite{OREILLY1981363}, since 
%
once a control packet is received successfully, the plant-state vector $\mathbf{x}_{k}$ can be driven to zero in one step in the above mentioned ideal setting, i.e., 
$\mathbf{x}_{k+1} =\mathbf{0} \mathbf{x}_k=\mathbf{0}$ in \eqref{example}.
%
By taking \eqref{a+bk} into \eqref{C}, the $(v-1)$ predicted commands are all $\mathbf{0}$, thus the controller only needs to send the current control command to the actuator \emph{without any prediction}, and the length of $\mathcal{U}$ and $\mathcal{C}$, $v$, is equal to one.

\emph{\underline{Case 2}:} 
The controller gain $\mathbf{K}$ satisfies the condition that~\cite{OREILLY1981363}
\begin{equation}\label{a+bk2}
(\mathbf{A}+\mathbf{B}\mathbf{K})^v=\mathbf{0},v>1.
\end{equation}
This case is named as the \emph{$v$-step controllable} case~\cite{OREILLY1981363}, since 
the plant state $\mathbf{x}_{k}$ can be driven to zero in $v$ steps after a successful reception of a control packet in the ideal setting\footnote{Note that the ideal setting here is only for the explanation of the term of ``one-step controllable", while we only consider practical settings in the rest of the paper.}, i.e., $\mathbf{x}_{k+v}=\mathbf{ 0}$ in \eqref{example}.

The other cases not satisfying the conditions \eqref{a+bk} nor \eqref{a+bk2}, will also be discussed in the following section.


\section{Analysis of the Downlink-Uplink Scheduling}\label{sec:ana}
As the controller estimates the current plant states and utilizes the estimation to control the future plant states, we analyze the estimation-error covariance and the plant-state covariance in the sequel.

\subsection{Estimation-Error Covariance}
Using \eqref{system_model} and \eqref{estimation}, the estimation error in time slot $(k+1)$ is obtained as
\begin{equation} \label{estimation_error}
\mathbf{e}_{k+1} \triangleq \mathbf{x}_{k+1} - \hat{\mathbf{x}}_{k+1} = \begin{cases}
\mathbf{w}_{k}, &a_k=1, \delta_{k} = 1,\\
\mathbf{A}\mathbf{e}_{k} + \mathbf{w}_{k}, &\text{otherwise}.
\end{cases}
\end{equation}
Thus, we have the updating rule of the estimation-error covariance, $\mathbf{U}_{k} \triangleq \mathbb{E}[\mathbf{e}_k\mathbf{e}_k^\top] $,  as
\begin{equation}\label{e_covariance}
\mathbf{U}_{k+1} \triangleq\mathbb{E}[\mathbf{e}_{k+1}\mathbf{e}_{k+1}^\top] = \begin{cases}
\mathbf{R} &a_k=1, \delta_{k} = 1,\\
\mathbf{A} \mathbf{U}_k \mathbf{A}^\top + \mathbf{R} &\text{otherwise}.
\end{cases}
\end{equation}

We define the \emph{estimation-quality indicator} of the plant in time slot $k$, $\tau_k$, as the number of passed time slots from the last successfully received sensor's packet. Then, the state-updating rule of $\tau_k$ is obtained as
\begin{equation}\label{tau}
\tau_{k+1} = \begin{cases}
1, &a_k=1, \delta_{k} = 1,\\
\tau_{k} +1, & \text{otherwise}.
\end{cases}
\end{equation}

Once a successful sensor's transmission occurs (e.g., there exists $k'$ such that $\mathbf{U}_{k'} = \mathbf{R}$),
from \eqref{e_covariance} and \eqref{tau}, it can be shown that the estimation-error covariance $\mathbf{U}_k,\forall k\geq k'$, is simply a function of the estimation-quality indicator $\tau_k$, i.e.,
\begin{equation} \label{Uk}
\mathbf{U}_k = \mathbf{F}(\tau_k),
\end{equation}
where the function $\mathbf{F}(\cdot)$ is defined~as
\begin{equation} \label{F}
\mathbf{F}(\tau) \triangleq \sum_{i=1}^{\tau}\mathbf{A}^{i-1}\mathbf{R}(\mathbf{A}^\top)^{i-1}, \tau \in \mathbb{N}.
\end{equation}

As we focus on the long-term performance of the system, without loss of generality, we assume that $\mathbf{U}_k \in \{\mathbf{F}(1),\mathbf{F}(2),\mathbf{F}(3),\cdots\}$ for all $k$.
From~\eqref{tau} and \eqref{Uk}, the updating rule of $\mathbf{U}_k$ is obtained as
\begin{equation}\label{e_covariance2}
\mathbf{U}_{k+1} = \mathbf{F}(\tau_{k+1})  = \begin{cases}
\mathbf{F}(1) &a_k=1, \delta_{k} = 1,\\
\mathbf{F}(\tau_k+1) &\text{otherwise}.
\end{cases}
\end{equation}

\subsection{Plant-State Covariance of One-Step Controllable Case}	

Taking \eqref{eta} and \eqref{a+bk} into \eqref{control_command}, the control input of the one-step controllable case can be simplified as
%
\begin{equation} \label{uk}
\mathbf{u}_{k} = \begin{cases}
\mathbf{K}\hat{\mathbf{x}}_{k}, &a_k =2,\gamma_k=1,\\
\mathbf{0}, &\text{otherwise}.
\end{cases}
\end{equation}
Substituting \eqref{uk} into \eqref{system_model} and using \eqref{a+bk}, the plant-state vector can be rewritten~as	
\begin{equation} \label{system_evolution}
\mathbf{x}_{k+1} \!=\! \begin{cases}
\mathbf{A}\mathbf{x}_{k} + \mathbf{B}\mathbf{K}\hat{\mathbf{x}}_k + \mathbf{w}_{k} \!=\! \mathbf{A}\mathbf{e}_k + \mathbf{w}_{k}, &\!a_k=2,\gamma_k=1,\\
\mathbf{A}\mathbf{x}_{k} + \mathbf{w}_{k}, &\text{otherwise.}
\end{cases}
\end{equation}
Thus, the plant-state covariance, $\mathbf{P}_k$, has the updating rule as
\begin{equation}\label{P_covariance}
\mathbf{P}_{k+1} \triangleq \mathbb{E}[\mathbf{x}_{k+1}\mathbf{x}_{k+1}^\top] = \begin{cases}
\mathbf{A}\mathbf{U}_k \mathbf{A}^\top +  \mathbf{R} &a_k=2,\gamma_{k} = 1,\\
\mathbf{A}\mathbf{P}_k \mathbf{A}^\top +  \mathbf{R} &\text{otherwise.}
\end{cases}
\end{equation}
From \eqref{F}, \eqref{e_covariance2} and \eqref{P_covariance}, we see that the plant-state covariance $\mathbf{P}_k$ will only take value from the countable infinity
set $\{\mathbf{F}(2),\mathbf{F}(3),\cdots\}$ after a successful controller's transmission. 
Again, as we focus on the long-term performance of the system, we assume that $\mathbf{P}_k \in \{\mathbf{F}(2),\mathbf{F}(3),\cdots\}$ for all $k$, without loss of generality.

By introducing the variable $\phi_k \in \{2,3,\cdots \}$, the plant-state covariance in time slot $k$ can be written as
\begin{equation} \label{Pk}
\mathbf{P}_k = \mathbf{F}(\phi_k),
\end{equation}
where $\phi_k$ is the \emph{state-quality indicator} of the plant in time slot~$k$. Note that the state covariance only depends on the state parameter $\phi_k$.

From \eqref{P_covariance} and \eqref{Uk}, the updating rules of $\mathbf{P}_k$ and $\phi_k$ in \eqref{Pk} are given by, respectively, as
\begin{equation}\label{P_covariance2}
\mathbf{P}_{k+1} = \mathbf{F}(\phi_{k+1})  = \begin{cases}
\mathbf{F}(\tau_k +1) &a_k=2, \gamma_{k} = 1,\\
\mathbf{F}(\phi_k+1) &\text{otherwise},
\end{cases}
\end{equation}
\begin{equation}\label{phi}
\phi_{k+1} = \begin{cases}
\tau_k+1, &a_k=2, \gamma_{k} = 1,\\
\phi_{k} +1, & \text{otherwise}.
\end{cases}
\end{equation}
From \eqref{tau} and \eqref{phi}, it is easy to prove that $\phi_k\geq \tau_k, \forall k$.

\subsection{Plant-State Covariance of $v$-Step Controllable Case}

Taking \eqref{control_command} into \eqref{system_model}, the plant-state vector is rewritten as
\begin{equation} \label{state1}
\mathbf{x}_{k+1} \!=\! \begin{cases}
\!\mathbf{A}\mathbf{x}_{k} \!+\! \mathbf{B}\mathbf{K}(\mathbf{A}+\mathbf{B}\mathbf{K})^{\eta_k-1}\hat{\mathbf{x}}_{k+1-\eta_k} \!+\! \mathbf{w}_{k}, &\mbox{\!\!\!\!if } \eta_k \!\leq v,\\
\!\mathbf{A}\mathbf{x}_{k} + \mathbf{w}_{k}, &\mbox{\!\!\!\!if } \eta_k \!> v.
\end{cases}
\end{equation}
Using the property \eqref{a+bk2}, we have the state-updating rule as
\begin{equation} \label{v_x}
\mathbf{x}_{k} = \mathbf{A}\mathbf{x}_{k-1} + \mathbf{B}\mathbf{K}(\mathbf{A}+\mathbf{B}\mathbf{K})^{\eta_{k-1}-1}\hat{\mathbf{x}}_{k-\eta_{k-1}} + \mathbf{w}_{k-1}.
\end{equation}
Different from the one-step controllable case in \eqref{system_evolution}, where the current state vector relies on the previous-step estimation, it depends on the state estimation $\eta_{k-1}$ steps ago in the $v$-step controllable case.

Inspired by the one-step controllable case \eqref{Pk}, we aim at deriving the plant-state covariance in terms of a set of state parameters.
First, we define a sequence of variables, $t^i_k$, $i=1,\cdots,v$, where $t^i_k$ is the time-slot index of the $i$th latest successful controller's transmission prior to the current time slot $k$ as illustrated in Fig.~\ref{fig:time}.
\begin{figure}[t]
	\centering
	\includegraphics[scale=0.8]{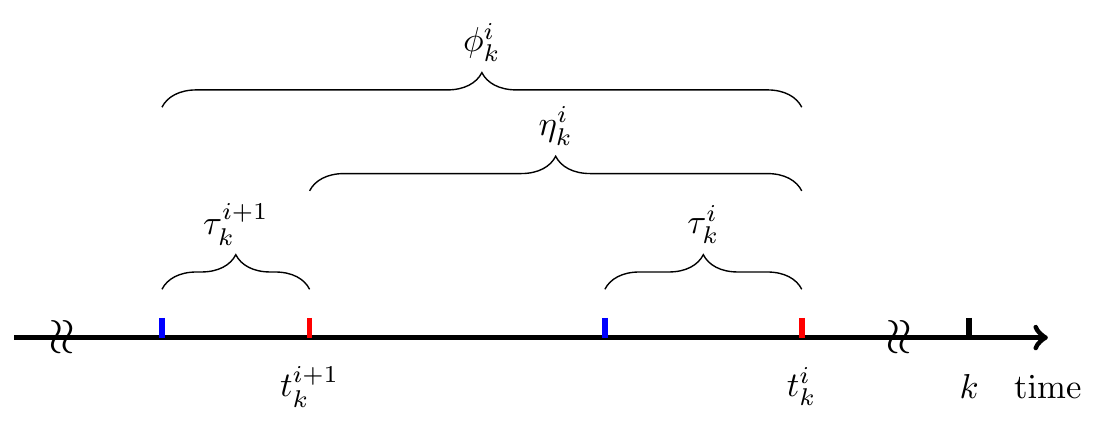}
	\vspace{-0.5cm}	
	\caption{Illustration of the state parameters, where red vertical bars denote successful controller's transmissions and blue vertical bars denote the most recent successful sensor's transmissions prior to the successful controller's transmissions.}
	\label{fig:time}
	\vspace{-0.5cm}
\end{figure}
Then, we define the following state parameters
\begin{align}
\label{tau1}
\tau^i_k &\triangleq \begin{cases}
\tau_{k},&i=0,\\
\tau_{t^i_k},&i=1,2,\cdots v,
\end{cases}
\\
\label{eta1}
\eta^i_k &\triangleq \begin{cases}
\eta_{k-1}=k-t^1_k,&i=0,\\
t^i_k-t^{i+1}_k,&i=1,2,\cdots v-1.
\end{cases}
\end{align}
Specifically, $\eta^i_k$ measures the delay between two consecutive controller's successful transmissions; $\tau^i_k$ is the estimation-quality indicator of time slot $t^i_k$.
Last, we define the state parameters $\phi^i_k$ as
\begin{equation}\label{phi1}
\phi^i_k \triangleq \eta^i_k +\tau^{i+1}_k, i= 0,\cdots,v-1.
\end{equation}

Using the state-transition rules of $\eta_k$ and $\tau_k$ in \eqref{eta} and \eqref{tau}, and the definitions \eqref{tau1}, \eqref{eta1} and \eqref{phi1}, the state-transition rules of $\tau^i_k$, $\eta^i_k$ and $\phi^i_k$ can be obtained, respectively,
\begin{align} \label{tau2}
\tau^i_{k+1} &= \begin{cases}
1,&i=0, a_k=1,\delta_k=1,\\
\tau^{0}_{k}+1,& i=0, \text{ otherwise},\\
\tau^{i-1}_{k},&i=1,\cdots,v-1,a_k=2,\gamma_k=1,\\
\tau^i_{k},&i=1,\cdots,v-1,\text{ otherwise},
\end{cases}
\end{align}
\begin{align}
\label{eta2}
\eta^i_{k+1} &= \begin{cases}
1,&i=0, a_k=2,\gamma_k=1,\\
\eta^{0}_{k}+1,& i=0, \text{ otherwise},\\
\eta^{i-1}_{k},&i=1,\cdots,v-1,a_k=2,\gamma_k=1,\\
\eta^i_{k},&i=1,\cdots,v-1,\text{ otherwise},
\end{cases}
\\ \label{phi2}
\phi^i_{k+1} &= \begin{cases}
\tau^0_{k}+1,&i=0, a_k=2,\gamma_k=1,\\
\phi^{i}_{k}+1,& i=0, \text{ otherwise},\\
\phi^{i-1}_{k},&i=1,\cdots,v-1,a_k=2,\gamma_k=1,\\
\phi^i_{k},&i=1,\cdots,v-1,\text{ otherwise.}
\end{cases}
\end{align}

Then, we can derive the plant-state covariance in a closed form in terms of the state parameters as follows.
\begin{proposition}\label{covariance_analysis}
	\normalfont
	The plant-state covariance $\mathbf{P}_{k}$ in time slot $k$~is
		\begin{equation} \label{one-state_covariance0}
	\begin{aligned}
	\mathbf{P}_{k} 
	&=
	\mathbf{F}(\phi^0_k)
	+ \sum_{i=0}^{v-2}
	\mathbf{G}\left(\sum_{j=0}^{i}\phi^j_k-\sum_{j=0}^{i}\tau^{j+1}_k,\right.\\
	&\left.\hspace{1.5cm}\mathbbm{1}(\phi^{i+1}_k>\tau^{i+1}_k)\left(\mathbf{F}(\phi^{i+1}_k)-\mathbf{F}(\tau^{i+1}_k)\right)\right),
	\end{aligned}
	\end{equation}
	where the summation operator has the property that
	$ \sum_{i=a}^{b} (\cdot)=0 $ if $a>b$, $\mathbf{F}(\cdot)$ is defined in \eqref{F}, and 
	\begin{equation}
	\mathbf{G}\left(x,\mathbf{Y}\right)\triangleq
	(\mathbf{ A+BK})^x\mathbf{ Y } ((\mathbf{ A+BK})^x)^\top.
	\end{equation}
\end{proposition}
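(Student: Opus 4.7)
My plan is to reduce the target expression to the one-step recursion
$\mathbf{P}_k = \mathbf{F}(\phi^0_k) + \mathbf{G}(\eta^0_k, \mathbf{P}_{t^1_k} - \mathbf{F}(\tau^1_k))$
and then unroll that recursion $v$ times, using $(\mathbf{A+BK})^v = \mathbf{0}$ to annihilate the residual term. The key auxiliary result is an \emph{interval expansion} lemma: whenever $s$ is the most recent successful controller transmission strictly before $l$, the applied input has the closed form $\mathbf{u}_r = \mathbf{K}(\mathbf{A+BK})^{r-s}\hat{\mathbf{x}}_s$ for all $r \in [s, l-1]$. For $r - s < v$ this is \eqref{control_command}; for $r - s \geq v$ the actual input is $\mathbf{0}$ but so is $\mathbf{K}(\mathbf{A+BK})^{r-s}\hat{\mathbf{x}}_s$, so the formula remains consistent. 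Introducing $\mathbf{z}_r := \mathbf{x}_r - (\mathbf{A+BK})^{r-s}\hat{\mathbf{x}}_s$ and substituting into the plant dynamics \eqref{system_model} collapses the closed loop into the driftless recursion $\mathbf{z}_{r+1} = \mathbf{A}\mathbf{z}_r + \mathbf{w}_r$ with initial value $\mathbf{z}_s = \mathbf{e}_s$, which unrolls into $\mathbf{x}_l = (\mathbf{A+BK})^{l-s}\hat{\mathbf{x}}_s + \mathbf{A}^{l-s}\mathbf{e}_s + \sum_{j=0}^{l-s-1}\mathbf{A}^{l-s-1-j}\mathbf{w}_{s+j}$.

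Instantiating the lemma at $s = t^1_k, l = k$ writes $\mathbf{x}_k$ as the sum of three terms built from $\hat{\mathbf{x}}_{t^1_k}$, $\mathbf{e}_{t^1_k}$, and the fresh noise $\{\mathbf{w}_r : t^1_k \leq r < k\}$, respectively. These blocks are pairwise uncorrelated because they are functions of disjoint segments of the driving noise sequence (namely $\{\mathbf{w}_r : r < t^1_k - \tau^1_k\}$, $\{\mathbf{w}_r : t^1_k - \tau^1_k \leq r < t^1_k\}$, and $\{\mathbf{w}_r : t^1_k \leq r < k\}$) and the latter two are zero-mean. Taking expectations, applying $\mathbb{E}[\mathbf{e}_{t^1_k}\mathbf{e}_{t^1_k}^\top] = \mathbf{F}(\tau^1_k)$ from \eqref{Uk} and $\mathbb{E}[\hat{\mathbf{x}}_{t^1_k}\hat{\mathbf{x}}_{t^1_k}^\top] = \mathbf{P}_{t^1_k} - \mathbf{F}(\tau^1_k)$, and re-indexing the noise sum to consolidate the $\mathbf{e}$ and $\mathbf{w}$ contributions into $\mathbf{F}(\eta^0_k + \tau^1_k) = \mathbf{F}(\phi^0_k)$, one recovers the stated one-step recursion.

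The same identity evaluated at $t^1_k$ (with $t^2_k, \eta^1_k, \tau^2_k, \phi^1_k$ now playing the roles of $t^1_k, \eta^0_k, \tau^1_k, \phi^0_k$) gives $\mathbf{P}_{t^1_k} - \mathbf{F}(\tau^1_k) = \mathbf{F}(\phi^1_k) - \mathbf{F}(\tau^1_k) + \mathbf{G}(\eta^1_k, \mathbf{P}_{t^2_k} - \mathbf{F}(\tau^2_k))$. Substituting back and using both the linearity of $\mathbf{G}$ in its second argument and the composition rule $\mathbf{G}(x, \mathbf{G}(y, \mathbf{Z})) = \mathbf{G}(x + y, \mathbf{Z})$ produces the next summand of the claimed series. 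Iterating $v$ times, the terminal residual $\mathbf{G}\bigl(\sum_{j=0}^{v-1}\eta^j_k, \mathbf{P}_{t^v_k} - \mathbf{F}(\tau^v_k)\bigr)$ vanishes since $\sum_{j=0}^{v-1}\eta^j_k \geq v$ forces $(\mathbf{A+BK})^{\sum_j \eta^j_k} = \mathbf{0}$. The identity $\sum_{j=0}^{i}\phi^j_k - \sum_{j=0}^{i}\tau^{j+1}_k = \sum_{j=0}^{i}\eta^j_k$ from \eqref{phi1} recasts the first argument of each $\mathbf{G}$ into the stated form. The inequality $\tau^{i+1}_k \leq \phi^{i+1}_k$ (which holds because any sensor success prior to $t^{i+2}_k$ is still the most recent one seen at $t^{i+1}_k$ when no new sensor success intervenes in $[t^{i+2}_k, t^{i+1}_k - 1]$) makes the indicator $\mathbbm{1}(\phi^{i+1}_k > \tau^{i+1}_k)$ redundant, since $\mathbf{F}(\phi^{i+1}_k) - \mathbf{F}(\tau^{i+1}_k) = \mathbf{0}$ in the boundary case of equality.

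The main obstacle I anticipate is the rigorous justification that the three blocks in the interval expansion genuinely partition the noise history into independent groups under a causal scheduling policy, as both the scheduling decisions $a_k$ and the feedback bits $\delta_k, \gamma_k$ are measurable functions of the past and hence cannot smuggle future noise into $\hat{\mathbf{x}}_{t^1_k}$. Once that measurability bookkeeping is in place, the zero-mean property of $\mathbf{e}_{t^i_k}$ and of the fresh noise makes every cross term vanish by factoring, and the remaining algebra is purely mechanical.
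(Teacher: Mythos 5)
Your proposal is correct, and it reaches \eqref{one-state_covariance0} by a genuinely different route than the paper. The paper's Appendix A uses the same interval expansion (your $\mathbf{z}_r$ computation reproduces its \eqref{state_representation} exactly, since $(\mathbf{A+BK})^{l-s}\hat{\mathbf{x}}_s+\mathbf{A}^{l-s}\mathbf{e}_s=(\mathbf{A+BK})^{l-s}\mathbf{x}_s+(\mathbf{A}^{l-s}-(\mathbf{A+BK})^{l-s})\mathbf{e}_s$), but then unrolls the state all the way to $\mathbf{x}_{t^v_k}$, obtaining $\mathbf{x}_k=\mathbf{w}'+\mathbf{e}'$ in which the error terms $\mathbf{e}_{t^1_k},\dots,\mathbf{e}_{t^v_k}$ may share driving-noise samples; it then performs a three-case combinatorial analysis (sensor success always / sometimes / never occurring between consecutive controller successes) to merge the overlapping contributions into disjoint noise segments before reading off the covariance. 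You instead prove the one-step covariance recursion $\mathbf{P}_k=\mathbf{F}(\phi^0_k)+\mathbf{G}(\eta^0_k,\mathbf{P}_{t^1_k}-\mathbf{F}(\tau^1_k))$ via the orthogonality $\mathbb{E}[\hat{\mathbf{x}}_{t^1_k}\hat{\mathbf{x}}_{t^1_k}^\top]=\mathbf{P}_{t^1_k}-\mathbf{F}(\tau^1_k)$ and telescope $v$ times with the semigroup rule $\mathbf{G}(x,\mathbf{G}(y,\mathbf{Z}))=\mathbf{G}(x+y,\mathbf{Z})$; the subtraction of $\mathbf{F}(\tau^i_k)$ at each level absorbs exactly the overlap bookkeeping that drives the paper's case split, and it makes transparent why the indicator is harmless (equality $\phi^{i+1}_k=\tau^{i+1}_k$ gives a zero increment). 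Your route is shorter and cleaner for the covariance itself; the paper's route additionally yields an explicit pathwise expression for $\mathbf{x}_k$ (its \eqref{state_expression_final3} and \eqref{detail_representaiton}), which the authors reuse in Remark~\ref{remark:extension} to justify the countable-state approximation in the non-finite-step-controllable case. Your closing caveat about conditioning on the scheduling/success history is handled correctly (the channel and scheduling randomness is independent of $\{\mathbf{w}_k\}$, so the disjoint-segment orthogonality survives conditioning); the paper glosses over this point entirely, so you are, if anything, more careful than the original.
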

\begin{proof}
	See Appendix A.
\end{proof}

\begin{remark}
	Proposition~\ref{covariance_analysis} states that the state covariance $\mathbf{P}_{k}$ of a $v$-step controllable plant is determined by $(2v-1)$ state parameters, i.e., $\tau^i_k$, $i=1,\cdots,v-1$ and $\phi^i_k$, $i=0,\cdots,v-1$. 
\end{remark}

\begin{remark} \label{remark:extension}
In practice, it is possible that the plant \eqref{system_model} is $\bar{v}$-step controllable, i.e., $(\mathbf{ A+BK})^{\bar{v}}=\mathbf{ 0}$, where $\bar{v}>v$; it is also possible that 
when the controller gain $\mathbf{K}$ is predetermined and fixed, one cannot find $\bar{v}\in \mathbb{N}$ such that $(\mathbf{A+BK})^{\bar{v}}=\mathbf{0}$.
Moreover, the plant may not be finite-step  controllable, i.e., one cannot find a set of $\mathbf{K}$ and $\bar{v}\in \mathbb{N}$ such that $(\mathbf{A+BK})^{\bar{v}}=\mathbf{ 0}$.
In these cases, where conditions \eqref{a+bk} and \eqref{a+bk2} are not satisfied, we can show that the covariance $\mathbf{P}_k$ has incountably infinite many values and cannot be expressed by finite number of state parameters as in Proposition~\ref{covariance_analysis}. Furthermore, the process $\{\mathbf{P}_k\}$ is  not stationary making the long-term average cost function \eqref{metric} difficult to evaluate. However, when $v$ is sufficiently large, $(\mathbf{A+BK})^{v}$ approaches $\mathbf{0}$ as $\rho(\mathbf{A+BK})<1$. Thus, the plant-state vector in \eqref{detail_representaiton} of the proof of Proposition~\ref{covariance_analysis} obtained by letting $(\mathbf{A+BK})^{v}=\mathbf{0}$, is still a good approximation of $\mathbf{x}_k$ for these cases, and hence Proposition~\ref{covariance_analysis} can be treated as a countable-state-space approximation of the plant-state covariance.
\end{remark}

\subsection{Problem Formulation}
The uplink-downlink transmission-scheduling policy is defined as the sequence
$\left\lbrace a_1,a_2,\cdots,a_k,\cdots \right\rbrace$, where $a_k$ is the transmission-scheduling action in time slot $k$.
In the following, we optimize the transmission-scheduling  policy for both the one-step and multi-step controllable plants such that the average cost of the plant in \eqref{metric} is minimized\footnote{In this work, we only focus on the design of the scheduling policy $\{a_k\}$, when the controller gain $\mathbf{K}$ and the length of predictive control $v$ are given and fixed. In our future work, the controller gain, the length of predictive control and the scheduling sequence will be jointly optimized.},~i.e.,
\begin{equation} \label{problem1}
\min_{ a_1,a_2,\cdots,a_k,\cdots }\quad J =\lim_{K \to \infty}\frac{1}{K}\sum_{k=0}^{K-1}\text{Tr}\left(\mathbf{Q}\mathbf{P}_k\right).
\end{equation}

\section{One-Step Controllable Case: Optimal Transmission-Scheduling Policy}\label{sec:one}
We first investigate the optimal transmission scheduling policy for the one-step controllable case, as it will also shed some light onto the optimal policy design of general multi-step controllable cases.
Note that in this section and the following section, we focus on the static channel scenario, and the design method of the optimal scheduling policies can be extended to the fading channel scenario, which will be discussed in Section~\ref{sec:fading}.

\subsection{MDP Formulation}\label{sec:mdp}
From \eqref{P_covariance2}, \eqref{tau} and \eqref{phi},
the next state cost $\mathbf{P}_{k+1}$, and the next states $\tau_{k+1}$ and $\phi_{k+1}$ only depend on the current transmission-scheduling action $a_{k}$ and the current states $\tau_{k}$ and $\phi_{k}$. Therefore, we can reformulate the problem \eqref{problem1} into a Markov Decision Process (MDP) as follows.

1) The state space is defined as 
$\mathbb{S} \triangleq \{(\tau, \phi) : \phi \geq \tau, \phi \neq  \tau +1, \tau \in \mathbb{N}, \phi \in \{2,3,\cdots\}\}$ as illustrated in Fig.~\ref{fig:state_space}. Note that the states with $\phi = \tau +1$ are transient states (which can be verified using \eqref{tau} and \eqref{phi}) and are not included in $\mathbb{S}$, since we only focus on the long-term performance of the system.  
The state of the MDP at time $k$ is 
$\mathbf{s}_k \triangleq (\tau_k, \phi_k) \in \mathbb{S}$.  
\begin{figure}
	\centering
	\includegraphics[scale=0.7]{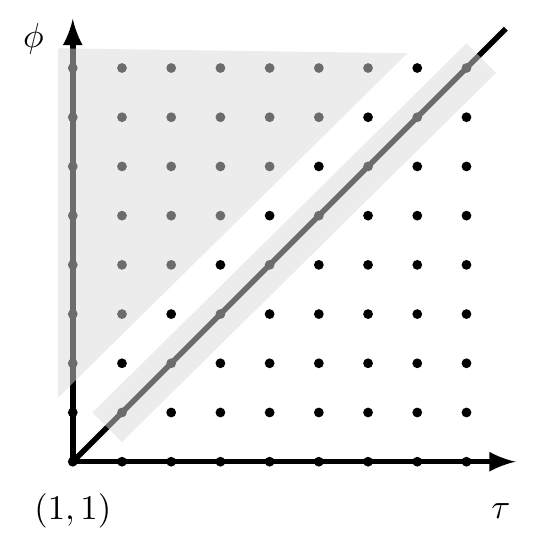}
\vspace{-0.5cm}	
	\caption{The state space $\mathbb{S}$ (shaded dots) of the MDP.}
	\label{fig:state_space}
	\vspace{-0.5cm}
\end{figure}

2) The action space of the MDP is defined as $\mathbb{A} \triangleq \{1,2\}$. The action at time $k$, $a_k \triangleq \pi(\mathbf{s}_k) \in \mathbb{A}$, indicates the sensor's transmission $(a_k=1)$ or the controller's transmission $(a_k=2)$ in time slot $k$.

3) The state-transition probability $P(\mathbf{s}'|\mathbf{s},a)$ is the probability that the state $\mathbf{s}$ at time $(k-1)$ transits to $\mathbf{s}'$ at time $k$ with action $a$ at time $(k-1)$. We drop the time index $k$ here since the transition is time-homogeneous. Let $\mathbf{s}=(\tau,\phi)$ and $\mathbf{s}'=(\tau',\phi')$ denote the current and next state, respectively. From \eqref{tau} and \eqref{phi}, the state-transition probability can be obtained as
\begin{equation} \label{transition_func}
P(\mathbf{s}'\vert \mathbf{s}, a)=
\left\lbrace
\begin{aligned}
& p_s, &&\text{if }a=1,\mathbf{s}'= (\tau+1,\phi+1) \\
& 1-p_s, &&\text{if }a=1,\mathbf{s}'= (1,\phi+1) \\
& p_c, &&\text{if }a=2,\mathbf{s}'= (\tau+1,\phi+1) \\
& 1-p_c, &&\text{if }a=2,\mathbf{s}'= (\tau+1,\tau+1) \\
& 0, &&\text{otherwise}.\\
\end{aligned}
\right.
\end{equation}

4) The one-stage cost of the MDP, i.e., the one-step quadratic-cost of the plant in \eqref{metric}, is a function of the current state $\phi$~as
\begin{equation} \label{one-stage cost}
c(\mathbf{s}) =c(\phi) \triangleq \text{Tr}\left(\mathbf{Q}\mathbf{P}\right) = \text{Tr}\left(\mathbf{Q}\mathbf{F}(\phi)\right),
\end{equation}
which is independent of the state $\tau$ and the action $a$. The function $c(\cdot)$ has the following property:
\begin{lemma}\label{lem:monotone}
	\normalfont
	The one-stage cost function $c(\phi)$ in \eqref{one-stage cost} is a strictly monotonically increasing function of $\phi$, where $\phi \in \{2,3,\cdots\}$.
\end{lemma}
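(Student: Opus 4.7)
The plan is to show directly that $c(\phi+1)-c(\phi)>0$ for every $\phi\in\{2,3,\cdots\}$ by exploiting the telescoping structure of $\mathbf{F}(\phi)$. From the definition \eqref{F}, I would first peel off the last term of the sum to obtain the identity
\begin{equation*}
\mathbf{F}(\phi+1)-\mathbf{F}(\phi)=\mathbf{A}^{\phi}\mathbf{R}(\mathbf{A}^{\top})^{\phi}.
\end{equation*}
Linearity of the trace then gives
\begin{equation*}
c(\phi+1)-c(\phi)=\mathrm{Tr}\bigl(\mathbf{Q}\mathbf{A}^{\phi}\mathbf{R}(\mathbf{A}^{\top})^{\phi}\bigr),
\end{equation*}
so the entire claim reduces to showing that this single trace is strictly positive.

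To argue positivity, I would use the cyclic property of the trace together with symmetric square roots of the positive semidefinite matrices $\mathbf{Q}$ and $\mathbf{R}$ (noting that $\mathbf{R}$ is a bona fide covariance matrix and can be taken positive definite under the standard LQG setup). This rewrites the quantity as the Frobenius norm
\begin{equation*}
\mathrm{Tr}\bigl(\mathbf{Q}\mathbf{A}^{\phi}\mathbf{R}(\mathbf{A}^{\top})^{\phi}\bigr)=\bigl\|\mathbf{Q}^{1/2}\mathbf{A}^{\phi}\mathbf{R}^{1/2}\bigr\|_{F}^{2}\geq 0,
\end{equation*}
with equality if and only if $\mathbf{Q}^{1/2}\mathbf{A}^{\phi}\mathbf{R}^{1/2}=\mathbf{0}$. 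Under $\mathbf{R}\succ\mathbf{0}$ this forces $\mathbf{Q}^{1/2}\mathbf{A}^{\phi}=\mathbf{0}$, and the assumption $\rho(\mathbf{A})>1$ made in Section~\ref{sec:plant} rules out $\mathbf{A}^{\phi}=\mathbf{0}$, so the remaining task is simply to exclude cancellation against the kernel of $\mathbf{Q}$.

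The main obstacle, therefore, is the case where $\mathbf{Q}$ is merely positive semidefinite rather than positive definite: then strict positivity needs an observability-type hypothesis on the pair $(\mathbf{A},\mathbf{Q}^{1/2})$, which is standard in LQG but is not stated explicitly in the model. I would handle this by invoking (or, if necessary, adding) the usual detectability assumption so that $\mathbf{Q}^{1/2}\mathbf{A}^{\phi}$ has no nontrivial null space for any $\phi$; in the simpler case where $\mathbf{Q}\succ\mathbf{0}$, the argument above is complete without any extra work. Either way, strict monotonicity follows for all $\phi\in\{2,3,\cdots\}$, which is the desired claim.
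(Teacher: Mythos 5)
Your proof takes essentially the same route as the paper's: both telescope $\mathbf{F}$ in \eqref{F} to reduce the claim to strict positivity of $\mathrm{Tr}\bigl(\mathbf{Q}\mathbf{A}^{i}\mathbf{R}(\mathbf{A}^{\top})^{i}\bigr)$ for each increment, and both conclude from positive (semi)definiteness of the factors together with $\mathbf{A}^{i}\neq\mathbf{0}$, which follows from $\rho(\mathbf{A})>1$. The one point of divergence is that the ``obstacle'' you flag is real but is resolved by the paper by fiat rather than by argument: the paper's proof simply asserts that $\mathbf{Q}$ and $\mathbf{R}$ are positive definite (even though $\mathbf{Q}$ is only introduced as positive semidefinite in \eqref{metric}), so your Frobenius-norm identity and the observation that the semidefinite case would need an observability-type hypothesis make explicit an assumption the paper leaves implicit, and under $\mathbf{Q}\succ\mathbf{0}$, $\mathbf{R}\succ\mathbf{0}$ your argument is complete and correct.
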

\begin{proof}
Since $\mathbf{R}$ is a positive definite matrix, $\mathbf{MRM}^\top$ is positive definite for any $n$-by-$n$ non-zero matrix $\mathbf{M}$.
Also, we have $\mathbf{A}^i \neq \mathbf{0}, \forall i\in \mathbb{N}$, as it is assumed that $\rho(\mathbf{A})>1$ in Section~\ref{sec:plant}. Due to the fact that the product of positive-definite matrices has positive trace and $\mathbf{Q}$ is positive definite, $\text{Tr}\left(\mathbf{Q}\mathbf{A}^i \mathbf{R} (\mathbf{A}^i)^\top \right)$ is positive, $\forall i \in \mathbb{N}$.
From the definition of $\mathbf{F}(\cdot)$ in \eqref{F}, we have 
\begin{equation}
\begin{aligned}
c(\phi+z)-c(\phi) &=\text{Tr}(\mathbf{Q} \mathbf{F}(\phi+z))-\text{Tr}(\mathbf{Q}\mathbf{F}(\phi))\\
&=\sum_{i=\phi+1}^{\phi+z}
\text{Tr}\left(\mathbf{Q}\mathbf{A}^i \mathbf{R} (\mathbf{A}^i)^\top \right) > 0, \forall z\in\mathbb{N}.
\end{aligned}
\end{equation}
This completes the proof.
\end{proof}

Therefore, the problem \eqref{problem1} is equivalent to finding the optimal policy $\pi(\mathbf{s}s), \forall \mathbf{s}\in \mathbb{S}$ by solving the classical \emph{average cost minimization problem} of the MDP~\cite{puterman2014markov}. If a stationary and deterministic optimal policy of the MDP exists,
we can effectively find the optimal policy by using standard methods such as the relative value
iteration algorithm see e.g.,~\cite[Chapter 8]{puterman2014markov}.

\subsection{Existence of the Optimal Scheduling Policy}
If the uplink and downlink channels have high packet-error probabilities, the average cost in \eqref{problem1} may never be bounded no matter what policy we choose.
Therefore, \emph{we need to study the condition in terms of the transmission reliability of the uplink and downlink channels, under which the dynamic plant can be stabilized, i.e., the average cost can be bounded.}
We derive the following result.

\begin{theorem} \label{theorem:existence}
	\normalfont
In the static channel scenario, there exists a stationary and deterministic optimal transmission-scheduling policy that can stabilize the one-step controllable plant \eqref{system_model} iff 
\begin{equation}\label{condition}
\max \left\lbrace p_s, p_c\right\rbrace <\frac{1}{\rho^2(\mathbf{A})},
\end{equation}
where we recall that $\rho(\mathbf{A})$ is the spectral radius of $\mathbf{A}$.

\end{theorem}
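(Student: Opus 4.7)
The plan is to split the biconditional into sufficiency and necessity, both hinging on the exponential growth $c(\phi) = \mathrm{Tr}(\mathbf{Q}\mathbf{F}(\phi)) \asymp \rho^{2\phi}$ of the one-stage cost, where I write $\rho = \rho(\mathbf{A})$ throughout. The upper bound $c(\phi) \leq C\rho^{2\phi}/(\rho^2-1)$ follows immediately from summing $\|\mathbf{A}^{i-1}\|^2 \leq C'\rho^{2(i-1)}$ in the definition~\eqref{F}, and the matching lower bound $c(\phi) \geq c_0 \rho^{2\phi}$ follows from isolating the leading term of~\eqref{F}.

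\textbf{Sufficiency.} I would exhibit a concrete stationary deterministic policy with finite long-run average cost whenever $\max(p_s, p_c) < 1/\rho^2$, and then invoke standard average-cost MDP theory on countable state spaces (e.g.\ \cite[Ch.~8]{puterman2014markov}) to conclude the existence of an optimal stationary deterministic policy. The natural candidate is the \emph{alternating-phase} policy $\pi^\star$ defined by $\pi^\star(\tau,\phi) = 1$ when $\phi = \tau$ and $\pi^\star(\tau,\phi) = 2$ when $\phi \geq \tau + 2$. This repeatedly transmits sensor packets until one succeeds (breaking $\tau = \phi$ into $\tau = 1 < \phi$) and then transmits control packets until one succeeds (restoring $\tau = \phi$), so the chain regenerates at each controller-success moment. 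Each cycle consists of a sensor phase of length $N \sim \mathrm{Geom}(1-p_s)$ followed by an independent controller phase of length $M \sim \mathrm{Geom}(1-p_c)$; tracing $\phi_k$ through a cycle starting at $(\tau_0,\tau_0)$ shows it takes exactly the values $\tau_0, \tau_0+1, \ldots, \tau_0 + N + M - 1$. Combined with the upper bound on $c(\phi)$, the expected cycle cost is bounded by a constant multiple of $\mathbb{E}[\rho^{2\tau_0}]\,\mathbb{E}[\rho^{2N}]\,\mathbb{E}[\rho^{2M}]$, and each geometric moment generating function is finite iff its parameter satisfies $p\rho^2 < 1$. Since in steady state $\tau_0 = M_{\mathrm{prev}} + 1$ is independent of the current cycle's $(N,M)$, the whole expression is finite precisely under~\eqref{condition}, and the renewal-reward theorem yields $\bar{J}^{\pi^\star} < \infty$.

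\textbf{Necessity.} For the converse I would show that whenever $\max(p_s, p_c) \geq 1/\rho^2$, $\mathbb{E}[c(\phi_k)] \to \infty$ under \emph{every} history-dependent (possibly randomized) policy, which forces $\bar J = +\infty$. The key policy-independent observation is that in any slot, the probability of a successful sensor (resp.\ controller) transmission is at most $1-p_s$ (resp.\ $1-p_c$), since success requires both choosing the right action \emph{and} a channel success. A coupling against i.i.d.\ Bernoulli sequences then gives, uniformly over all policies, $\mathbb{P}(\tau_k \geq t+1) \geq p_s^t$ and $\mathbb{P}(k - k_1(k) \geq t+1) \geq p_c^t$ for all $t \leq k$, where $k_1(k)$ is the most recent controller-success time before $k$. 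Using $\phi_k \geq \tau_k$ (from~\eqref{phi}) and $\phi_k \geq k - k_1(k)$ (because $\phi$ increases by one between consecutive controller successes) together with $c(\phi) \geq c_0 \rho^{2\phi}$, the tail-sum identity $\mathbb{E}[\rho^{2X}] = 1 + (\rho^2-1)\sum_{t\geq 1}\rho^{2(t-1)}\mathbb{P}(X\geq t)$ shows that $\mathbb{E}[c(\phi_k)]$ diverges with $k$ whenever $p_s\rho^2 \geq 1$ or $p_c\rho^2 \geq 1$.

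\textbf{Main obstacle.} The harder direction is necessity, which must rule out \emph{all} policies simultaneously rather than exhibit one; the crux is the uniform stochastic-dominance bound against geometric random variables, valid regardless of how cleverly the policy interleaves sensor and controller actions. A direct coupling with the underlying i.i.d.\ channel-outcome sequences makes this rigorous. A secondary technical point is the lower bound $c(\phi) \gtrsim \rho^{2\phi}$, which requires that $\mathbf{Q}$ have a nonzero component along the dominant eigendirection of $\mathbf{A}$; this is immediate when $\mathbf{Q} \succ 0$ and otherwise can be handled by projecting onto the complement of $\ker\mathbf{Q}$ before applying the same argument.
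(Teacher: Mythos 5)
Your proposal is correct, and the sufficiency half is essentially the paper's own argument: the policy $\pi^\star$ you construct (sense until success when $\tau=\phi$, then control until success) is exactly the persistent scheduling policy $\pi'$ in \eqref{suboptimal}; the paper likewise decomposes time into regenerative control cycles with independent geometric phase lengths $M\sim\mathrm{Geom}(1-p_s)$, $N\sim\mathrm{Geom}(1-p_c)$, notes that $\phi$ at the start of a cycle equals one plus the previous controller phase length, and reduces boundedness of the cycle cost to convergence of $\sum_i i^{4} c(i) p_0^{i}$ with $p_0=\max\{p_s,p_c\}$, which holds iff $p_0\rho^2(\mathbf{A})<1$ by the cited Schenato-et-al.\ series criterion. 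Where you genuinely diverge is necessity. The paper disposes of it in one line by observing that \eqref{condition} is already the necessary and sufficient stabilizability condition for an ideal full-duplex controller with the same per-link error probabilities, and a half-duplex controller can only do worse; this is short but leans entirely on the external FD result. Your route is self-contained: the policy-independent bound $\mathbb{P}(\delta_k=1\mid\mathcal{F}_k)\le 1-p_s$ (success requires both scheduling the link and a channel success) gives, by iterated conditioning, the uniform tail bounds $\mathbb{P}(\tau_k\ge t+1)\ge p_s^{t}$ and likewise for the controller delay, and combined with $\phi_k\ge\tau_k$, $\phi_k\ge k-k_1(k)$ and the exponential lower bound $c(\phi)\gtrsim\rho^{2\phi}$ this forces $\mathbb{E}[c(\phi_k)]\to\infty$ under every (even randomized, history-dependent) policy when $\max\{p_s,p_c\}\rho^2\ge 1$. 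That argument is sound and arguably preferable, since it rules out all policies directly rather than by comparison with a stronger system; its only extra burden is the lower bound on $c(\phi)$, which, as you note, needs $\mathbf{Q}$ to see the dominant eigendirection of $\mathbf{A}$ --- the same implicit positivity the paper already uses in Lemma~\ref{lem:monotone}. One small caveat applies equally to you and to the paper: exhibiting a single stationary deterministic policy with bounded average cost does not by itself certify the existence of a stationary deterministic \emph{optimal} policy for a countable-state average-cost MDP; both arguments implicitly appeal to standard sufficient conditions from \cite{puterman2014markov} for that final step.
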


\begin{proof}
The necessity of the condition can be easily proved as \eqref{condition} is the necessary and sufficient condition that an ideal FD controller with the uplink-downlink packet-error probabilities $\{p_s,p_c\}$ can stabilize the remote plant~\cite{schenato2007foundations}. Intuitively, if \eqref{condition} does not hold, an FD controller cannot stabilize the plant and thus an HD controller cannot either, no matter what transmission-scheduling policy it applies.

The sufficiency part of the proof is conducted by proving the existence of a stationary and deterministic policy $\pi'$ that can stabilize the plant if \eqref{condition} is satisfied, where
\begin{equation} \label{suboptimal}
\pi'(\mathbf{s}) = \pi'(\tau,\phi) =
\begin{cases}
1, \ \tau=\phi, (\tau,\phi)\in \mathbb{S}\\
2, \ \text{otherwise}.
\end{cases}
\end{equation}
The details of the proof are given in Appendix B.
\end{proof}

\begin{remark}
Theorem~\ref{theorem:existence} states that the optimal policy
exists, which stabilizes the plant, if both the channel conditions of the uplink and downlink channels are good (i.e., small $p_s$ and $p_c$) and the dynamic process does not change
rapidly (i.e., a small $\rho^2(\mathbf{A})$).
Also, it is interesting to see that the HD controller has exactly the same condition with the FD controller~\cite{schenato2007foundations} to stabilize the plant. However, since the HD operation naturally introduces longer delay in both transmissions of the sensor measurement and the control command than the FD operation, the bounded average cost of the HD controller should be higher than that of the FD one, which will be illustrated in Section~\ref{sec:num}.
\end{remark}


Assuming that the condition~\eqref{condition} is satisfied, we have the following property of the optimal policy.
\begin{proposition} \label{theorem:switching}
   	\normalfont		
   	The stationary and deterministic optimal policy of the problem \eqref{problem1}, $\pi^*(\tau,\phi)$, is a switching-type policy in terms of $\tau$ and $\phi$, i.e., 
   	(i) if $\pi^{*}(\tau,\phi)=1$, then  $\pi^{*}(\tau+z,\phi)=1$, $\forall z\in\mathbb{N}$ and $(\tau +z,\phi)\in \mathbb{S}$;
   	(ii) if $\pi^{*}(\tau,\phi)=2$, then  $\pi^{*}(\tau,\phi+z)=2$, $\forall z\in\mathbb{N}$ and $(\tau,\phi  +z)\in \mathbb{S}$.
\end{proposition}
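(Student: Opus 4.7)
The plan is to work within the average-cost MDP framework of Section~IV-A and derive the switching structure from the Bellman optimality equation. Since Theorem~\ref{theorem:existence} guarantees a stationary deterministic optimal policy with bounded average cost, standard results for countable-state average-cost MDPs yield a relative value function $V:\mathbb{S}\to\mathbb{R}$ and a constant $g^*$ satisfying
\begin{equation*}
V(\tau,\phi) + g^* = c(\phi) + \min\{H_1(\tau,\phi),H_2(\tau,\phi)\},
\end{equation*}
where $H_1(\tau,\phi)\triangleq p_s V(\tau+1,\phi+1) + (1-p_s)V(1,\phi+1)$ and $H_2(\tau,\phi)\triangleq p_c V(\tau+1,\phi+1) + (1-p_c)V(\tau+1,\tau+1)$. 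Action $1$ is optimal iff $\Delta(\tau,\phi)\triangleq H_2(\tau,\phi)-H_1(\tau,\phi)\geq 0$, equivalently iff
\begin{equation*}
(p_c-p_s)V(\tau+1,\phi+1)+(1-p_c)V(\tau+1,\tau+1)\geq (1-p_s)V(1,\phi+1).
\end{equation*}

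Next I would establish structural properties of $V$ via value iteration. Starting from $V^0\equiv 0$ and invoking Lemma~\ref{lem:monotone}, I would show by induction that every iterate $V^n$ is coordinatewise non-decreasing in both $\tau$ and $\phi$, using the fact that the transition kernel in \eqref{transition_func} is stochastically monotone: increasing $\tau$ or $\phi$ in the current state shifts the successor distribution coordinatewise upward under either action. The limit $V$ inherits these monotonicity properties. Part~(i) then follows by showing that $\Delta(\tau,\phi)$ is non-decreasing in $\tau$ for fixed $\phi$: $V(\tau+1,\tau+1)$ grows along the diagonal while $V(1,\phi+1)$ stays fixed, so the dominant net effect is an increase of $\Delta$. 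Part~(ii) is proved analogously in the opposite direction: as $\phi$ grows with $\tau$ fixed, $V(\tau+1,\tau+1)$ is unchanged while $V(1,\phi+1)$ and $V(\tau+1,\phi+1)$ both grow, pushing $\Delta$ downward and thus favoring action $2$.

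The main obstacle is the mixed-sign term $(p_c-p_s)V(\tau+1,\phi+1)$ in $\Delta$, which prevents the simple coordinatewise monotonicity of $V$ from closing the argument when $p_s\neq p_c$. To handle it I would strengthen the inductive hypothesis in the value-iteration step to also carry a supermodularity-type bound between $V$'s $\tau$- and $\phi$-increments, ensuring that the change in $V(\tau+1,\phi+1)$ under a shift of $\tau$ is dominated by the corresponding change in $V(\tau+1,\tau+1)$, and under a shift of $\phi$ by the change in $V(1,\phi+1)$. Verifying that the value-iteration map preserves this stronger structural property under the specific transition kernel \eqref{transition_func} is the technical heart of the proof.
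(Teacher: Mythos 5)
The paper does not actually supply a proof of this proposition---it defers entirely to Theorem 2 of \cite{huang2019retransmit}---so there is no argument to compare yours against line by line. Your setup is the standard route such structural proofs take and is correctly instantiated here: the average-cost optimality equation, the two $Q$-factors $H_1,H_2$ built from \eqref{transition_func}, the advantage function $\Delta=H_2-H_1$ with action $1$ optimal iff $\Delta\ge 0$, and the plan of propagating monotonicity of the (relative) value function through value iteration using the monotone one-stage cost of Lemma~\ref{lem:monotone} and the stochastic monotonicity of the kernel.

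As a proof, however, the proposal has a genuine gap, and you name it yourself without closing it. Coordinatewise monotonicity of $V$ alone does not make $\Delta$ monotone in either coordinate unless $p_s=p_c$. Concretely, writing $\Delta(\tau,\phi)=(p_c-p_s)V(\tau+1,\phi+1)+(1-p_c)V(\tau+1,\tau+1)-(1-p_s)V(1,\phi+1)$, part (ii) with $p_c>p_s$ requires $(p_c-p_s)\left[V(\tau+1,\phi+2)-V(\tau+1,\phi+1)\right]\le(1-p_s)\left[V(1,\phi+2)-V(1,\phi+1)\right]$, for which it suffices (since $p_c-p_s<1-p_s$) that the $\phi$-increment of $V$ be non-increasing in $\tau$, i.e.\ a submodularity property; part (i) with $p_s>p_c$ requires the $\tau$-increment of $V$ at second coordinate $\phi+1$ to be dominated by the diagonal increment $V(\tau+2,\tau+2)-V(\tau+1,\tau+1)$, which is yet another, non-standard, increment-domination property. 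Proving that value iteration preserves these inequalities under the specific kernel \eqref{transition_func} is precisely the content of the proof, and your write-up stops at declaring it ``the technical heart.'' Two further points would need care if you carried this out: the one-stage cost $c(\phi)$ is unbounded, so the existence of $(g^*,V)$ solving the average-cost optimality equation and the validity of the value-iteration argument must be justified (e.g.\ via vanishing-discount limits of discounted value functions, for which the structural inequalities should be proved first); and the state space excludes the transient states $\phi=\tau+1$, so the coordinatewise inductions must be run only over states in $\mathbb{S}$, skipping those holes.
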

\begin{proof}
The proof follows the same procedure as that of \cite[Theorem 2]{huang2019retransmit} and is omitted due to the space limitation.
\end{proof}
Therefore, for the optimal policy, the state space is divided into two parts by a curve, and the scheduling actions of the states in each part are the same, which will be illustrated in Section~\ref{sec:num}.
Such a switching structure helps saving
storage space for on-line transmission scheduling, as the controller only needs to store the states of the switching boundary instead of the entire state space~\cite{huang2019retransmit,TWC}. 

\subsection{Suboptimal Policy} \label{sec:sub}
In practice, to solve the MDP problem in Section~\ref{sec:mdp} with an infinite number of states, one needs to approximate it by a truncated MDP problem
with finite states for offline numerical evaluation. The computing complexity of the problem is $\mathcal{O}(A B^2 C)$~\cite{littman1995complexity}, where $A$ and $B$ are the cardinalities of the action space and the state space, respectively, and $C$ is the number of convergence steps for solving the problem.
To reduce the computation complexity,
we propose a myopic policy $\psi(\mathbf{s}), \forall \mathbf{s}\in \mathbb{S}$, which simply makes online decision to optimize the expected next stage cost.

From \eqref{transition_func} and \eqref{one-stage cost}, the expected next stage cost $\mathbb{E}[c(\phi')\vert \mathbf{s}, a=\psi(\mathbf{s})]$, where $\mathbf{s}=(\tau,\phi)$, is derived as
\begin{equation} \label{compare}
\begin{aligned}
&   \mathbb{E}[c(\phi')\vert \mathbf{s}, \psi(\mathbf{s})=1] = c(\phi+1),\\
&   \mathbb{E}[c(\phi')\vert \mathbf{s}, \psi(\mathbf{s})=2] = p_c c(\phi+1) 
+ (1-p_c) c(\tau+1).
\end{aligned}
\end{equation} 

1) For the states $\{\mathbf{s} \vert (\tau,\phi)\in\mathbb{S}, \phi>\tau \}$, from \eqref{compare}, the action \emph{$\psi(\mathbf{s})=2$ results in a smaller next stage cost than $\psi(\mathbf{s})=1$}.

2) For the states $\{\mathbf{s}\vert (\tau,\phi)\in\mathbb{S}, \phi=\tau \}$, from \eqref{compare}, since the two actions lead to the same next stage cost, i.e., \begin{equation} \label{sub}
\begin{aligned}
\mathbb{E}[c(\phi')\vert \mathbf{s}, \psi(\mathbf{s})=1] &=\mathbb{E}[c(\phi')\vert \mathbf{s}, \psi(\mathbf{s})=2] = c(\phi+1),
\end{aligned}
\end{equation}
we need to compare the second stage cost led by the actions. If $\psi(\mathbf{s})=1$, $\mathbf{s}'\in \{(1,\phi+1),(\phi+1,\phi+1)\}$. If $\mathbf{s}'=(1,\phi+1)$, since $\phi+1>1$, the next stage myopic action is $\psi(1,\phi+1)=2$ as discussed earlier and the second stage state $\mathbf{s}''\in \{(2,\phi+2),(2,2)\}$. If $\mathbf{s}'=(\phi+1,\phi+1)$, from \eqref{sub}, the expected second stage cost is $c(\phi+2)$ for both $\psi(\mathbf{s}')=1 $ and~$2$.
Based on these analysis and \eqref{transition_func}, we have the expected second stage cost with $\phi(\mathbf{s})=1$ as 
\begin{equation}
\begin{aligned}
   \mathbb{E}[c(\phi'')\vert \mathbf{s}, \psi(\mathbf{s})=1] &= 
(1-p_s)\left(p_c c(\phi+2) \!+\! (1-p_c) c(2)\right)\\
&+p_s c(\phi+2).
\end{aligned}
\end{equation} 
Similarly, we can obtain the expected second stage cost with $\phi(\mathbf{s})=2$~as 
\begin{equation}
\begin{aligned}
&   \mathbb{E}[c(\phi'')\vert \mathbf{s}, \psi(\mathbf{s})=2] = c(\phi+2).
\end{aligned}
\end{equation} 
Since $p_c,p_s<1$ and $c(2)<c(\phi+2)$ from Lemma~\ref{lem:monotone}, $\psi(\mathbf{s})=1$ \emph{results in a smaller cost than} $\psi(\mathbf{s})=2$. 
From the above analysis, the myopic policy $\psi(\mathbf{s})$ is equal to $\pi'(\mathbf{s})$ in \eqref{suboptimal}, $\forall \mathbf{s}\in \mathbb{S}$.
\begin{proposition} \label{prop:sub}
	\normalfont
The myopic policy of problem \eqref{problem1} is $\pi'$ in~\eqref{suboptimal}.
\end{proposition}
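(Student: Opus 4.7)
The plan is to verify the proposition by evaluating the myopic rule directly and matching it against \eqref{suboptimal}. First I would write out the two candidate expected next-stage costs $\mathbb{E}[c(\phi')\mid\mathbf{s},a]$ for $a=1$ and $a=2$ at an arbitrary state $\mathbf{s}=(\tau,\phi)\in\mathbb{S}$, using the transition law \eqref{transition_func} and the one-stage cost \eqref{one-stage cost}; this reproduces the two expressions in \eqref{compare}.

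For the off-diagonal states $\{(\tau,\phi)\in\mathbb{S}:\phi>\tau\}$, I would invoke Lemma~\ref{lem:monotone} to assert $c(\tau+1)<c(\phi+1)$, so the convex combination $p_c c(\phi+1)+(1-p_c)c(\tau+1)$ is strictly smaller than $c(\phi+1)$. Hence $\psi(\mathbf{s})=2$, which coincides with $\pi'(\mathbf{s})$ on this region.

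For the diagonal states $\phi=\tau$ the two one-stage expected costs are both $c(\phi+1)$, so a one-step comparison is inconclusive and I would break the tie by lookahead. Action $2$ deterministically moves to $(\tau+1,\tau+1)$, whose expected next-next cost is again $c(\phi+2)$. Action $1$ transitions to $(\tau+1,\tau+1)$ with probability $p_s$ and to $(1,\phi+1)$ with probability $1-p_s$; on the latter the myopic rule (already settled for $\phi'>\tau'$) selects $2$, giving second-stage expected cost $p_c c(\phi+2)+(1-p_c)c(2)$. Aggregating and subtracting the two action branches, the difference collapses to $(1-p_s)(1-p_c)\bigl[c(2)-c(\phi+2)\bigr]$, which is strictly negative by Lemma~\ref{lem:monotone} whenever $\phi\ge 2$. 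Therefore $\psi(\mathbf{s})=1$ on the diagonal, again matching $\pi'$. Combining the two regions yields $\psi\equiv\pi'$ on $\mathbb{S}$.

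The main obstacle is the diagonal tie $\phi=\tau$: strict one-step myopia is indifferent there, so the argument must extend to a carefully bookkept two-step expansion that already uses the off-diagonal part of the rule proven first. The remainder of the proof is purely mechanical given the monotonicity supplied by Lemma~\ref{lem:monotone}.
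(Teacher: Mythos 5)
Your proposal is correct and follows essentially the same route as the paper: it derives the two expected next-stage costs of \eqref{compare}, settles the off-diagonal states $\phi>\tau$ by Lemma~\ref{lem:monotone}, and breaks the diagonal tie $\phi=\tau$ with the same two-step lookahead, arriving at the difference $(1-p_s)(1-p_c)\left[c(2)-c(\phi+2)\right]<0$ exactly as in the paper's analysis preceding Proposition~\ref{prop:sub}.
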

\begin{remark}
From the myopic policy \eqref{suboptimal} and the state-updating rules \eqref{tau} and \eqref{phi}, we see that the policy $\pi'$ is actually a {persistent scheduling policy}, which consecutively schedules the uplink transmission until a transmission is successful and then consecutively schedules the downlink transmission until a transmission is successful, and so on.
\end{remark}
From the property of the persistent scheduling policy, we can easily obtain the result below.
\begin{corollary}
	For the persistent uplink-downlink scheduling policy $\pi'$ in~Proposition~\ref{prop:sub}, the chances for scheduling the sensor's and the controller's transmissions, are $\frac{1-p_c}{(1-p_c)+(1-p_s)}$ and $\frac{1-p_s}{(1-p_c)+(1-p_s)}$, respectively.
\end{corollary}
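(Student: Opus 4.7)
The plan is to use a renewal-reward argument, exploiting the fact that under the policy $\pi'$ the scheduling sequence decomposes into an alternation of i.i.d.\ geometric-length blocks.

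First I would describe the block structure. By the definition of $\pi'$ in \eqref{suboptimal}, once a block of consecutive uplink (sensor) transmissions begins, the controller keeps scheduling $a_k=1$ until the first successful sensor transmission, at which point the state satisfies $\tau<\phi$ and the policy switches to $a_k=2$; the downlink block then continues until the first successful controller transmission, after which the state returns to $\tau=\phi$ and a new uplink block begins. Thus the block lengths are independent, with each uplink-block length $N_s$ distributed geometrically with success probability $1-p_s$ and each downlink-block length $N_c$ distributed geometrically with success probability $1-p_c$.

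Next I would compute the expected lengths, $\mathbb{E}[N_s]=1/(1-p_s)$ and $\mathbb{E}[N_c]=1/(1-p_c)$, and define a renewal cycle as an uplink block followed by a downlink block, so that the expected cycle length is
\begin{equation}
\mathbb{E}[N_s]+\mathbb{E}[N_c]=\frac{1}{1-p_s}+\frac{1}{1-p_c}=\frac{(1-p_c)+(1-p_s)}{(1-p_s)(1-p_c)}.
\end{equation}
By the elementary renewal-reward theorem (equivalently, the strong law of large numbers applied to the ratio of total uplink-slots to total slots over many cycles), the long-run fraction of time slots scheduled for the sensor is almost surely
\begin{equation}
\frac{\mathbb{E}[N_s]}{\mathbb{E}[N_s]+\mathbb{E}[N_c]}=\frac{1/(1-p_s)}{1/(1-p_s)+1/(1-p_c)}=\frac{1-p_c}{(1-p_c)+(1-p_s)},
\end{equation}
and likewise the controller fraction is $\frac{1-p_s}{(1-p_c)+(1-p_s)}$.

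There is no significant obstacle here: the entire argument rests on recognizing the renewal structure induced by $\pi'$, after which the two geometric expectations and a one-line ratio deliver the claimed fractions. The only point worth flagging explicitly would be that the condition \eqref{condition} of Theorem~\ref{theorem:existence} guarantees $p_s,p_c\in(0,1)$ so that $\mathbb{E}[N_s]$ and $\mathbb{E}[N_c]$ are finite and the renewal theorem applies without qualification.
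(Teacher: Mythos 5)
Your proof is correct and matches the argument the paper leaves implicit: the corollary is stated as an easy consequence of the persistent structure, and the control-cycle decomposition in Appendix~B (geometric block lengths $M$ and $N$ with success probabilities $1-p_s$ and $1-p_c$) is precisely the renewal structure you exploit, with the ratio of expected block lengths giving the claimed fractions. One small nit: the finiteness of $\mathbb{E}[N_s]$ and $\mathbb{E}[N_c]$ already follows from the standing assumption $p_s,p_c\in(0,1)$ in the system model, so invoking the stability condition \eqref{condition} is unnecessary for this step.
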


\subsection{Naive Policy: A Benchmark} \label{sec:naive}
We consider a naive uplink-downlink scheduling policy of the HD controller, as a benchmark of the proposed optimal scheduling policy.
The naive policy simply schedules the sensor's and the controller's transmissions alternatively, i.e., \{$\cdots$, sensing, control, sensing, control, $\cdots$\}, without taking into account the state-estimation quality of the controller nor the state-quality of the plant. Such a naive policy is also noted as the \emph{round-robin scheduling policy}.

\begin{theorem} \label{theorem:naive}
	\normalfont
In the static channel scenario, the alternative scheduling policy can stabilize the one-step controllable plant \eqref{system_model} iff 
	\begin{equation}\label{condition2}
\max \left\lbrace p_s, p_c\right\rbrace <\frac{1}{\left(\rho^2(\mathbf{A})\right)^2}.
	\end{equation}
\end{theorem}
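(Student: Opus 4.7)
The plan is a renewal--reward analysis of the Markov chain $\{(\tau_k,\phi_k)\}$ induced by the alternating policy, using the successful controller transmissions as regeneration epochs. Let $\{C_i\}$ denote these epochs: since controller attempts occur only at even slots and each succeeds independently with probability $1-p_c$, the inter-arrival times $X_i\triangleq C_{i+1}-C_i$ are i.i.d.\ distributed as $2G_c$, with $G_c$ geometric with parameter $1-p_c$. From \eqref{phi}, the state-quality indicator resets to $\phi_{C_i+1}=\tau_{C_i}+1$ immediately after $C_i$ and grows by one per slot thereafter, so the cost accrued over the cycle $(C_i,C_{i+1}]$ is
\begin{equation*}
R_i \;=\; \sum_{j=1}^{X_i}\text{Tr}\bigl(\mathbf{Q}\,\mathbf{F}(V_i+j)\bigr),\qquad V_i\triangleq \tau_{C_i}.
\end{equation*}
By the renewal--reward theorem and $\mathbb{E}[X_i]=2/(1-p_c)<\infty$, the plant is stabilized by the alternating policy iff $\mathbb{E}[R_i]<\infty$.

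I would then pin down the distributions of $V_i$ and $X_i$ and translate the finiteness of $\mathbb{E}[R_i]$ into \eqref{condition2}. Because sensor attempts at odd slots are i.i.d.\ Bernoulli$(1-p_s)$ and independent of the controller-success process, $V_i$ and $X_i$ are independent, and the stationary distribution of $\tau$ at even slots gives $\mathbb{P}(V_i=2\ell+1)=(1-p_s)p_s^{\ell}$ for $\ell\geq 0$. Writing $\rho\triangleq\rho(\mathbf{A})$, Gelfand's formula together with the positive-definiteness of $\mathbf{Q}$ and $\mathbf{R}$ and the assumption $\rho>1$ yield the two-sided estimate $c_2\,\rho^{2\phi}\leq \text{Tr}(\mathbf{Q}\mathbf{F}(\phi))\leq c_1\,(\rho+\epsilon)^{2\phi}$ for any $\epsilon>0$ and all $\phi$ sufficiently large. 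Substituting into $R_i$, summing the geometric series in $j$ and using independence of $V_i$ and $X_i$, $\mathbb{E}[R_i]$ is bounded above and below by constant multiples of $\mathbb{E}[\rho^{2V_i}]\cdot\mathbb{E}[\rho^{2X_i}]$, which a direct calculation evaluates as
\begin{equation*}
\frac{(1-p_s)\rho^2}{1-p_s\rho^4}\cdot\frac{(1-p_c)\rho^4}{1-p_c\rho^4},
\end{equation*}
finite iff $\max\{p_s,p_c\}<1/\rho^4$. Letting $\epsilon\to 0$ in the upper bound establishes sufficiency, and the lower bound yields necessity since one of the factors diverges as soon as $p_s\geq\rho^{-4}$ or $p_c\geq\rho^{-4}$.

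The main obstacle I expect is the polynomial correction to $\|\mathbf{A}^{i}\|$ arising when $\mathbf{A}$ has non-trivial Jordan blocks, which prevents $c_1=c_2$ in the two-sided bound above. This is handled by the strictness of \eqref{condition2}: for any $(p_s,p_c)$ satisfying the threshold, $\epsilon>0$ can be chosen small enough that both $p_s(\rho+\epsilon)^4<1$ and $p_c(\rho+\epsilon)^4<1$, so the extra polynomial factor is absorbed into the corresponding geometric series. A secondary subtlety is the lower bound when $\mathbf{A}$ has complex eigenvalues of modulus $\rho$, which is handled by projecting $\mathbf{R}$ onto the real invariant subspace spanned by the dominant eigen-directions and invoking positive-definiteness of $\mathbf{Q}$ on that subspace.
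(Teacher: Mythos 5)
Your argument is correct and lands on the same renewal-type skeleton as the paper's Appendix C (write $J=\mathbb{E}[S]/\mathbb{E}[L]$ over cycles and reduce stability to convergence of a geometric-tail series in $p\,\rho^4$), but the execution differs in three useful ways. First, your cycle decomposition is simpler: you regenerate at successful controller transmissions and carry the pre-cycle estimation age $V_i$ as an independent factor with the exact law $\mathbb{P}(V_i=2\ell+1)=(1-p_s)p_s^{\ell}$, whereas the paper introduces ``effective control cycles'' (a sensor success followed by the next controller success) and must compute a considerably messier joint pmf $\mathbb{P}[M=m,N=n]$ involving a sum over interleavings. Second, your necessity argument falls out of the same two-sided bound (one of the factors $\mathbb{E}[\rho^{2V_i}]$, $\mathbb{E}[\rho^{2X_i}]$ diverges), while the paper proves necessity separately by comparing against two idealized systems with $p_s=0$ or $p_c=0$; yours is the cleaner route since it needs no auxiliary comparison. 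Third, you re-derive the convergence criterion from Gelfand's formula where the paper simply cites the known fact that $\sum_j p^j c(j)<\infty$ iff $p\,\rho^2(\mathbf{A})<1$. Two small remarks: (i) your cycles are not strictly i.i.d.\ (if no sensor success occurs inside cycle $i$ then $V_{i+1}=V_i+X_i$), so the renewal--reward step should be stated as the ratio ergodic theorem for the induced positive-recurrent Markov chain rather than the i.i.d.\ renewal--reward theorem --- the marginal of $V_i$ and its independence from $X_i$ are exactly as you claim, so the computation is unaffected, and the paper glosses over the same point; (ii) the lower bound is easier than you fear: $\mathrm{Tr}\bigl(\mathbf{Q}\mathbf{A}^{i}\mathbf{R}(\mathbf{A}^{i})^\top\bigr)\geq \lambda_{\min}(\mathbf{Q})\lambda_{\min}(\mathbf{R})\|\mathbf{A}^{i}\|_F^2\geq c\,\rho^{2i}$ holds for every $i$ because any operator norm dominates the spectral radius, so no projection onto dominant eigen-directions is needed.
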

\begin{proof}
See Appendix C.
\end{proof}
\begin{remark}
Comparing with Theorem~\ref{theorem:existence}, to stabilize the same plant, the naive policy may require  smaller packet-error probabilities of the uplink and downlink channels than the proposed optimal scheduling policy. This also implies that the optimal policy can result in a notably smaller the average cost of the plant than the naive policy, which will be illustrated in Section~\ref{sec:num}.
\end{remark}

\section{$v$-Step Controllable Case: Optimal Transmission-Scheduling Policy} \label{sec:v}


\subsection{MDP Formulation}\label{sec:mdp2}
Based on Proposition~\ref{covariance_analysis}, the average cost minimization problem \eqref{problem1} can be formulated as an MDP similar to 
the one-step controllable case in Section~\ref{sec:one} as:

1) The state space is defined as 
$\mathbb{S} \triangleq \{(\tau_k^0,\tau_k^1,\cdots,\tau_k^{v-1},\phi_k^0,\phi_k^1,\cdots,\phi_k^{v-1}):\phi^i_k \geq \tau^i_k, \phi^i_k \neq  \tau^i_k +1, \tau^i_k \in \mathbb{N}, \phi^i_k \in \{2,3,\cdots\},\forall i=0,\cdots,v-1\}$.

2) The action space of the MDP is exactly the same as that of the one-step controllable plant in Section~\ref{sec:mdp}.

3) Let $P(\mathbf{s}'|\mathbf{s},a)$ denote the state-transition probability, where $\mathbf{s} = (\tau^0,\cdots,\tau^{v-1},\phi^0,\cdots,\phi^{v-1})$ and $\mathbf{s}' = ((\tau^0)',\cdots,(\tau^{v-1})',(\phi^0)',\cdots,(\phi^{v-1})')$ are the current and next state, respectively, after dropping the time indexes. 
From \eqref{tau2} and \eqref{phi2}, the state-transition probability is obtained as
\begin{equation}
\begin{aligned}
&P(\mathbf{s}'\vert \mathbf{s}, a)=\\
&
\left\lbrace
\begin{aligned}
& p_s, \text{if }a\!=\!1,\mathbf{s}'\!=\! (\tau^0\!+\!1,\tau^1,\cdots,\tau^{v-1},\phi^0\!+\!1,\phi^1,\cdots,\phi^{v\!-\!1}) \\
& 1\!-\!p_s, \text{if }a\!=\!1,\mathbf{s}'\!=\! (1,\tau^1,\cdots,\tau^{v-1},\phi^0\!+\!1,\phi^1,\cdots,\phi^{v-1}) \\
& p_c, \text{if }a\!=\!2,\mathbf{s}'\!=\! (\tau^0\!+\!1,\tau^1,\cdots,\tau^{v-1},\phi^0\!+\!1,\phi^1,\cdots,\phi^{v-1}) \\
& 1\!-\!p_c, \text{if }a\!=\!2,\mathbf{s}'\!=\! (\tau^0\!+\!1,\tau^0,\!\cdots,\!\tau^{v\!-\!2},\tau^0\!+\!1,\phi^0,\!\cdots,\!\phi^{v\!-\!2}) \\
& 0, \text{otherwise}.\\
\end{aligned}
\right.
\end{aligned}
\end{equation}

4) The one-stage cost of the MDP is a function of the current state $\mathbf{s}$, and is obtained from \eqref{metric} and Proposition~\ref{covariance_analysis} as
\begin{equation} \label{cost2}
\begin{aligned}
&c(\mathbf{s})=c(\tau^1,\cdots,\tau^{v-1},\phi^0,\cdots,\phi^{v-1})\\
&= \text{Tr}\left(\mathbf{Q} \left[\mathbf{F}(\phi^0)
+ \sum_{i=0}^{v-2}
\mathbf{G}\left(\sum_{j=0}^{i}\phi^j-\sum_{j=0}^{i}\tau^{j+1},\right.\right.\right.\\
&\left.\left.\left.\hspace{1.8cm}\mathbbm{1}(\phi^{i+1}>\tau^{i+1})\left(\mathbf{F}(\phi^{i+1})-\mathbf{F}(\tau^{i+1})\right)\right)\right]
\right).
\end{aligned}
\end{equation}
\begin{remark}
	Different from the one-step controllable case, where the one-stage cost function is a monotonically increasing function of the state parameter $\phi$, the cost function in \eqref{cost2} is more complex and does not have such a property. Thus, the switching structure of the optimal policy does not hold in general for the $v$-step controllable case.
\end{remark}

\subsection{Existence of the Optimal Scheduling Policy}\label{sec:v-result}

\begin{theorem} \label{theorem:existence2}
	\normalfont
	In the static channel scenario, there exists a stationary and deterministic optimal transmission-scheduling policy that can stabilize the $v$-step controllable plant \eqref{system_model} using the predictive control method \eqref{C}, \eqref{U} and \eqref{u}, iff \eqref{condition} holds.
\end{theorem}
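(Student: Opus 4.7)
The plan is to mirror the two-part structure of the proof of Theorem~\ref{theorem:existence}: necessity follows from an HD-is-dominated-by-FD argument, while sufficiency is established by exhibiting a stationary deterministic persistent-scheduling policy whose long-run average cost is finite whenever~\eqref{condition} holds, after which standard average-cost MDP theory~\cite[Ch.~8]{puterman2014markov} yields an optimal stationary deterministic policy. Necessity is immediate because any HD scheduling scheme can be implemented (at most) by an FD controller, so the HD plant can be stabilized only if the FD one can be; the FD condition derived in~\cite{schenato2007foundations} is exactly~\eqref{condition} and is independent of~$v$.

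For sufficiency, I would consider the natural $v$-step generalization of~\eqref{suboptimal},
\begin{equation}
\tilde{\pi}(\mathbf{s}) = \begin{cases} 1, & \tau^0 = \phi^0,\\ 2, & \text{otherwise,}\end{cases}
\end{equation}
which persistently schedules the uplink until success and then the downlink until success. Using~\eqref{tau2} and~\eqref{phi2}, one checks that $\tau^0_{k+1}=\phi^0_{k+1}$ right after every successful controller transmission, so $\tilde{\pi}$ is well defined and induces regenerative cycles of length $T_s+T_c$, where $T_s$ and $T_c$ are independent geometric random variables with parameters $1-p_s$ and $1-p_c$, respectively. The joint law of $(\tau^0_k,\phi^0_k)$ under $\tilde{\pi}$ coincides with that of $(\tau_k,\phi_k)$ under $\pi'$ in the one-step case, so the moment bound on $\mathbb{E}[\text{Tr}(\mathbf{Q}\mathbf{F}(\phi^0_k))]$ derived in Appendix~B transfers verbatim and is finite precisely when~\eqref{condition} holds.

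The main technical obstacle is controlling the $v-1$ cross terms $\text{Tr}(\mathbf{Q}\mathbf{G}(\sum_{j=0}^{i}\eta^j_k,\mathbf{Y}_i))$ that appear in Proposition~\ref{covariance_analysis}. The crucial simplification is that~\eqref{a+bk2} forces $\mathbf{G}(x,\cdot)=\mathbf{0}$ whenever $x\ge v$, so the $i$-th term is supported only on the rare event that the last $v$ time slots contain at least $i+2$ successful controller transmissions, an event whose probability decays geometrically in $v$ under $\tilde{\pi}$. On this event, I would bound $\|\mathbf{Y}_i\|$ by $\|\mathbf{F}(\phi^{i+1}_k)\|+\|\mathbf{F}(\tau^{i+1}_k)\|$ and invoke the same geometric-tail moment estimate used for $\mathbf{F}(\phi^0_k)$, exploiting cycle stationarity; the prefactor $\|(\mathbf{A}+\mathbf{BK})^x\|^2$ is uniformly bounded by $\max_{0\le x<v}\|(\mathbf{A}+\mathbf{BK})^x\|^2$ because $\rho(\mathbf{A}+\mathbf{BK})<1$ by~\eqref{rho_ABK}, and the outer sum runs over only finitely many indices. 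Combining these estimates gives a uniform bound on $\mathbb{E}[\text{Tr}(\mathbf{Q}\mathbf{P}_k)]$ under $\tilde{\pi}$, yielding a finite average cost and completing the sufficiency direction.
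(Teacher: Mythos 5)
Your proposal is correct in outline, but it diverges from the paper's Appendix D on both halves of the argument. For necessity, the paper does not reuse the FD-dominance reduction of Theorem~\ref{theorem:existence}; instead it builds two virtual single-link systems --- continuous sensing with an idealized control that resets the state to $\mathbf{0}$ upon each uplink success, and continuous control with an idealized perfect estimator --- shows each lower-bounds the cost of every feasible schedule, and extracts $p_s<1/\rho^2(\mathbf{A})$ and $p_c<1/\rho^2(\mathbf{A})$ separately. Your FD-dominance argument is equally valid and arguably cleaner, since it imports the converse of \cite{schenato2007foundations} wholesale; just make explicit that that converse holds over all admissible FD control laws, not merely the predictive scheme \eqref{C}--\eqref{u}. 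For sufficiency, the paper does not analyze the plain persistent policy: it inserts an idle phase of $v-1$ slots after each downlink success, which forces every inter-success gap $\eta^j$, $j\geq 1$, to exceed $v$ and hence, by \eqref{a+bk2}, annihilates all cross terms of Proposition~\ref{covariance_analysis} with $i\geq 1$, so that the control-cycle computation of Appendix B carries over almost verbatim. Your route keeps the persistent policy and bounds the cross terms directly; this is workable because the prefactor $(\mathbf{A}+\mathbf{B}\mathbf{K})^x$ is uniformly bounded and each $\phi^{i+1},\tau^{i+1}$ obeys the same geometric-tail moment estimate, but two details need repair: since $\sum_{j=0}^{i}\phi^j-\sum_{j=0}^{i}\tau^{j+1}=\sum_{j=0}^{i}\eta^j=k-t^{i+1}_k$, the $i$th term survives only if there are at least $i+1$ (not $i+2$) controller successes in the last $v-1$ slots, and the resulting probability decays geometrically in $i$ for fixed $v$, not ``in $v$.'' Neither slip is fatal. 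The trade-off is that you analyze the policy actually advocated in Section~\ref{sec:sub}, at the cost of handling the cross terms explicitly, whereas the paper's idle-phase construction buys a shorter proof by analyzing a policy used nowhere else.
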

\begin{proof}
	See Appendix D.
\end{proof}
\begin{remark}
	The stability condition of a $v$-step controllable plant is exactly the same as that of the one-step controllable plant in Theorem~\ref{theorem:existence}. Thus, whether a plant can be stabilized by an HD controller simply depends on the spectral radius of the plant parameter $\mathbf{A}$ and the uplink and downlink transmission reliabilities.
\end{remark}

\begin{remark}
Although the stability conditions of a one-step and a $v$-step plants are the same, 
to find the optimal uplink-downlink scheduling policy, the state space and the computation complexity of the MDP problem grow up with $v$ linearly and exponentially~\cite{littman1995complexity}, respectively.
However, in the following section, we will show that the persistent scheduling policy in Proposition~\ref{prop:sub}, which can be treated as a policy that makes decision simply relying on two state parameters, i.e., $\phi^0$ and $\tau^0$, instead of the entire $2 v$ state parameters, can provide a remarkable performance close to the optimal one.
\end{remark}

\section{Extension to Fading Channels} \label{sec:fading}
{In this section, we investigate the optimal transmission-scheduling policy for the general $v$-step controllable case in the fading channel scenario, where $v\geq 1$. 
\subsection{MDP Formulation}\label{sec:fading1}
Comparing with the static channel scenario, the transmission scheduling of the WNCS in the fading channel scenario should take into account the channel states of both the uplink and downlink channels, and hence expands the dimension of the state space. Also, the state-transition probabilities of the MDP problem should also rely on the transition probabilities of channel states.
Therefore, the detailed MDP problem for solving the average cost minimization problem~\eqref{problem1} can be formulated as:
%
%
%

1)  The state space is defined as 
$\mathbb{S} \triangleq \{(\tau_k^0,\tau_k^1,\cdots,\tau_k^{v-1},\phi_k^0,\phi_k^1,\cdots,\phi_k^{v-1},h_{s,k},h_{c,k}):\phi^i_k \geq \tau^i_k, \phi^i_k \neq  \tau^i_k +1, \tau^i_k \in \mathbb{N}, \phi^i_k \in \{2,3,\cdots\},h_{s,k}\in\{1,\cdots,{B_s}\},h_{c,k}\in\{1,\cdots,{B_c}\},\forall i=0,\cdots,v-1 \}$, where $h_{s,k}$ and $h_{c,k}$ are channel-state indexes of the uplink and downlink channels at time $k$, respectively.

2) The action space is the same as that of the static channel scenario in Section~\ref{sec:mdp2}.

3) As the state transition is time-homogeneous, we drop the time index $k$ here. Let $h \triangleq (h_s,h_c)$ and $h' \triangleq (h_s',h_c')$ denote the current and the next uplink-downlink channel states, respectively. As the uplink and downlink channel are action-invariant and independent of each other, the overall channel state transition probability can be directly obtained from \eqref{channel_transition1} and \eqref{channel_transition2} as 
\begin{equation}
P(h'\vert h) = d^s_{h_s,h_s'} d^c_{h_c,h_c'}.
\end{equation}
Let $\mathbf{s} \triangleq (\tau^0,\cdots,\tau^{v-1},\phi^0,\cdots,\phi^{v-1},h)$ and $\mathbf{s}' \triangleq ((\tau^0)',\cdots,(\tau^{v-1})',(\phi^0)',\cdots,(\phi^{v-1})',h')$ denote the current and the next states of the WNCS, respectively.
The state-transition probability $P(s'|s,a)$ can be obtained as
\begin{equation}
\begin{aligned}
&P(\mathbf{s}'\vert \mathbf{s}, a)=\\
&
\left\lbrace
\begin{aligned}
& P(h'\vert h) \omega_{h_s}, \text{if }a=1 \text{ and}\\
&\hspace{1.5cm}\mathbf{s}'= (\tau^0+1,\cdots,\tau^{v-1},\phi^0+1,\cdots,\phi^{v-1},h'), \\
& P(h'\vert h)(1-\omega_{h_s}), \text{if }a=1\text{ and}\\
&\hspace{1.5cm}\mathbf{s}'= (1,\cdots,\tau^{v-1},\phi^0+1,\cdots,\phi^{v-1},h'), \\
& P(h'\vert h)\xi_{h_c}, \text{if }a=2\text{ and}\\
&\hspace{1.5cm}\mathbf{s}'= (\tau^0+1,\cdots,\tau^{v-1},\phi^0+1,\cdots,\phi^{v-1},h'), \\
& P(h'\vert h)(1-\xi_{h_c}),\text{if }a=2\text{ and}\\ 
&\hspace{1.5cm}\mathbf{s}'= (\tau^0+1,\cdots,\tau^{v-2},\tau^0+1,\cdots,\phi^{v-2},h'), \\
& 0, \text{ otherwise}.\\
\end{aligned}
\right.
\end{aligned}
\end{equation}

4) The one-stage cost of the MDP is the same as \eqref{cost2}.

Such an MDP problem with $(2v+2)$ state dimensions and a small action space can be solved by standard MDP algorithms similar to that of the static channel scenario discussed earlier.

\subsection{Existence of the Optimal Scheduling Policy}
In the fading channel scenario, since each state of the Markov chain induced by a scheduling policy has $(2 v+2)$ dimensions, it is difficult to analyze the average cost of the Markov chain and determine whether it is bounded or not. Therefore, it is hard to give a necessary and sufficient condition in terms of the properties of the Markov channels and the plant, under which the MDP problem has a scheduling-policy solution leading to a bounded minimum average cost.
However, inspired by the result of the static channel scenario in Section~\ref{sec:v-result}, we can directly give a necessary condition and a sufficient condition by considering the best and the worst Markov channel conditions of the uplink and downlink channels as below.
\begin{theorem}\label{theorem:fading}
	\normalfont
	In the fading channel scenario, a necessary condition and a sufficient condition on the exists a stationary and deterministic optimal transmission-scheduling policy that can stabilize the general $v$-step controllable plant \eqref{system_model} using the predictive control method \eqref{C}, \eqref{U} and \eqref{u} are given by
\begin{equation}\label{nec}
\max \left\lbrace \underline{p_s}, \underline{p_c}\right\rbrace <\frac{1}{\rho^2(\mathbf{A})},
\end{equation}
and
\begin{equation}\label{sufficient}
\max \left\lbrace \overline{p_s}, \overline{p_c}\right\rbrace <\frac{1}{\rho^2(\mathbf{A})},
\end{equation}
respectively, where $\underline{p_s}\triangleq \min\{\omega_1,\cdots,\omega_{B_s}\}$, $\overline{p_s}\triangleq \max\{\omega_1,\cdots,\omega_{B_s}\}$, $\underline{p_c}\triangleq \min\{\xi_1,\cdots,\xi_{B_c}\}$, $\overline{p_c}\triangleq \max\{\xi_1,\cdots,\xi_{B_c}\}$.
\end{theorem}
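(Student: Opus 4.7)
The plan is to prove Theorem~\ref{theorem:fading} by a sample-path coupling that sandwiches the fading-channel system between two idealized static-channel systems whose stability is already characterized by Theorem~\ref{theorem:existence2}. For necessity, I would lower-bound the long-run average cost attainable under any fading policy by the minimum cost of a static channel with the channel-best error probabilities $\underline{p_s},\underline{p_c}$; for sufficiency, I would exhibit a specific policy whose cost in the fading system is upper-bounded by that of a static channel with the channel-worst error probabilities $\overline{p_s},\overline{p_c}$.

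For the necessity direction, I would fix any stationary deterministic policy $\pi$ for the fading MDP of Section~\ref{sec:fading1} and construct the fading system together with an auxiliary best-case static system on a common probability space, running them under the same scheduling actions generated by $\pi$. At each slot $k$ an auxiliary uniform random variable $U_k$ decides both outcomes: the fading transmission succeeds iff $U_k > p_{s,k}$ (resp.\ $p_{c,k}$), while the static transmission succeeds iff $U_k > \underline{p_s}$ (resp.\ $\underline{p_c}$). Because $p_{s,k}\geq\underline{p_s}$ and $p_{c,k}\geq\underline{p_c}$ for every $k$, every fading success is matched by a static success. Using the recursions \eqref{tau2}--\eqref{phi2} and the covariance formula of Proposition~\ref{covariance_analysis}, extra successes only shrink the state-quality indices $\tau^i_k,\phi^i_k$, which in turn reduces the trace cost; hence the fading cost path-wise dominates the static cost. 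Taking Ces\`aro averages and invoking Theorem~\ref{theorem:existence2} for the static system finishes the contrapositive of necessity.

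For sufficiency, I would apply the (suitably extended) persistent scheduling policy $\pi'$ of Proposition~\ref{prop:sub}, which is defined on the delay indices alone and is oblivious to the channel-state component of the MDP state. Coupling with a worst-case static system of parameters $\overline{p_s},\overline{p_c}$ running the same $\pi'$ via a shared $U_k$, every static success now implies a fading success, so the sample-path comparison reverses and the fading cost is dominated by the static cost. Since $\max\{\overline{p_s},\overline{p_c}\}<1/\rho^2(\mathbf{A})$ by assumption, Theorem~\ref{theorem:existence2} (whose sufficiency part exhibits $\pi'$ as a stabilizer) gives a finite bound on the dominating static cost, which passes to the fading system. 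Existence of a stationary deterministic optimum then follows from standard average-cost MDP arguments applied to the bounded-cost fading MDP of Section~\ref{sec:fading1}.

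The main obstacle I expect is the monotonicity step linking \emph{more transmission successes} to \emph{smaller trace cost}. The cost \eqref{cost2} depends on the $2v$ delay indices $\tau^i_k,\phi^i_k$ in a non-separable way through the sum of $\mathbf{G}(\cdot,\cdot)$ terms in Proposition~\ref{covariance_analysis}, so one must show that a componentwise reduction of these indices shrinks each positive semidefinite summand, which I plan to establish by induction on the index of the most recent successful slot combined with the positivity arguments already used in Lemma~\ref{lem:monotone}. The Markovian memory of the channel states is less of an issue since that process is exogenous to both coupled systems and can simply be shared as a common sample path across the coupling.
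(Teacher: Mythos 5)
Your high-level strategy — sandwiching the fading system between static systems with the channel-best and channel-worst error probabilities and invoking Theorem~\ref{theorem:existence2} — is exactly the route the paper intends; in fact the paper offers no formal proof of Theorem~\ref{theorem:fading} at all, merely asserting it ``by considering the best and the worst Markov channel conditions.'' The decisive gap in your execution is the monotonicity step you yourself flag as the main obstacle: you need that a componentwise reduction of the indices $(\tau^i_k,\phi^i_k)$ reduces the one-stage cost \eqref{cost2}. The paper explicitly remarks, just after \eqref{cost2}, that the $v$-step cost does \emph{not} have this property, and the structure of Proposition~\ref{covariance_analysis} shows why your planned induction cannot close: each summand $\mathbf{G}(x,\mathbf{Y})$ has exponent $x=\sum_{j\le i}\phi^j-\sum_{j\le i}\tau^{j+1}$ and $\mathbf{Y}=\mathbf{F}(\phi^{i+1})-\mathbf{F}(\tau^{i+1})$. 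Decreasing the $\phi^j$'s decreases $x$, and since $(\mathbf{A}+\mathbf{B}\mathbf{K})^v=\mathbf{0}$ a summand that vanished for $x\ge v$ can become a strictly positive contribution once $x$ drops below $v$; decreasing $\tau^{i+1}$ \emph{enlarges} $\mathbf{Y}$ in the positive-semidefinite order. So smaller indices do not shrink each summand, and the sample-path domination you need fails for $v>1$ (it is sound for $v=1$ via Lemma~\ref{lem:monotone}).

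A second, independent problem is that your sufficiency coupling is misaligned: if both systems run $\pi'$ closed-loop on their own states, the success histories diverge after the first discrepancy, so at a given slot the two systems may be attempting \emph{different} links, and comparing a shared $U_k$ against $\overline{p_s}$ on one side and $p_{c,k}$ on the other yields nothing; slot-wise state domination can genuinely fail (the fading system can be stuck in a long downlink phase while the coupled static system has already reset). Both gaps are avoided by arguing at the level the paper's own proofs operate on. For sufficiency, run the persistent(-like) policy on the fading system and note that, conditionally on the past, every uplink (downlink) attempt succeeds with probability at least $1-\overline{p_s}$ (resp.\ $1-\overline{p_c}$), so the per-phase attempt counts are stochastically dominated by independent geometrics with those parameters; then bound the expected cycle cost $\mathbb{E}[S]$ exactly as in Appendices~B and~D, which never uses monotonicity of \eqref{cost2} in the indices. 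For necessity, use the paper's virtual perfect-estimation and perfect-control systems of Appendix~D with the best-case probabilities $\underline{p_s},\underline{p_c}$, which lower-bound the cost of \emph{any} (not merely stationary deterministic) scheduling policy and reduce directly to the condition \eqref{condition}.
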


In general, Theorem~\ref{theorem:fading} says that the plant can be stabilized by a transmission scheduling policy as long as the worst achievable channel conditions of the uplink and downlink Markov channels are good enough, and it cannot be stabilized by any scheduling policy if the best achievable channel conditions of the uplink and downlink Markov channels are poor enough.

In the following section, we will numerically evaluate the performance of the plant using the optimal transmission scheduling policy, where the sufficient condition of the existence of an optimal policy~\eqref{sufficient} is satisfied.}

\begin{figure}
	\centering
	\includegraphics[scale=0.27]{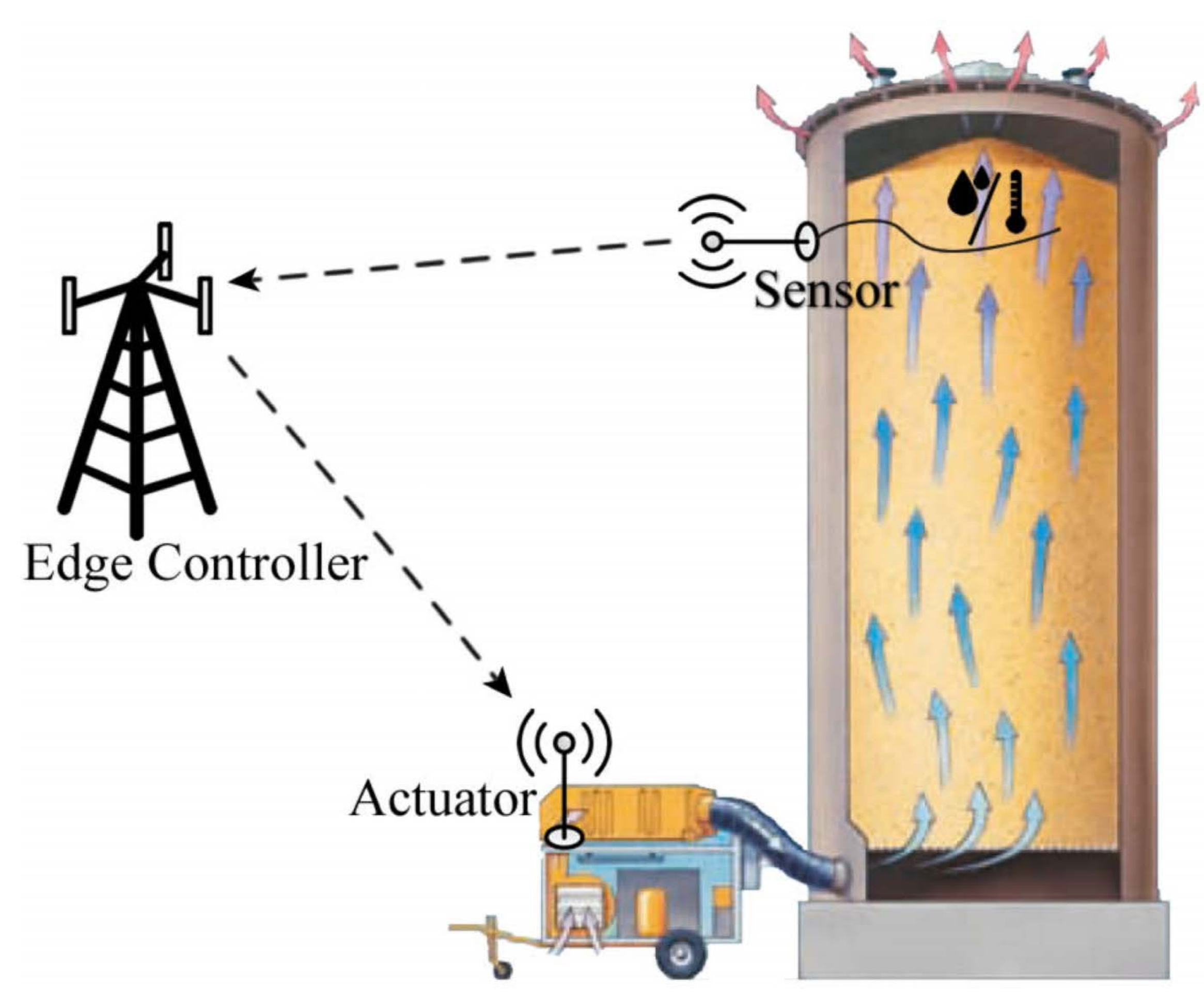}
	\vspace{-0.2cm}
	\caption{Temperature and humidity control in grain conservation.}
	\label{fig:farm}
	\vspace{-0.5cm}
\end{figure}
\section{Numerical Results} \label{sec:num}
{The uplink-downlink scheduling policies that we developed can be applied to a large range of real IIoT applications, including temperature control of hot rolling process in the iron and steel industry, flight path control of delivery drones, voltage control in smart grids, and lighting control in smart homes/buildings.
Specifically, in this section, we apply the uplink-downlink scheduling policies to a real application of smart farms as illustrated in Fig.~\ref{fig:farm}.
The system contains a grain container, a sensor measuring the temperature ($^{\circ}$C) and the humidity ($\%$) of the grain pile, an actuator which has a high pressure fan and/or an air cooler, and an edge controller, which receives the sensor's measurements, and then computes and sends the command to the actuator.
Given the present values of temperature and humidity, the state vector $\mathbf{x}_k$ in \eqref{system_model} contains two parameters, i.e., the current temperature and humidity offsets.
Note that since grain absorbs water from the air and generates heat naturally, the temperature and the humidity levels of the grain pile will automatically increase without proper control, leading to severe insect and mold development~\cite{grain}.
In general, by using the high pressure fan for ventilation, both the temperature and the humidity can be controlled in a proper range. If the air cooler is also available, the temperature can be controlled to the preset value faster.
Thus, if the actuator has a high pressure fan only, given a preset fan speed, its control input $\mathbf{u}_k$ in \eqref{system_model} has only one parameter that is the relative fan speed (measured by the flow volume [m$^3$/h])
If the actuator has both high pressure fan and air cooler, the control input $\mathbf{u}_k$ has two parameters including the fan speed and the cooler temperature ($^{\circ}$C). The former and the latter cases will be studied in Section~\ref{double}, and Sections~\ref{single1} and \ref{single2}, respectively.}

{The discrete time step $T_0$ in this example is set to be one second~\cite{GatsisOppor}.
Unless otherwise stated, we assume the system parameters as
$\mathbf{A} = \begin{bmatrix}
1.1 & 0.2 \\
0.2 & 0.8
\end{bmatrix}$, $\mathbf{Q} = \mathbf{R} = \begin{bmatrix}
1 & 0\\
0 &1
\end{bmatrix}$, 
and thus
$\rho^2(\mathbf{A}) = 1.44$.
Since the controller, the sensor and the actuator have very low mobility in this example, we focus on the static channel scenario and set the packet error probabilities of the uplink and downlink channels as $p_s=0.1$ and $p_c=0.1$, respectively, and also study the fading channel scenario in Section~\ref{double}.}

{In the following, we present numerical results of the optimal policies and the optimal average costs of the plant in Sections~\ref{sec:one} and \ref{sec:v} for one-step and $v$-step controllable cases, respectively. Also, we numerically compare the performance of the optimal scheduling policy with the persistent scheduling policy in Section~\ref{sec:sub}, the benchmark (naive) policy in Section~\ref{sec:naive}, and also the ideal FD policy in~\cite{schenato2007foundations}, i.e., the controller works in the FD mode and have the same packet-error probabilities of the uplink-downlink channels as in the HD mode.}

{Note that to calculate the optimal policies in Sections~\ref{sec:mdp}, \ref{sec:mdp2} and~\ref{sec:fading1} by solving the MDP problems,
the infinite state space $\mathbb{S}$ is first truncated by limiting the range of the state parameters as $1 \leq \tau^i,\phi^i \leq 20, \forall i=0,\cdots, v-1$, to enable the evaluation. 
For example, if we consider a two-step controllable case, i.e., $v=2$, there will be $20^{2\times v}=160,000$ states in the static channel scenario, and there will be much more states in the fading channel scenario.
For solving finite-state MDP problems, in general, there are two classical methods: the policy iteration and the value iteration.
The policy iteration method converges faster in solving small-scale MDP problems, but is more computationally burdensome than the value iteration method when the state space is large~\cite{puterman2014markov}. Since our problems have large state spaces, we adopt the classical value iteration method for solving the MDP problems by using a well recognized Matlab MDP toolbox~\cite{matlab}.}

\subsection{One-Step Controllable Case}\label{double}
In this case, we assume that $\mathbf{B} = - \begin{bmatrix}
1 & 0 \\
0 & 1
\end{bmatrix}$, and $\mathbf{K}=\mathbf{A}$ satisfying $\mathbf{A+BK}=\mathbf{ 0}$.

\emph{\underline{Optimal and suboptimal policies.}}
Fig.~\ref{fig:policy} shows the optimal policy and the persistent (suboptimal) policy in Proposition~\ref{prop:sub} within the truncated state space. We see that although the optimal policy has more states choosing to schedule the sensor's transmission than the persistent policy, these two policies look similar to each other. Also, we see that the optimal policy is a switching-type policy in line with Proposition~\ref{theorem:switching}.

\begin{figure}
	\centering
	\includegraphics[scale=0.67]{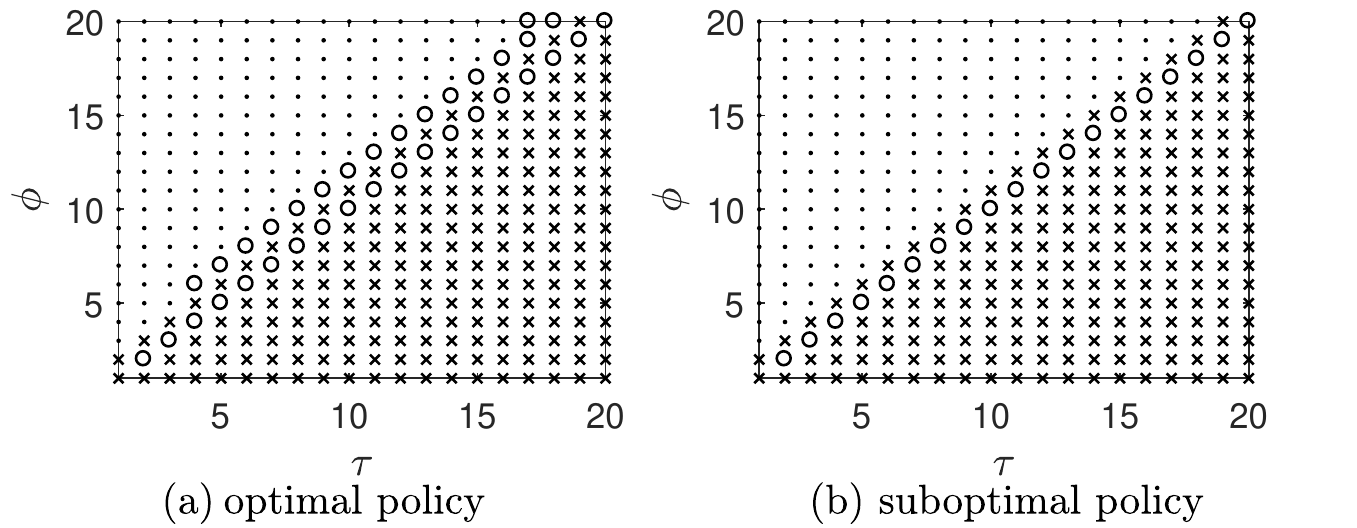}
	\vspace{-0.5cm}
	\caption{The uplink-downlink scheduling policies, where where `o' and
		`.' denote $a = 1$ and $a = 2$, respectively, and `x' denotes a state that does not belong to~$\mathbb{S}$.}
	\label{fig:policy}
	\vspace{-0.5cm}
\end{figure}

\emph{\underline{Performance comparison.}} 
We further evaluate the performances of the optimal scheduling policy, the persistent policy, the naive policy and also the FD policy
in terms of the $K$-step average cost of the plant using $\frac{1}{K}\sum_{k=0}^{K-1}  \mathbf{x}_k^\top\mathbf{Q}\mathbf{x}_k$.
We run $10^4$-step simulations with the initial value of the plant-state vector $\mathbf{x}_0=[1,-1]^\top$. The initial state for the optimal and persistent policies is $(\tau_0,\phi_0) =(2,2)$. The initial scheduling of the naive policy is the sensor's transmission.

Fig.~\ref{fig:performance} shows the average cost versus the simulation time, using different policies.
We see that the average costs induced by different policies converge to the steady state values when $K > 3000$.
Given the baseline of the FD (non-scheduling) policy, the optimal scheduling policy gives a significant $60\%$ average cost reduction of the naive policy.
Also, we see that the persistent policy provides a performance close to the optimal one.
We note that there is a noticeable performance gap between the optimal scheduling policy of the HD controller and the FD policy of the FD controller, since the HD operation introduces extra delays in uplink-downlink transmissions and deteriorates the performance of the control system.


\begin{figure}[t]
	\centering
	\includegraphics[scale=0.58]{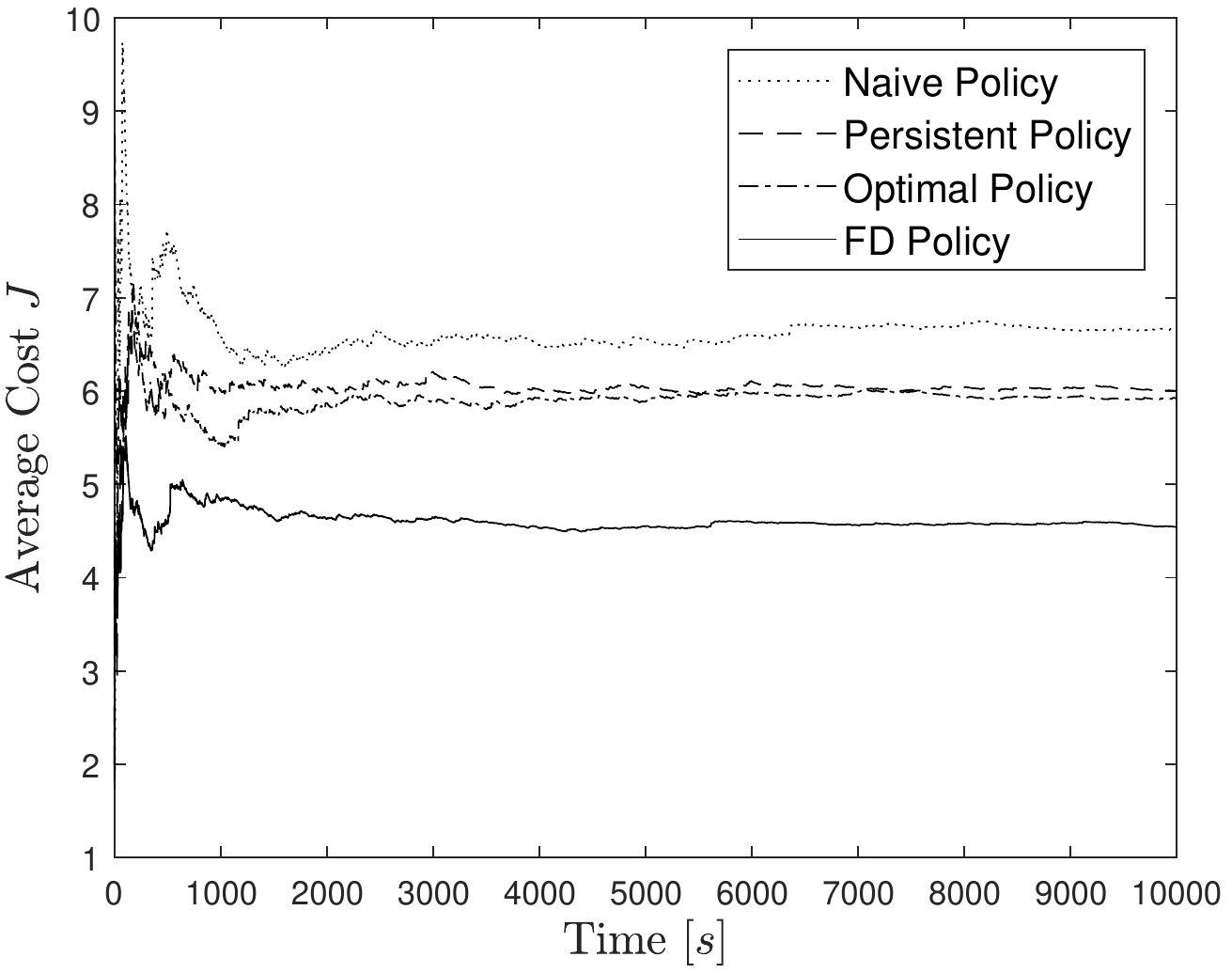}
\vspace{-0.3cm}	
	\caption{One-step controllable case: average cost versus time.}
	\label{fig:performance}
	\vspace{-0.0cm}
\end{figure}

\begin{figure}[t]
	\centering
	\includegraphics[scale=0.58]{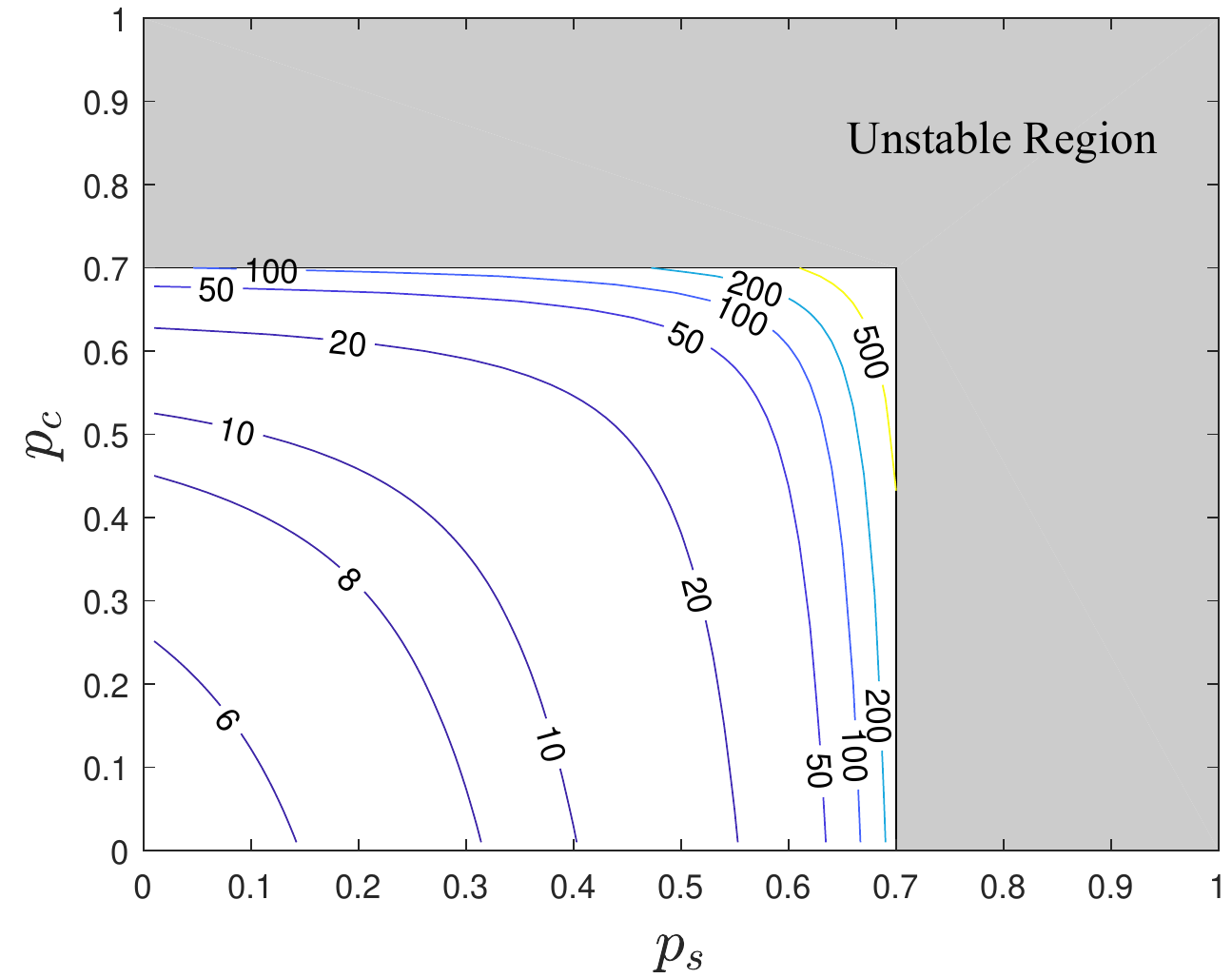}
	\vspace{-0.3cm}	
	\caption{One-step controllable case: average cost versus packet-error probabilities, i.e., $p_s$ and $p_c$.}
	\label{fig:average_cost}
	\vspace{-0.5cm}
\end{figure}

\emph{\underline{Performance versus transmission reliabilities.}} 
In Fig.~\ref{fig:average_cost}, we show a contour plot of the average cost of the plant with different uplink-downlink packet-error probabilities $(p_s,p_c)$ within the rectangular region that can stabilize the plant, i.e., $p_s,p_c < 1/\rho^2(\mathbf{A}) =0.7$ based on Theorem~\ref{theorem:existence}.
The average cost is calculated by running a $10^6$-step simulation and then taking the average, and it does not have a steady-state value if $(p_s,p_c)$ lies outside the rectangular region.
We see that the average cost increases quickly when $p_s$ or $p_c$ approaches the boundary $1/\rho^2(\mathbf{A})$. Also, it is interesting to see that in order to guarantee a certain average cost, e.g., $J=8$, the required $p_s$ is less than $p_c$ in general, which implies that the transmission reliability of the sensor-controller channel is more important than that of the controller-actuator channel.
%
%

{\emph{\underline{Fading channel scenario.}} Assume that both the uplink and downlink channels have two Markov channel states with the packet error probabilities $0.1$ (i.e., the good channel state) and $0.4$ (i.e., the bad channel state), respectively, i.e., $\omega_1=\xi_1=0.1$ and $\omega_2=\xi_2=0.4$.
Figs.~\ref{fig:markov1} and~\ref{fig:markov2} show the average cost versus the simulation time with different channel state transition probabilities.
In Fig.~\ref{fig:markov1}, we set the matrices of the channel state transition probabilities of the uplink and downlink channels as $\mathbf{D}_s = \mathbf{D}_c = \begin{bmatrix}
0.5 & 0.5 \\
0.5 & 0.5
\end{bmatrix}$. Taking the uplink channel as an example, the transition probabilities from the bad channel state and the good channel state are the same, and thus the Markov channel does not have any memory~\cite{gupta2006receding}. Since the uplink and downlink channels have the same Markovian property, both the uplink and downlink channels are memoryless. 
In Fig.~\ref{fig:markov2}, we set $\mathbf{D}_s = \mathbf{D}_c = \begin{bmatrix}
0.8 & 0.2 \\
0.2 & 0.8
\end{bmatrix}$, where the probability of remaining in any given state is higher than jumping to the other state.
In this case, both the uplink and downlink channels have persistent memories. 

In Figs.~\ref{fig:markov1} and~\ref{fig:markov2}, we see that the persistent policy always provides a low average cost, which is close to that of the optimal policy and is much smaller than that of the naive policy.
It is interesting to see that the average cost achieved by the optimal policy under the memoryless Markov channels in Fig.~\ref{fig:markov1} is smaller than that of the Markov channels with memories in Fig.~\ref{fig:markov2}.
This is because in the Markov channel with memory, when the current channel state is bad, it is more likely to have a bad channel state in the following time slot, which can lead to consecutive packet losses and deteriorate the control performance of the WNCS. }

\begin{figure}[t]
	\centering
	\includegraphics[scale=0.58]{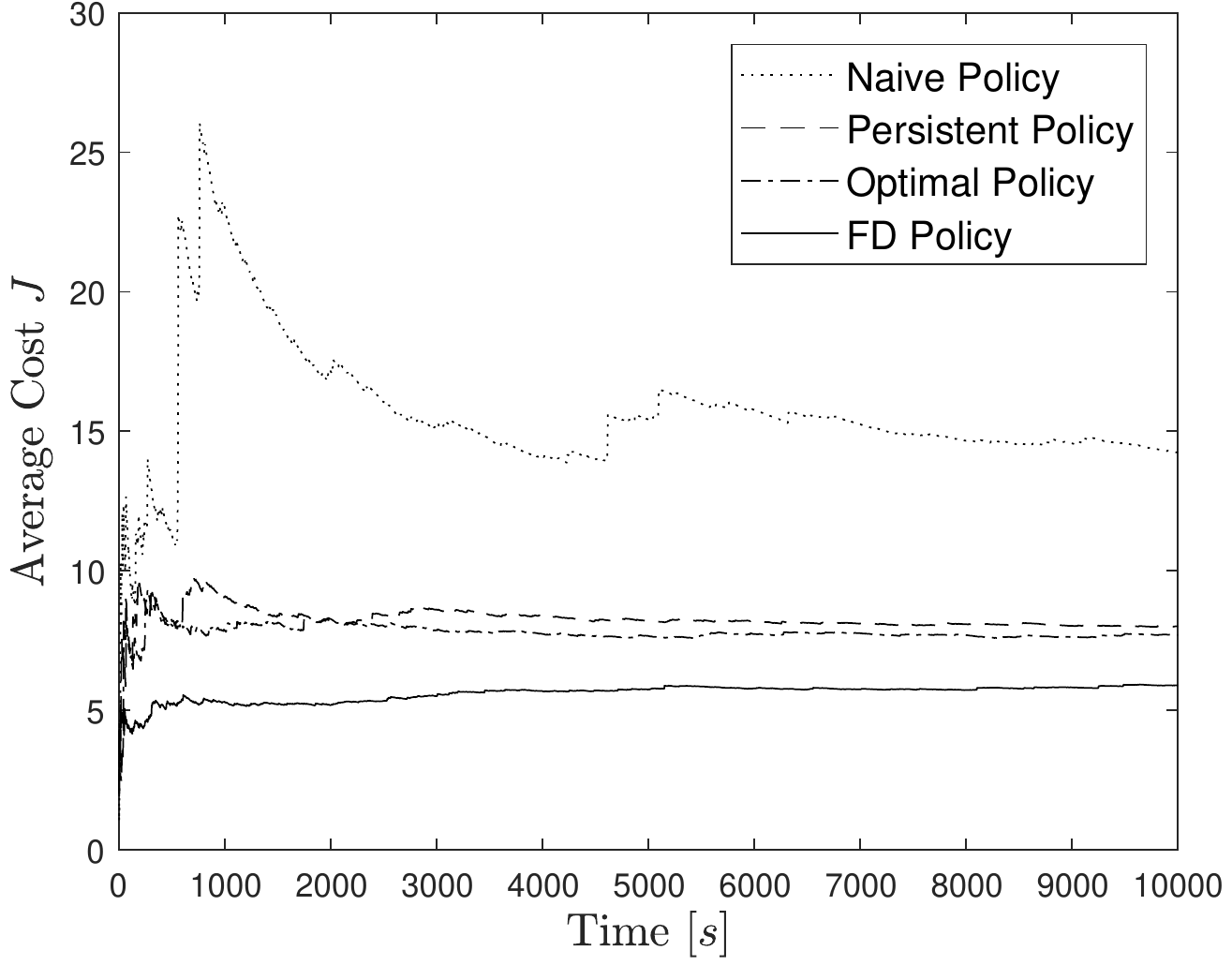}
	\vspace{-0.3cm}	
	\caption{Markov channel scenario without memory: average cost versus time.}
	\label{fig:markov1}
	\vspace{-0.0cm}
\end{figure}
\begin{figure}[t]
	\centering
	\includegraphics[scale=0.58]{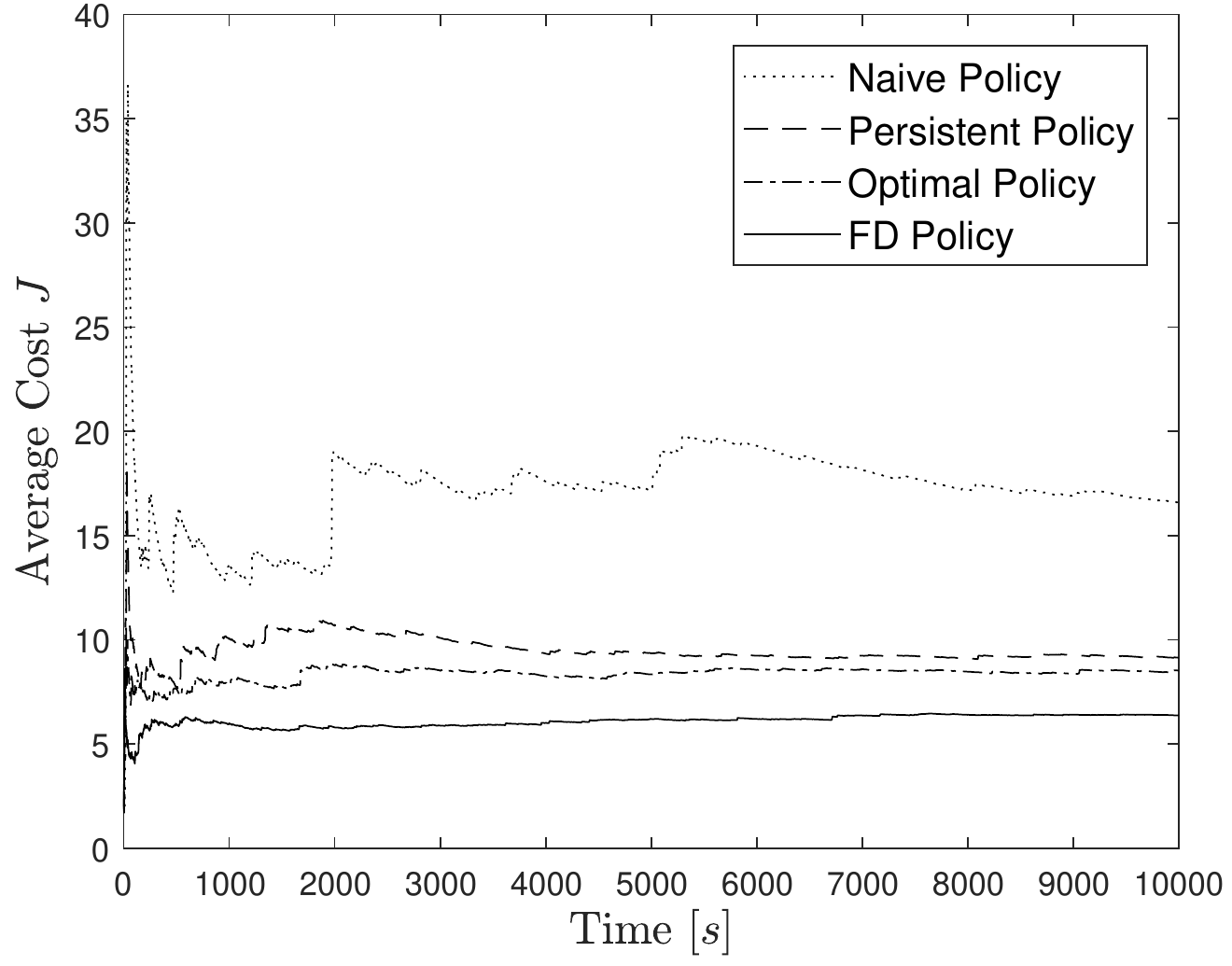}
	\vspace{-0.3cm}	
	\caption{Markov channel scenario with memory: average cost versus time.}
	\label{fig:markov2}
	\vspace{-0.0cm}
\end{figure}

\subsection{Two-Step Controllable Case} \label{single1}
In this case, we assume that $\mathbf{B} = -\left[1,1\right]^{\top}$, and $\mathbf{K}=\left[2.9,-1\right]$ satisfying $\left(\mathbf{A+BK}\right)^2=\mathbf{0}$. For fair comparison, all the policies considered in this subsection adopt the same predictive control method in~\eqref{C}, \eqref{U} and \eqref{u} with $v=2$.

In Fig.~\ref{fig:DeadBeat_v2}, we plot the average cost function versus the packet-error probability of the downlink channel with different uplink-downlink transmission-scheduling policies, where the uplink packet-error probability $p_s=0.1$. We see that the persistent policy can still provide a good performance close to the optimal policy. Given the FD policy as a benchmark, it is clear that the optimal scheduling policy provides at least a $66\%$ reduction of the average cost than the naive policy when $p_c\geq 0.1$.


\begin{figure}[t]
	\centering
	\includegraphics[scale=0.58]{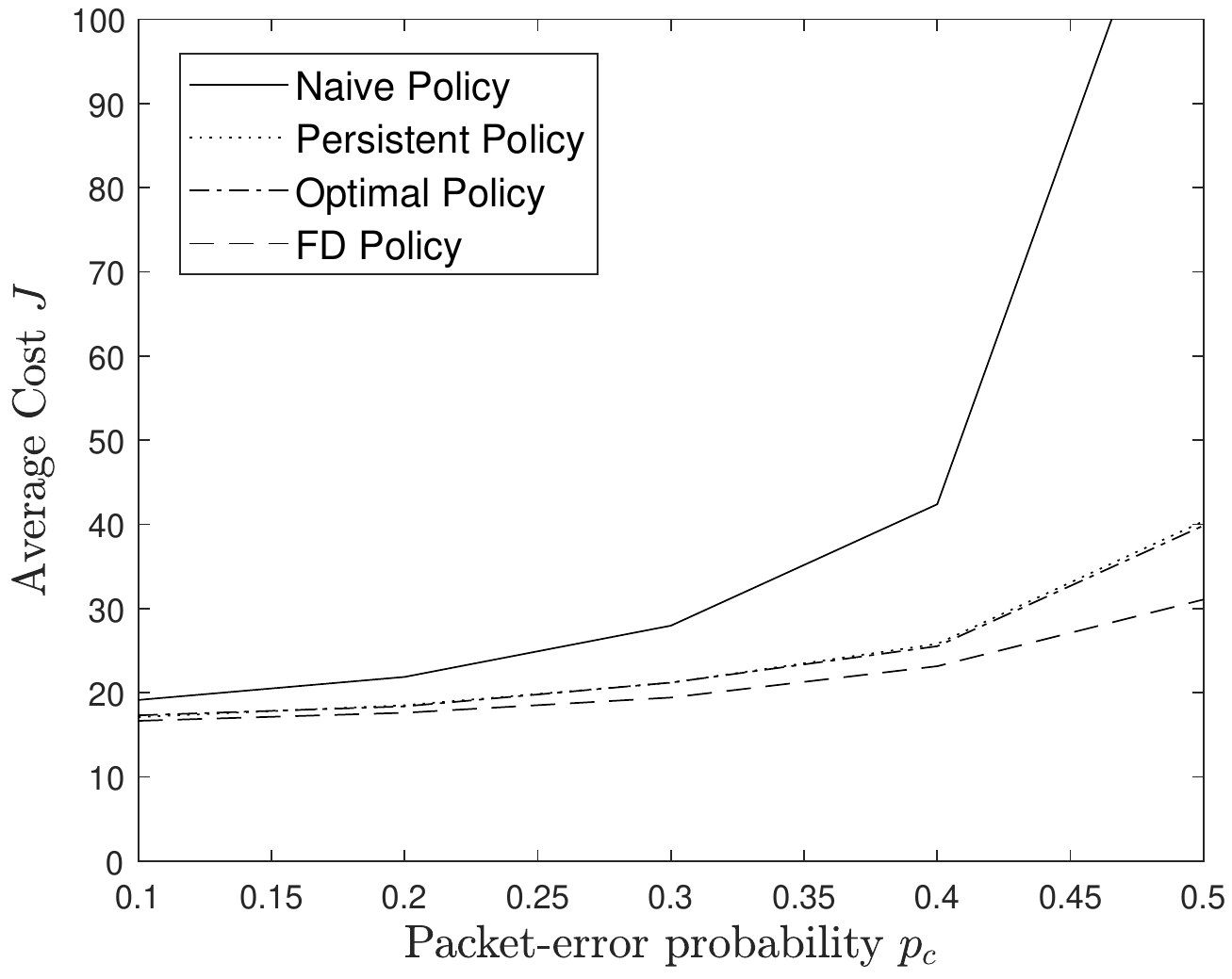}
	\vspace{-0.3cm}	
	\caption{Two-step controllable case: average cost versus packet-error probability $p_c$.}
	\label{fig:DeadBeat_v2}
	\vspace{-0.4cm}
\end{figure}

\begin{figure}[t]
	\centering
	\includegraphics[scale=0.58]{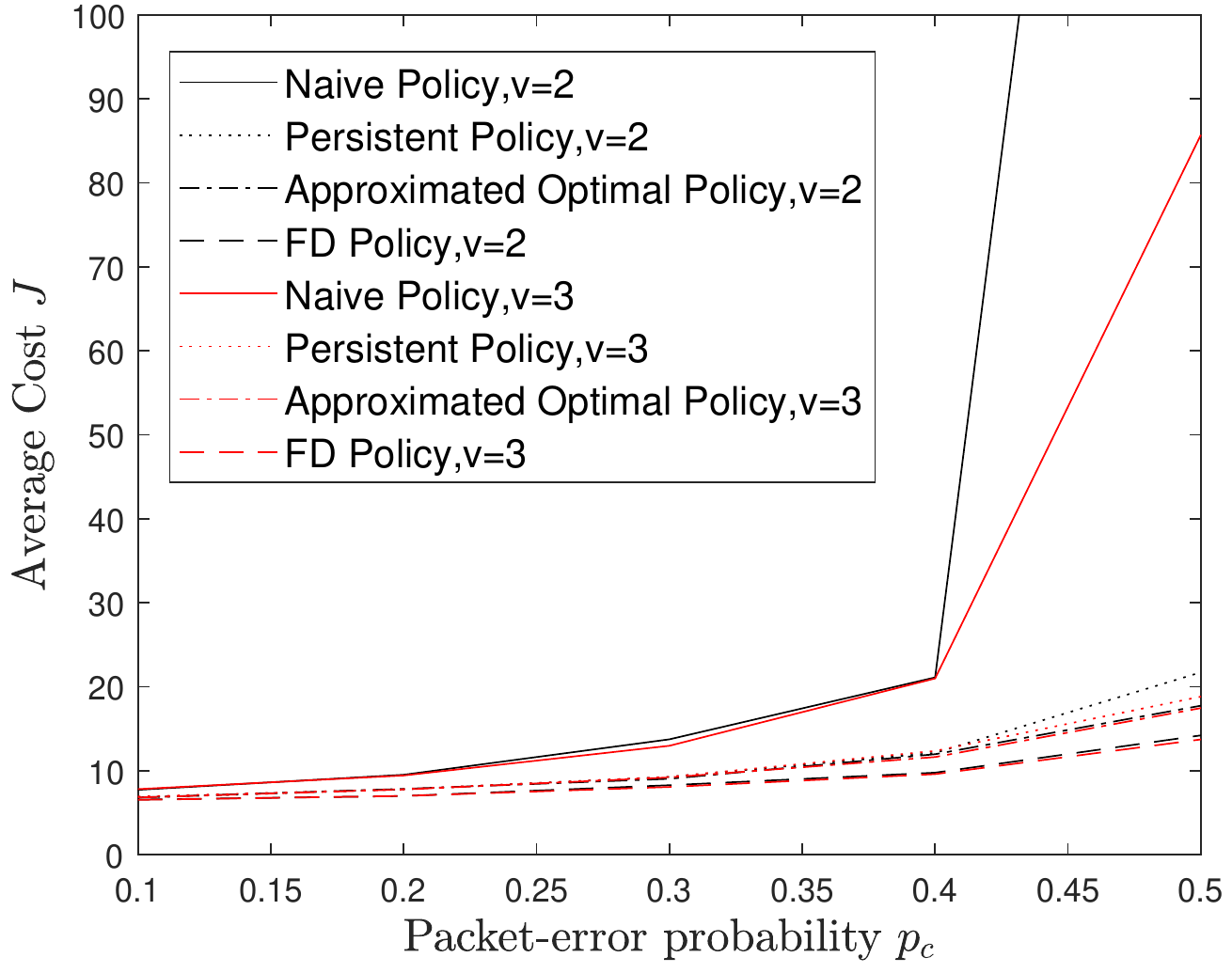}
	\vspace{-0.3cm}	
	\caption{Non-finite-step-controllable case: average cost versus packet-error probability $p_c$.}
	\label{fig:LQG_v2_v3_comparison}
	\vspace{-0.5cm}
\end{figure}

\subsection{Non-Finite-Step-Controllable Case}\label{single2}
We now look at the non-finite-step-controllable case as discussed in Remark~\ref{remark:extension}, where  $\mathbf{B} =- \left[1,1\right]^{\top}$, and $\mathbf{K}=\left[0.7,0.4\right]$. It can be verified that $\rho(\mathbf{A+BK})=0.72<1$ and $(\mathbf{A+BK})^v\neq \mathbf{ 0}$ for a practical range of $v$, e.g. $v<10$. 
We consider two predictive control protocols in \eqref{C} with $v=2$ and $v=3$, respectively, i.e., the controller sends two or three commands to the actuator each time.
We have $(\mathbf{A+BK})^1=\begin{bmatrix}
0.4 & -0.2 \\
-0.5 & 0.4
\end{bmatrix}$,
$(\mathbf{A+BK})^2=\begin{bmatrix}
0.26 & -0.16 \\
-0.4 & 0.26
\end{bmatrix}$
and $(\mathbf{A+BK})^3=\begin{bmatrix}
0.18 & -0.12 \\
-0.29 & 0.18
\end{bmatrix}$. It is clear that $(\mathbf{A+BK})^v$ approaches $\mathbf{0}$ as $v$ increases.
By letting $(\mathbf{A+BK})^v =\mathbf{0}$ in the analysis of plant-state vector in \eqref{detail_representaiton}, where $v=2$ or $3$, the plant-state covariance matrix $\mathbf{P}_k$ is approximated by a function of $2 v-1$ state parameters as in Proposition~\ref{covariance_analysis}. Based on such the approximation, we can formulate and solve the MDP problem in Section~\ref{sec:mdp2}, resulting an \emph{approximated optimal scheduling policy}.

In Fig.~\ref{fig:LQG_v2_v3_comparison},
we plot the average cost function versus the packet-error probability of the downlink channel with different downlink transmission-scheduling policies.
We see that for both the cases $v=2$ and $3$, the performance of the approximated optimal and persistent uplink-downlink scheduling policies are quite close to the benchmark FD policy when $p_c<0.2$, while the performance gap between the naive scheduling policy and the FD policy is large. This also implies that the approximated optimal policy is near optimal in this practical range of downlink transmission reliability, and the persistent scheduling policy is also an effective yet low-complexity one in this case.

\section{Conclusions}\label{sec:con}
{In this work, we have proposed an important uplink-downlink transmission scheduling problem of a WNCS with an HD controller for practical IIoT applications, which has not been considered in the open literature.
We have given a comprehensive analysis of the estimation-error covariance and the plant-state covariance of the HD-controller-based WNCS for both one-step and $v$-step controllable plants.
Based on our analytical results, in both the static and fading channel scenarios, we have formulated 
the novel problem to optimize the transmission-scheduling policy depending on both the current estimation quality of the controller and the current cost function of the plant, 
so as to minimize the long-term average cost function.
Moreover, for the static channel scenario, we have derived the necessary and sufficient condition of the existence of a stationary and deterministic optimal policy that results in a bounded average cost in terms of the transmission reliabilities of the uplink and downlink channels. For the fading channel scenario, we have derived a necessary condition and a sufficient condition in terms of the uplink and downlink channel qualities, under which the optimal transmission-scheduling policy exits.
Our problem can be solved effectively by the standard MDP algorithms if the optimal scheduling policy exits.
Also, we have derived an easy-to-compute suboptimal policy, which provides a control performance close to the optimal policy and notably reduces the average cost of the plant compared to a naive alternative-scheduling policy.

In future work, we will consider a scenario that an HD controller controls multiple plants for IIoT applications with a large number of devices. It is important to investigate the scheduling of different sensors' transmissions to the controller and the controller's transmissions to different actuators, and consider different quality of service (QoS) requirements of different devices in the scheduling and how they affect the control. Moreover, for the scheduling-policy design, it is more practical to take into account the transmission power constraints of the sensors and the controller.}

\section*{Appendix A: Proof of Proposition~\ref{covariance_analysis}}
Recall that $\eta^0_k \triangleq \eta_{k-1}$. From the definition of $\eta_k$ in \eqref{eta}, we have 
\begin{equation}
\eta^0_{j+1}=\begin{cases}
1, & j=k-\eta^0_k\\
\eta^0_{j}+1, & j=k-\eta^0_k+1,\cdots,k-1
\end{cases}
\end{equation}
By using the state-updating rule \eqref{v_x} for $\mathbf{x}_j$, $j=(k-\eta^0_k+1),\cdots,k$, we have
\begin{equation}
\left\lbrace
\begin{aligned}
\mathbf{x}_{k-\eta^0_k+1}&= \mathbf{A} \mathbf{x}_{k-\eta^0_k} + \mathbf{BK(A+BK)}^0 \hat{\mathbf{x}}_{k-\eta^0_k} +\mathbf{w}_{k-\eta^0_k}\\
\mathbf{x}_{k-\eta^0_k+2}&= \mathbf{A} \mathbf{x}_{k-\eta^0_k+1} + \mathbf{BK(A+BK)}^1 \hat{\mathbf{x}}_{k-\eta^0_k} +\mathbf{w}_{k-\eta^0_k+1}\\
&\vdots\\
\mathbf{x}_{k}&= \mathbf{A} \mathbf{x}_{k-1} + \mathbf{BK(A+BK)}^{\eta^0_k-1} \hat{\mathbf{x}}_{k-\eta^0_k} +\mathbf{w}_{k-1}\\
\end{aligned}\right.
\end{equation}
Substituting  $\mathbf{x}_{k-\eta^0_k+1}$ into $\mathbf{x}_{k-\eta^0_k+2}$ and so on, it can be shown that 
\begin{equation} \label{state_representation}
\begin{aligned}
\mathbf{x}_{k} 
&= (\mathbf{A}+\mathbf{B}\mathbf{K})^{\eta^0_k}\mathbf{x}_{k-\eta^0_k} + (\mathbf{A}^{\eta^0_k}-(\mathbf{A}+\mathbf{B}\mathbf{K})^{\eta^0_k})\mathbf{e}_{k-\eta^0_k} \\
&+ \sum_{i=1}^{\eta^0_k}\mathbf{A}^{i-1}\mathbf{w}_{k-i}.
\end{aligned}
\end{equation}
Using the new state-updating rule \eqref{state_representation}, $\mathbf{x}_{k}$ can be further rewritten as
\begin{equation} \notag
\begin{aligned}
&\mathbf{x}_{k} \!=\! (\mathbf{A}\!+\!\mathbf{B}\mathbf{K})^{\eta_k^0}\mathbf{x}_{t^1_k} \!+\! (\mathbf{A}^{\eta_k^0}\!-\!(\mathbf{A}\!+\!\mathbf{B}\mathbf{K})^{\eta_k^0})\mathbf{e}_{t^1_k} \!+\! \!\sum_{i=1}^{\eta_k^0}\!\mathbf{A}^{i-1}\mathbf{w}_{k-i}\\
&=\! (\mathbf{A}\!+\!\mathbf{B}\mathbf{K})^{\eta_k^0}\big((\mathbf{A}\!+\!\mathbf{B}\mathbf{K})^{\eta_k^1}\mathbf{x}_{t^2_k} \!+\! (\mathbf{A}^{\eta_k^1}\!-\!(\mathbf{A}+\mathbf{B}\mathbf{K})^{\eta_k^1})\mathbf{e}_{t^2_k} \\
&+ \sum_{i=1}^{\eta_k^1}\mathbf{A}^{i-1}\mathbf{w}_{t^1_k-i}\big)+ (\mathbf{A}^{\eta_k^0}-(\mathbf{A}+\mathbf{B}\mathbf{K})^{\eta_k^0})\mathbf{e}_{t^1_k} + \sum_{i=1}^{\eta_k^0}\mathbf{A}^{i-1}\mathbf{w}_{k-i}\\
&=\! (\mathbf{A}\!+\!\mathbf{B}\mathbf{K})^{\eta_k^0\!+\!\eta_k^1}\mathbf{x}_{t^2_k} \!+\!\! \sum_{i=1}^{\eta_k^0}\!\mathbf{A}^{i-1}\mathbf{w}_{k-i} \!+\! (\mathbf{A}\!+\!\mathbf{B}\mathbf{K})^{\eta_k^0}\sum_{i=1}^{\eta_k^1}\!\mathbf{A}^{i\!-\!1}\mathbf{w}_{t^1_k-i}\\
\end{aligned}
\end{equation}
\begin{equation} \label{detail_representaiton}
\begin{aligned}
&+\! (\mathbf{A}^{\eta_k^0}\!-\!(\mathbf{A}\!+\!\mathbf{B}\mathbf{K})^{\eta_k^0})\mathbf{e}_{t^1_k} \!+\! (\mathbf{A}\!+\!\mathbf{B}\mathbf{K})^{\eta_k^0}(\mathbf{A}^{\eta_k^1}\!-\!(\mathbf{A}\!+\!\mathbf{B}\mathbf{K})^{\eta_k^1})\mathbf{e}_{t^2_k}\\
&= (\mathbf{A}+\mathbf{B}\mathbf{K})^{\eta_k^0+\eta_k^1+\cdots + \eta_k^{v-1}}\mathbf{x}_{t^v_k} + \mathbf{w}'+\mathbf{e}'\\
&= \mathbf{w}'+\mathbf{e}',
\end{aligned}
\end{equation}
where the last step is due to the fact that
$\eta_k^0+\eta_k^1+\cdots + \eta_k^{v-1} \geq v$ as $\eta_k^i \geq 1, \forall i \geq 0$, and $(\mathbf{A}+\mathbf{B}\mathbf{K})^v = \mathbf{0}$, and
\begin{equation}
\begin{aligned}
\mathbf{w}'&=\sum_{i=1}^{\eta_k^0}\mathbf{A}^{i-1}\mathbf{w}_{k-i}+(\mathbf{A}+\mathbf{B}\mathbf{K})^{\eta_k^0}\sum_{i=1}^{\eta_k^1}\mathbf{A}^{i-1}\mathbf{w}_{k-i} \\
&+ \cdots + (\mathbf{A}+\mathbf{B}\mathbf{K})^{\eta_k^0+\cdots + \eta_k^{v-2}}\sum_{i=1}^{\eta_k^{v-1}}\mathbf{A}^{i-1}\mathbf{w}_{t^{v-1}_k-i},
\end{aligned}
\end{equation}
\begin{equation} \label{e_i}
\mathbf{e}_{t^j_k} = \sum_{i=1}^{\tau_k^{j}}\mathbf{A}^{i-1}\mathbf{w}_{t^{j}_k-i}, j=1,\cdots,v,
\end{equation}
\begin{equation}
\begin{aligned}
&\mathbf{e}'=(\mathbf{A}^{\eta_k^0}-(\mathbf{A}+\mathbf{B}\mathbf{K})^{\eta_k^0})\mathbf{e}_{t^1_k}\\
&+(\mathbf{A}+\mathbf{B}\mathbf{K})^{\eta_k^0}(\mathbf{A}^{\eta_k^1}-(\mathbf{A}+\mathbf{B}\mathbf{K})^{\eta_k^1})\mathbf{e}_{t^2_k} 
+ \cdots + \\ &(\mathbf{A}+\mathbf{B}\mathbf{K})^{\eta_k^0+\cdots + \eta_k^{v-2}}(\mathbf{A}^{\eta_k^{v-1}}-(\mathbf{A}+\mathbf{B}\mathbf{K})^{\eta_k^{v-1}})\mathbf{e}_{t^v_k}.
\end{aligned}
\end{equation}
We see that $\mathbf{x}_{k}$ only depends on the noise terms in the time range \begin{equation}
\mathcal{S} \triangleq \left[k-(\eta_k^0+\cdots+\eta_k^{v-1})-\tau_v,k-1\right].
\end{equation}


To further simplify \eqref{detail_representaiton}, we consider three complementary cases: 1) $\tau^i_k < \eta^i_k, \forall i = 1,\cdots, v-1$, i.e., a sensor's successful transmission occurred between two consecutive controller's successful transmissions, as illustrated in Fig.~\ref{fig:time2}(a); 2) there exists $i$ such that $\tau^i_k \geq \eta^i_k$ and there also exists $j$ such that $\tau^j_k<\eta^j_k$ where $i,j \in \{1,\cdots, v-1\}$, i.e., a sensor's successful transmission did not always occur between two consecutive controller's successful transmissions, as illustrated in Fig.~\ref{fig:time2}(b). Note that from the definition of $\tau^j_k$ and $\eta^j_k$, $\tau^{i}_k =\eta^i_k + \tau^{i+1}_k $ if $\tau^{i}_k>\eta^{i}_k$; 3) $\tau^i_k = \eta^i_k + \tau^{i+1}_k  \geq \eta^i_k$ for all $i \in \{0,\cdots, v-1\}$, i.e., a sensor's successful transmission never occur between the first and the $v$th controller's successful transmissions prior to the current time slot~$k$, as illustrated in Fig.~\ref{fig:time2}(c).

\begin{figure}[t]
	\centering
	\includegraphics[scale=0.68]{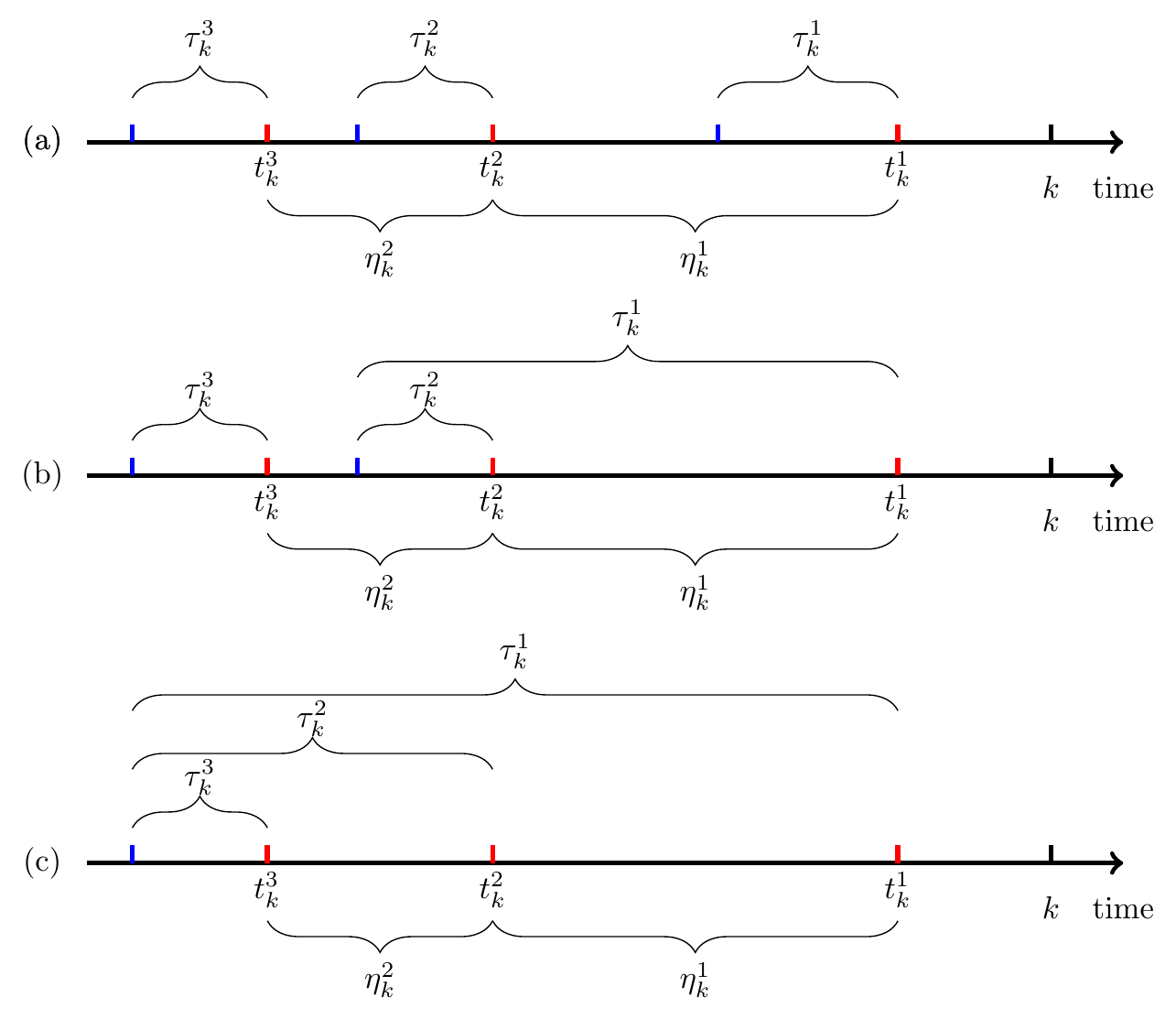}
	\vspace{-0.5cm}	
	\caption{Illustration of three different cases for analyzing the plant-state covariance, where red vertical bars denote successful controller's transmissions and blue vertical bars denote the most recent successful sensor's transmissions prior to the successful controller's transmissions.}
	\label{fig:time2}
	\vspace{-0.5cm}
\end{figure}

For case 1), 
$\mathbf{e}_{t^i_k}$ contains the noise terms within time slots $t^i_k-\tau^i_k$ to $t^i_k-1$. Since $\tau^i_k < \eta^i_k = t^{i+1}_k -t^{i}_k$, $\mathbf{e}_{t^i_k}$ and  $\mathbf{e}_{t^j_k}$ do not contain common noise terms when $i\neq j$.
Taking \eqref{e_i} into \eqref{detail_representaiton}, after some simple simplifications, $\mathbf{x}_{k}$ can be simplified as below with $v$-segment summations
\begin{equation} \label{state_expression_final}
\begin{aligned}
\mathbf{x}_{k} 
&= \sum_{i=1}^{\eta_k^0+\tau_k^1}\mathbf{A}^{i-1}\mathbf{w}_{k-i} + (\mathbf{A}+\mathbf{B}\mathbf{K})^{\eta_k^0}\sum_{i=\tau_k^1+1}^{\eta_k^1+\tau_k^2}\mathbf{A}^{i-1}\mathbf{w}_{t^1_k-i} \\
&+ \cdots + (\mathbf{A}+\mathbf{B}\mathbf{K})^{\eta_k^0+\cdots+\eta_k^{v-2}}\sum_{i=\tau_k^{v-1}+1}^{\eta_k^{v-1}+\tau_k^v}\mathbf{A}^{i-1}\mathbf{w}_{t^{v-1}_k-i},
\end{aligned}
\end{equation}
where $\eta^j_k > \tau^{j}_k,\forall j=1,\cdots,v-1$.


%

For case 2), the estimation-error terms $\mathbf{e}_{t^i_k}$ and $\mathbf{e}_{t^j_k}$ in \eqref{detail_representaiton} may contain common noise terms when $i\neq j$, and $\eta^j_k > \tau^{j}_k$ may not hold for $j=1,\cdots,v-1$.
Inspired by the result \eqref{state_expression_final} in the first case, to calculate $\mathbf{x}_{k}$, we divide the time range $\mathcal{S}$ by the time slots
$t^j_k-\tau^j_k$, $j=1,\cdots,v-1$. Since $t^{j'}_k-\tau^{j'}_k$ may equal to $t^j_k-\tau^j_k$ when $j'\neq j$, $\mathcal{S}$ is divided into $v'$ segments from left to right, and $1 \leq v'\leq v$.
%

To investigate the noise terms within the first $v'-1$ segments of $\mathcal{S}$, we assume that
sensor's successful transmissions occurred in the time ranges $\left[t^{j'+1}_k+1,t^{j'}_k\right]$ and $\left[t^{j+1}_k+1,t^j_k\right]$ and there is no sensor's successful transmission in the gap between them, where
$v \geq j'>j\geq 1$. Thus, $\eta^{j'}_k>\tau^{j'}_k$  and $\eta^{j}_k>\tau^{j}_k$.
When $j'=j+1$, we have $t^{j'}_k-\tau^{j'}_k=t^{j+1}_k-\tau^{j+1}_k=t^{j}_k-\eta^j_k-\tau^{j+1}_k$ and only $\mathbf{w}'$ and the estimation-error term $\mathbf{e}_{t^{j'}_k}$ contains the noise terms within the time segment $\left[ t^{j'}_k-\tau^{j'}_k,t^j_k-\tau^j_k -1\right]= \left[ t^j_k-(\eta^j_k+\tau^{j+1}_k),t^j_k-\tau^j_k -1\right]$, therefore, the noise terms in this segment have exactly the same expressions as in \eqref{state_expression_final} of case 1), i.e., 
\begin{equation} \label{same_terms}
(\mathbf{A}+\mathbf{B}\mathbf{K})^{\eta_k^0+\cdots+\eta_k^{j-1}}\sum_{i=\tau_k^{j}+1}^{\eta_k^{j}+\tau_k^{j+1}}\mathbf{A}^{i-1}\mathbf{w}_{t^{j}_k-i}.
\end{equation}
When $j'>j+1$, $\mathbf{w}'$ and  the estimation-error terms $\mathbf{e}_{t^{j+1}_k},\mathbf{e}_{t^{j+2}_k},\cdots,\mathbf{e}_{t^{j'}_k}$ contains the noise terms within the time segment $\left[ t^{j'}_k-\tau^{j'}_k,t^j_k-\tau^j_k-1\right]=\left[ t^j_k-(\eta^j_k+\tau^{j+1}_k),t^j_k-\tau^j_k -1\right]$.
After combining the noise terms in this range, we also have the expression \eqref{same_terms}.

To investigate the noise terms of the $v'$th (last) segment of $\mathcal{S}$, we assume that the most recently successful sensor's transmission before $t^1_k$ is within the range of $\left[t^{j+1}_k+1,t^j_k \right]$, where $j\in \{1,\cdots,v\}$.
We see that $\mathbf{w}'$ and the estimation-error terms $\mathbf{e}_{t^{1}_k},\cdots,\mathbf{e}_{t^{j}_k}$ contains the noise terms within the time range $\left[ t^{j}_k-\tau^{j}_k,k-1\right]=\left[ k-(\eta^0_k+\tau^{1}_k),k-1\right]$. After combining the noise terms in this range contributed by $\mathbf{e}_{t^{1}_k},\cdots,\mathbf{e}_{t^{j}_k}$ and $\mathbf{w}'$, we have exactly the same expressions as in \eqref{state_expression_final} of case 1), i.e.,
\begin{equation} \label{last segment}
\sum_{i=1}^{\eta_k^0+\tau_k^1}\mathbf{A}^{i-1}\mathbf{w}_{k-i}.
\end{equation}

To sum up, different from \eqref{state_expression_final} of case 1), $\mathbf{x}_{k}$ of case 2) has $v'$ segment summations, i.e.,
\begin{equation} \label{state_expression_final3}
\begin{aligned}
&\mathbf{x}_{k} 
\!\!=\!\!\! \sum_{i=1}^{\eta_k^0\!+\!\tau_k^1}\!\!\mathbf{A}^{i-1}\mathbf{w}_{k-i} \!+\! \mathbbm{1}(\eta^1_k\!>\!\tau^1_k) (\mathbf{A}\!+\!\mathbf{B}\mathbf{K})^{\eta_k^0}\!\!\sum_{i=\tau_k^1+1}^{\eta_k^1+\tau_k^2}\!\!\mathbf{A}^{i-1}\mathbf{w}_{t^1_k-i} \\
&+ \cdots+ \\
& \mathbbm{1}(\eta^{v-1}_k\!>\!\tau^{v-1}_k)(\mathbf{A}+\mathbf{B}\mathbf{K})^{\eta_k^0+\cdots+\eta_k^{v-2}}\!\!\sum_{i=\tau_k^{v-1}\!+\!1}^{\eta_k^{v-1}\!+\!\tau_k^v}\!\!\!\!\mathbf{A}^{i-1}\mathbf{w}_{t^{v-1}_k-i},
\end{aligned}
\end{equation}
where $\mathbbm{1}(\cdot)$ is the indicator function and $\sum_{j=1}^{v-1}\mathbbm{1}(\eta^j_k>\tau^j_k)=v'-1$.

For case 3), the range $\mathcal{S}$ has only one segment, which is a special case of case 2) discussed above \eqref{last segment}, where $j=v$. Therefore, $\mathbf{x}_{k}$ has the expression of \eqref{last segment}.

Therefore, the general expression of $\mathbf{x}_k$ is given in \eqref{state_expression_final3}, and thus the state covariance $\mathbf{P}_{k} = \mathbb{E}[\mathbf{x}_{k}\mathbf{x}_{k}^\top]$ is obtained as \eqref{one-state_covariance0}.

\section*{Appendix B: Proof of Theorem~\ref{theorem:existence}}
We prove that the stationary and deterministic policy $\pi'$ in \eqref{suboptimal} stabilizes the plant.

It is easy to verify that the state-transition process induced by $\pi'$ is an ergodic Markov process, i.e., any state in $\mathbb{S}$ is aperiodic and positive recurrent. In the following, we prove that the average cost of the plant induced by $\pi'$ is bounded.

From \eqref{suboptimal}, \eqref{tau} and \eqref{phi}, we see that the policy $\pi'$ is actually~a \emph{persistent scheduling policy}, which consecutively schedules the uplink transmission until a transmission is successful and then consecutively schedules the downlink transmission until a transmission is successful, and so on. The transmission process of (s)ensor's measurement and (c)ontroller's command is illustrated~as 
\begin{equation}\label{process}
\{\cdots, \overbrace{\underbrace{\text{s} \cdots \text{s}}_{m'},\underbrace{\text{c} \cdots \text{c}}_{n'}}^{\text{control cycle } (t-1)}, \overbrace{\underbrace{\text{s} \cdots \text{s}}_{m},\underbrace{\text{c} \cdots \text{c}}_{n}}^{\text{control cycle } t},\cdots \}
\end{equation}
where $m$ and $n$ are the numbers of consecutively scheduled uplink and downlink transmission, respectively. 

For the ease of analysis, we define the concept of \emph{control cycle}, which consists of $M$ consecutive uplink transmissions and the following $N$ consecutive downlink transmissions. It is clear that $M$ and $N$ follow geometric distributions with success probabilities $(1-p_s)$ and $(1-p_c)$, respectively. The values of $M$ and $N$ change in different control cycles independently as illustrated in \eqref{process}. Thus, the uplink-downlink schedule process \eqref{process} can be treated as a sequence of control cycles.

Let $S$ and $L\triangleq M+N$ denote the sum cost of the plant and the number of transmissions in a control cycle, respectively. 
We can prove that $S$ and $L$ of the sequence of control cycles can be treated as ergodic Markov chains, i.e., $\{\cdots, S_t,S_{t+1},\cdots\}$ and  $\{\cdots, L_t,L_{t+1},\cdots\}$, where $t$ is the control-cycle index.
We use $N'$ to denote the number of consecutive downlink transmissions before the current control cycle, which follows the same distribution of $N$.
Due to the ergodicity of $\{S_t\}$ and $\{L_t\}$, the average cost in \eqref{metric} can be rewritten~as 
\begin{align}\label{J}
J =\lim\limits_{t \rightarrow \infty} \frac{S_1+S_2+\cdots+S_t}{L_1+L_2+\cdots+L_t}
= \frac{\mathbb{E}\left[S\right]}{\mathbb{E}\left[L\right]},
\end{align}
where 
\begin{align} \label{S}
\mathbb{E}\left[S\right] &= \sum_{n'=1}^{\infty}\sum_{m=1}^{\infty}\sum_{n=1}^{\infty}\mathbb{E}\left[S|N' = n', M = m, N = n\right]\\ \notag
&\mathbb{P}[N'=n',M=m,N=n],\\ \label{L}
\mathbb{E}\left[L\right] &= \sum_{n'=1}^{\infty}\sum_{m=1}^{\infty}\sum_{n=1}^{\infty}\left(m+n\right)\mathbb{P}[N'=n',M=m,N=n].
\end{align}

Thus, the average cost $J$ is bounded if $\mathbb{E}\left[S\right]$ is.
From the policy \eqref{suboptimal} and the state-transition rules in \eqref{tau} and \eqref{phi}, we see that $\phi$ is equal to $N'+1$ at the beginning of the control cycle, and increases one-by-one within the control cycle, and we have 
\begin{equation}
\mathbb{E}\left[S|N' = n', M = m, N = n\right] = \sum_{i=1}^{m+n} c(n'+i),
\end{equation}
and 
\begin{equation}
\begin{aligned}
&\mathbb{P}[N'=n',M=m,N=n]=\mathbb{P}[N'=n'] \mathbb{P}[M=m]\mathbb{P}[N=n] \\
&=(1-p_c)p^{n'-1}_c (1-p_s)p^{m-1}_s (1-p_c)p^{n-1}_c,
\end{aligned}
\end{equation}
as $N'$, $M$, $N$ are independent with each other.
Let $p_0 \triangleq \max\{p_s,p_c\}$. We have
\begin{align}
&\mathbb{E}\left[S\right] 
\leq  \kappa \sum_{n'=1}^{\infty}\sum_{m=1}^{\infty}\sum_{n=1}^{\infty}  \sum_{i=1}^{m+n} c(n'+i) p^{n'+m+n}_0\\
&<  \kappa \sum_{n'=1}^{\infty}\sum_{m=1}^{\infty}\sum_{n=1}^{\infty}  (n'+m+n) c(n'+m+n) p^{n'+m+n}_0\\ \label{partition}
&<  \kappa \sum_{i=1}^{\infty} i^4 c(i) p^{i}_0.
\end{align}
where $\kappa =(1-p_c)p^{-1}_c (1-p_s)p^{-1}_s (1-p_c)p^{-1}_c$, and \eqref{partition} is due to the fact that the number of possible partition of $(n'+m+n)$ into three parts is less than $(n'+m+n)^3$.
Since there always exists $p'_0>p_0$ and $n$ such that $i^4p^i_0<(p'_0)^i, \forall i>n$, $\sum_{i=1}^{\infty} i^4 c(i) p^{i}_0<\infty$ if $\sum_{i=1}^{\infty} c(i) (p'_0)^i <\infty$.
Using the result that 
$\sum_{j=1}^{\infty} (p'_0)^j c(j) < \infty$ iff $p'_0 \rho^2(\mathbf{A})<1$ in \cite{schenato2007foundations} and \cite{schenato2008optimal}, $\sum_{i=1}^{\infty} i^4 c(i) p^{i}_0<\infty$ if $p_0 \rho^2(\mathbf{A})<1$, completing the proof.

\section*{Appendix C: Proof of Theorem~\ref{theorem:naive}}
The necessity and sufficiency are proved as follows.

\subsection{Sufficiency}
Similar to the proof of Theorem~\ref{theorem:existence}, 
we need to define the control cycle of the naive policy and then calculate the average cost.

Different from the Proof of Theorem~\ref{theorem:existence}, the control cycle is defined as the time slots after a effective control cycle until the end of the following effective control cycle. Here, the effective control cycle is the sequence of time slots starting from a sensor's successful transmission and ending at a controller's successful transmission, where there is no successful transmissions in between. In other words, in an effective control cycle, the sensor's measurement at the beginning of the cycle will be utilized for generating a control command, which will be implemented on the plant by the end of the cycle.
The control cycle and the effective control cycle are illustrated~as
\begin{equation}\label{process2}
\{\cdots, \overbrace{\underbrace{\text{s} \check{\text{c}} \check{\text{s}} \text{c} \cdots \text{s} \text{c}}_{m'},\hspace{-0.3cm}\underbrace{\underbrace{\check{\text{s}} \text{c} \text{s} \text{c} \cdots \text{s} \check{\text{c}}}_{n'},}_{\hspace{-1cm}\text{effective control cycle } (t-1)}}^{\text{control cycle } (t-1)}
 \overbrace{\underbrace{\text{s} \text{c} \text{s} \check{\text{c}} \cdots \text{s} \text{c}}_{m},\hspace{-0.2cm}\underbrace{\underbrace{\check{\text{s}} \text{c} \text{s} \text{c} \cdots \text{s} \check{\text{c}}}_{n},}_{\hspace{-0.5cm}\text{effective control cycle } t}}^{\text{control cycle } t}\cdots \}
\end{equation}
where $n$ and $l=m+n$ are the number of time slots of an effective control cycle and a control cycle, respectively, and $\check{\text{s}}$ and $\check{\text{c}}$ denotes a successful sensor's transmission and controller's transmission, respectively.
Note that $m$ and $n$ are even numbers.

Similar to the proof of Theorem~\ref{theorem:existence}, 
$S$ and $L\triangleq M+N$ denote the sum cost of the plant and the number of transmissions in a control cycle, respectively. Also, $S$ and $L$ of the sequence of control cycles can be treated as ergodic Markov chains, i.e., $\{\cdots, S_t,S_{t+1},\cdots\}$ and  $\{\cdots, L_t,L_{t+1},\cdots\}$, where $t$ is the control-cycle index.
Due to the ergodicity of $\{S_t\}$ and $\{L_t\}$, the average cost in \eqref{metric} can be rewritten~as \eqref{J}, where
\begin{align}
\mathbb{E}\left[S\right] &= \sum_{\frac{n'}{2}=1}^{\infty}\sum_{\frac{m}{2}=0}^{\infty}\sum_{\frac{n}{2}=1}^{\infty}\mathbb{E}\left[S|N' = n', M=m,N=n\right]\\ \notag
&\mathbb{P}[N'=n', M=m,N=n],\\
\mathbb{E}\left[L\right] &= \sum_{\frac{n'}{2}=1}^{\infty}\sum_{\frac{m}{2}=0}^{\infty}\sum_{\frac{n}{2}=1}^{\infty} l \mathbb{P}[N'=n', M=m,N=n],
\end{align}
where $M+N$ and $N$ are the length of the current control cycle and effective control cycle, respectively, and $N'$ is the length of the previous effective control cycle. It is clear that $N'$ is independent with $M$ and $N$.

Thus, the average cost $J$ is bounded if $\mathbb{E}\left[S\right]$ is.
From the naive policy and the definition of the control cycle and the effective control cycle, we can derive the following probability density functions as
\begin{equation}
\begin{aligned}
&\mathbb{P}[M=m, N=n]\\
&=
\begin{cases}
(1-p_s) (1-p_c)p^{n/2-1}_c p^{n/2-1}_s,\ m=0,n=2,4,6\\
(1-p_s)\left(p^{m/2}_s + (1-p_s)\sum_{i=1}^{m/2}p^{i-1}_s p^{m/2+1-i}_c\right) \\
\hspace{1cm}\times(1-p_c)p^{n/2-1}_c p^{n/2-1}_s,\ m,n=2,4,6,\cdots
\end{cases}
\end{aligned}
\end{equation}
and thus
\begin{equation}
\begin{aligned}
\mathbb{P}[N'=n']
&=
\mathbb{P}[N=n']=\sum_{m=0}^{\infty}\mathbb{P}[M=m, N=n']\\
&=(1-p_s p_c)p^{\frac{n'}{2}-1}_c p^{\frac{n'}{2}-1}_s,\ n'=2,4,6,\cdots
\end{aligned}
\end{equation}
Then, it can be proved that 
\begin{equation}
\begin{aligned}
&\mathbb{P}[N'=n'] \leq \kappa_1 p^{n'}_0,\ \forall n'=2,4,6,\cdots\\
&\mathbb{P}[M=\!m, N=\!n] \!\leq\! \kappa_2 (1+m/2) p^{m/2}_0 p^{n}_0, \forall m\!=\!0,2,4,\cdots\!,\\
&\hspace{6.4cm}n\!=\!2,4,6,\!\cdots
\end{aligned}
\end{equation}
where $p_0 = \max \{p_s,p_c\}$, $\kappa_1= (1-p_s p_c)p_c^{-1} p_s^{-1}$, and $\kappa_2= (1-p_s)(1-p_c)p_c^{-1} p_s^{-1}$.


Since $
\mathbb{E}\left[S|N' = n', M = m, N = n\right]$ $= \sum_{i=1}^{m+n} c(n'+i)$, we have
\begin{align}\notag
&\mathbb{E}\left[S\right] = \sum_{\frac{n'}{2}=1}^{\infty}\sum_{\frac{m}{2}=0}^{\infty}\sum_{\frac{n}{2}=1}^{\infty}\mathbb{E}\left[S|N' = n', M=m,N=n\right]\\
&\hspace{3.2cm}
\mathbb{P}[N'=n']\mathbb{P}[M=m,N=n],\\
&\leq  \kappa_1 \kappa_2 \sum_{\frac{n'}{2}=1}^{\infty}\sum_{\frac{m}{2}=0}^{\infty}\sum_{\frac{n}{2}=1}^{\infty}  \sum_{i=1}^{m+n} c(n'+i) (1+\frac{m}{2}) p^{n'+\frac{m}{2}+n}_0\\
&<  \kappa_1 \kappa_2 \sum_{\frac{n'}{2}=1}^{\infty}\sum_{\frac{m}{2}=0}^{\infty}\sum_{\frac{n}{2}=1}^{\infty}  (n'+m+n)^2 c(n'+m+n) p^{\frac{n'+m+n}{2}}_0\\
&< 4 \kappa_1 \kappa_2 \sum_{i=2}^{\infty}  i^5 c(2 i) p^{i}_0.
\end{align}
Since there always exists $p'_0>p_0$ and $\bar{n}$ such that $i^5p^i_0<(p'_0)^i, \forall i>\bar{n}$, $\sum_{i=2}^{\infty} i^5 c(2i) p^{i}_0<\infty$ if $\sum_{i=2}^{\infty} c(2i) (p'_0)^i <\infty$.
Also, we have $\sum_{i=2}^{\infty} c(2i) (p'_0)^i < \sum_{i=1}^{\infty} c(i) \sqrt{p'_0}^i$.
Using the result that 
$\sum_{j=1}^{\infty} \sqrt{p'_0}^j c(j) < \infty$ iff $\sqrt{p'_0} \rho^2(\mathbf{A})<1$ in \cite{schenato2007foundations} and \cite{schenato2008optimal}, $\sum_{i=2}^{\infty} i^5 c(2i) p^{i}_0<\infty$ if $\sqrt{p_0} \rho^2(\mathbf{A})<1$, completing the proof of sufficiency.

\subsection{Necessity}
To prove the necessity, we consider two ideal cases: the sensor's transmission is perfect, i.e., $p_s=0$, and the controller's transmission is perfect, i.e., $p_c=0$. In these cases, the stability conditions are the necessary condition that the plant can be stabilized by the naive policy. The proof requires the analysis of average cost of control cycles which follows similar steps in the proof of sufficiency. 
Since the average cost $J = \frac{\mathbb{E}\left[S\right]}{\mathbb{E}\left[L\right]}$ and $\mathbb{E}\left[L\right]$ is bounded straightforwardly, we only need to prove the necessary condition that the average sum cost of a control cycle is bounded, i.e., $\mathbb{E}\left[S\right]$.

In the ideal cases, we have
\begin{equation}
\mathbb{E}\left[S\right] = 
\begin{cases}
(1-p_c) \sum_{j=1}^{\infty} \sum_{i=2}^{2j+1} c(i) p^{j-1}_c , & p_s=0\\
(1-p_s) \sum_{j=1}^{\infty} \sum_{i=2}^{2j+1} c(i) p^{j-1}_s , & p_c=0\\
\end{cases}
\end{equation}
Therefore, if $\mathbb{E}\left[S\right]$ is bounded, we have
\begin{equation}
\begin{aligned}
&\sum_{i=1}^{\infty}c(2i)p^i_c<\infty,
\sum_{i=1}^{\infty}c(2i+1)p^i_c<\infty\\
&\sum_{i=1}^{\infty}c(2i)p^i_s<\infty,
\sum_{i=1}^{\infty}c(2i+1)p^i_s<\infty\\
\end{aligned}
\end{equation}
and hence
\begin{equation}
\sum_{i=1}^{\infty}c(i) \sqrt{p_c}^i<\infty,\sum_{i=1}^{\infty}c(i) \sqrt{p_s}^i<\infty.
\end{equation}
Using the result that 
$\sum_{j=1}^{\infty} \sqrt{p_0}^j c(j) < \infty$ iff $\sqrt{p_0} \rho^2(\mathbf{A})<1$ in \cite{schenato2007foundations} and \cite{schenato2008optimal},
the necessary condition that the average cost inducted by the naive policy is bounded, is $\sqrt{p_s}\rho^2(\mathbf{A})<1$ and  $\sqrt{p_c}\rho^2(\mathbf{A})<1$, completing the proof of necessity.

\setcounter{subsection}{0}
\section*{Appendix D: Proof of Theorem~\ref{theorem:existence2}}
The necessity and sufficiency are proved as follows.

\subsection{Sufficiency}
We construct a persistent-scheduling-like policy including three phases: 1) the sensor's transmission is consecutively scheduled until it is successful, and then 2) the controller's transmission is consecutively scheduled until a successful transmission, and then 3) none of the sensor nor the controller is scheduled for transmission in the following $v-1$ time slots, i.e., all the commands contained in the successfully received control packet will be implemented by the actuator, and then phase 1) and so on.

Then, following the similar steps of the proof of Theorem~\ref{theorem:existence}, it can be proved that the persistent-scheduling-like policy stabilizes the plant if \eqref{condition} holds.

\subsection{Necessity}
The proof is conducted by considering two virtual cases: 
1) the sensor's transmission is continuously scheduled, while there is a virtual control input $\mathbf{u}_k$ at each time slot that ideally resets $\mathbf{x}_k$ to $\mathbf{0}$ if the sensor's transmission is successful at $k$, and is $\mathbf{0}$ otherwise;
2) the controller's transmission is continuously scheduled, while 
the controller applies a virtual estimator that has perfect estimation of the plant states in each time slots.

It can be readily proved that the two virtual cases result in lower average costs than any feasible uplink-downlink scheduling policy. Then, following the similar steps in the proof of Theorem~\ref{theorem:existence} and~\ref{theorem:naive}, it can be shown that if the average cost of case 1) is bounded, $p_s<1/\rho^2(\mathbf{A})$ must be satisfied, and if the average cost of case 2) is bounded, $p_c<1/\rho^2(\mathbf{A})$ must be satisfied, completing the proof of necessity.
    
    \balance
    
	\ifCLASSOPTIONcaptionsoff
	\newpage
	\fi



\begin{thebibliography}{10}
	\providecommand{\url}[1]{#1}
	\csname url@samestyle\endcsname
	\providecommand{\newblock}{\relax}
	\providecommand{\bibinfo}[2]{#2}
	\providecommand{\BIBentrySTDinterwordspacing}{\spaceskip=0pt\relax}
	\providecommand{\BIBentryALTinterwordstretchfactor}{4}
	\providecommand{\BIBentryALTinterwordspacing}{\spaceskip=\fontdimen2\font plus
		\BIBentryALTinterwordstretchfactor\fontdimen3\font minus
		\fontdimen4\font\relax}
	\providecommand{\BIBforeignlanguage}[2]{{%
			\expandafter\ifx\csname l@#1\endcsname\relax
			\typeout{** WARNING: IEEEtran.bst: No hyphenation pattern has been}%
			\typeout{** loaded for the language `#1'. Using the pattern for}%
			\typeout{** the default language instead.}%
			\else
			\language=\csname l@#1\endcsname
			\fi
			#2}}
	\providecommand{\BIBdecl}{\relax}
	\BIBdecl
	
	\bibitem{GC}
	K.~Huang, W.~Liu, Y.~Li, and B.~Vucetic, ``To sense or to control: Wireless
	networked control using a half-duplex controller for {IIoT},'' accepted by
	Proc. IEEE Globecom 2019.
	
	\bibitem{ParkSurvey}
	P.~Park, S.~C. Ergen, C.~Fischione, C.~Lu, and K.~H. Johansson, ``Wireless
	network design for control systems: A survey,'' \emph{IEEE Commun. Surveys
		Tuts.}, vol.~20, no.~2, pp. 978--1013, Second Quarter 2018.
	
	\bibitem{perera2015emerging}
	C.~Perera, C.~H. Liu, and S.~Jayawardena, ``The emerging {Internet of Things}
	marketplace from an industrial perspective: A survey,'' \emph{IEEE Trans.
		Emerg. Topics Comput.}, vol.~3, no.~4, pp. 585--598, Jan. 2015.
	
	\bibitem{wollschlaeger2017future}
	M.~Wollschlaeger, T.~Sauter, and J.~Jasperneite, ``The future of industrial
	communication: Automation networks in the era of the {Internet of Things and
		Industry} 4.0,'' \emph{IEEE Ind. Electron. Mag.}, vol.~11, no.~1, pp. 17--27,
	Mar. 2017.
	
	\bibitem{Schulz}
	P.~{Schulz}, M.~{Matthe}, H.~{Klessig}, M.~{Simsek}, G.~{Fettweis},
	J.~{Ansari}, S.~A. {Ashraf}, B.~{Almeroth}, J.~{Voigt}, I.~{Riedel},
	A.~{Puschmann}, A.~{Mitschele-Thiel}, M.~{Muller}, T.~{Elste}, and
	M.~{Windisch}, ``Latency critical {IoT} applications in {5G}: Perspective on
	the design of radio interface and network architecture,'' \emph{IEEE Commun.
		Mag.}, vol.~55, no.~2, pp. 70--78, Feb. 2017.
	
	\bibitem{bello2014intelligent}
	O.~Bello and S.~Zeadally, ``Intelligent device-to-device communication in the
	{Internet of Things},'' \emph{IEEE Syst. J.}, vol.~10, no.~3, pp. 1172--1182,
	Jan. 2014.
	
	\bibitem{zhang2017analysis}
	D.~Zhang, P.~Shi, Q.-G. Wang, and L.~Yu, ``Analysis and synthesis of networked
	control systems: A survey of recent advances and challenges,'' \emph{ISA
		Trans.}, vol.~66, pp. 376--392, Jan. 2017.
	
	\bibitem{shi2016edge}
	W.~Shi, J.~Cao, Q.~Zhang, Y.~Li, and L.~Xu, ``Edge computing: Vision and
	challenges,'' \emph{IEEE Internet Things J.}, vol.~3, no.~5, pp. 637--646,
	Jun. 2016.
	
	\bibitem{lyu2018control}
	L.~Lyu, C.~Chen, S.~Zhu, N.~Cheng, B.~Yang, and X.~Guan, ``Control performance
	aware cooperative transmission in multiloop wireless control systems for
	industrial {IoT} applications,'' \emph{IEEE Internet Things J.}, vol.~5,
	no.~5, pp. 3954--3966, Sep. 2018.
	
	\bibitem{gil2013dealing}
	P.~Gil, A.~Santos, and A.~Cardoso, ``Dealing with outliers in wireless sensor
	networks: an oil refinery application,'' \emph{IEEE Trans. Control Syst.
		Technol.}, vol.~22, no.~4, pp. 1589--1596, Nov. 2013.
	
	\bibitem{wang2014oldstyle}
	Y.~Wang, S.~X. Ding, D.~Xu, and B.~Shen, ``An h$\infty$ fault estimation scheme
	of wireless networked control systems for industrial real-time
	applications,'' \emph{IEEE Trans. Control Syst. Technol.}, vol.~22, no.~6,
	pp. 2073--2086, Apr. 2014.
	
	\bibitem{smartgrid}
	V.~{Liberatore} and A.~{Al-Hammouri}, ``Smart grid communication and
	co-simulation,'' in \emph{Proc. IEEE EnergyTech}, May 2011, pp. 1--5.
	
	\bibitem{transportation}
	R.~{Hult}, G.~R. {Campos}, E.~{Steinmetz}, L.~{Hammarstrand}, P.~{Falcone}, and
	H.~{Wymeersch}, ``Coordination of cooperative autonomous vehicles: Toward
	safer and more efficient road transportation,'' \emph{IEEE Signal Process.
		Mag.}, vol.~33, no.~6, pp. 74--84, Nov. 2016.
	
	\bibitem{schenato2008optimal}
	L.~Schenato, ``Optimal estimation in networked control systems subject to
	random delay and packet drop,'' \emph{IEEE Trans. Autom. Control}, vol.~53,
	no.~5, pp. 1311--1317, Jun. 2008.
	
	\bibitem{yang2015deterministic}
	C.~Yang, J.~Wu, X.~Ren, W.~Yang, H.~Shi, and L.~Shi, ``Deterministic sensor
	selection for centralized state estimation under limited communication
	resource,'' \emph{IEEE Trans. Signal Process.}, vol.~63, no.~9, pp.
	2336--2348, May 2015.
	
	\bibitem{liu2007networked}
	G.-P. Liu, Y.~Xia, J.~Chen, D.~Rees, and W.~Hu, ``Networked predictive control
	of systems with random network delays in both forward and feedback
	channels,'' \emph{IEEE Trans. Ind. Electron.}, vol.~54, no.~3, pp.
	1282--1297, Jun. 2007.
	
	\bibitem{demirel2017trade}
	B.~Demirel, V.~Gupta, D.~E. Quevedo, and M.~Johansson, ``On the trade-off
	between communication and control cost in event-triggered dead-beat
	control,'' \emph{IEEE Trans. Autom. Control}, vol.~62, no.~6, pp. 2973--2980,
	Jun. 2017.
	
	\bibitem{mishra2018stabilizing}
	P.~K. Mishra, D.~Chatterjee, and D.~E. Quevedo, ``Stabilizing stochastic
	predictive control under bernoulli dropouts,'' \emph{IEEE Trans. Autom.
		Control}, vol.~63, no.~6, pp. 1579--1590, Jun. 2018.
	
	\bibitem{schenato2007foundations}
	L.~Schenato, B.~Sinopoli, M.~Franceschetti, K.~Poolla, and S.~S. Sastry,
	``Foundations of control and estimation over lossy networks,'' \emph{Proc.
		IEEE}, vol.~95, no.~1, pp. 163--187, Jan. 2007.
	
	\bibitem{mag1}
	Z.~{Shen}, A.~{Khoryaev}, E.~{Eriksson}, and X.~{Pan}, ``Dynamic
	uplink-downlink configuration and interference management in {TD-LTE},''
	\emph{IEEE Commun. Mag.}, vol.~50, no.~11, pp. 51--59, Nov. 2012.
	
	\bibitem{mag2}
	F.~{Boccardi}, J.~{Andrews}, H.~{Elshaer}, M.~{Dohler}, S.~{Parkvall},
	P.~{Popovski}, and S.~{Singh}, ``Why to decouple the uplink and downlink in
	cellular networks and how to do it,'' \emph{IEEE Commun. Mag.}, vol.~54,
	no.~3, pp. 110--117, Mar. 2016.
	
	\bibitem{Yang}
	J.~{Yang} and S.~{Ulukus}, ``Optimal packet scheduling in an energy harvesting
	communication system,'' \emph{IEEE Trans. on Commun.}, vol.~60, no.~1, pp.
	220--230, Jan. 2012.
	
	\bibitem{inband}
	A.~{Sabharwal}, P.~{Schniter}, D.~{Guo}, D.~W. {Bliss}, S.~{Rangarajan}, and
	R.~{Wichman}, ``In-band full-duplex wireless: Challenges and opportunities,''
	\emph{IEEE J. Sel. Areas Commun.}, vol.~32, no.~9, pp. 1637--1652, Sep. 2014.
	
	\bibitem{Gatsis}
	K.~{Gatsis}, A.~{Ribeiro}, and G.~J. {Pappas}, ``Optimal power management in
	wireless control systems,'' \emph{IEEE Trans. Autom. Control}, vol.~59,
	no.~6, pp. 1495--1510, Jun. 2014.
	
	\bibitem{Parastoo}
	P.~{Sadeghi}, R.~A. {Kennedy}, P.~B. {Rapajic}, and R.~{Shams}, ``Finite-state
	{Markov} modeling of fading channels - a survey of principles and
	applications,'' \emph{IEEE Signal Process. Mag.}, vol.~25, no.~5, pp. 57--80,
	Sep. 2008.
	
	\bibitem{gupta2006receding}
	V.~Gupta, B.~Sinopoli, S.~Adlakha, A.~Goldsmith, and R.~Murray, ``Receding
	horizon networked control,'' in \emph{Proc. Allerton Conf. Commun., Control
		Comput}, 2006.
	
	\bibitem{GatsisOppor}
	K.~{Gatsis}, M.~{Pajic}, A.~{Ribeiro}, and G.~J. {Pappas}, ``Opportunistic
	control over shared wireless channels,'' \emph{IEEE Trans. Autom. Control},
	vol.~60, no.~12, pp. 3140--3155, Dec. 2015.
	
	\bibitem{OREILLY1981363}
	J.~O'Reilly, ``The discrete linear time invariant time-optimal control
	problem—an overview,'' \emph{Automatica}, vol.~17, no.~2, pp. 363 -- 370,
	1981.
	
	\bibitem{puterman2014markov}
	M.~L. Puterman, \emph{Markov decision processes: discrete stochastic dynamic
		programming}.\hskip 1em plus 0.5em minus 0.4em\relax John Wiley \& Sons,
	2014.
	
	\bibitem{huang2019retransmit}
	K.~Huang, W.~Liu, Y.~Li, and B.~Vucetic, ``To retransmit or not: Real-time
	remote estimation in wireless networked control,'' in \emph{Proc. IEEE ICC},
	2019.
	
	\bibitem{TWC}
	\BIBentryALTinterwordspacing
	K.~Huang, W.~Liu, M.~Shirvanimoghaddam, Y.~Li, and B.~Vucetic, ``Real-time
	remote estimation with hybrid {ARQ} in wireless networked control,''
	\emph{submitted to IEEE Trans. Wireless Commun.}, 2019. [Online]. Available:
	\url{https://arxiv.org/pdf/1903.12472.pdf}
	\BIBentrySTDinterwordspacing
	
	\bibitem{littman1995complexity}
	M.~L. Littman, T.~L. Dean, and L.~P. Kaelbling, ``On the complexity of solving
	{Markov} decision problems,'' in \emph{Proc. Association for Uncertainty in
		AI (AUAI)}, 1995, pp. 394--402.
	
	\bibitem{grain}
	\BIBentryALTinterwordspacing
	\emph{GRAIN STORAGE}, Grain research and development corporation, 2017.
	[Online]. Available: \url{https://grdc.com.au/}
	\BIBentrySTDinterwordspacing
	
	\bibitem{matlab}
	\BIBentryALTinterwordspacing
	\emph{Markov Decision Processes (MDP) Toolbox for MATLAB}, 2015. [Online].
	Available:
	\url{https://au.mathworks.com/matlabcentral/fileexchange/25786-markov-decision-processes-mdp-toolbox}
	\BIBentrySTDinterwordspacing
	
\end{thebibliography}
\end{document}